\def\BibTeX{{\rm B\kern-.05em{\sc i\kern-.025em b}\kern-.08em
    T\kern-.1667em\lower.7ex\hbox{E}\kern-.125emX}}
\def\moverlay{\mathpalette\mov@rlay}
\def\mov@rlay#1#2{\leavevmode\vtop{%
		\baselineskip\z@skip \lineskiplimit-\maxdimen
		\ialign{\hfil$\m@th#1##$\hfil\cr#2\crcr}}}
\newcommand{\charfusion}[3][\mathord]{
	#1{\ifx#1\mathop\vphantom{#2}\fi
		\mathpalette\mov@rlay{#2\cr#3}
	}
	\ifx#1\mathop\expandafter\displaylimits\fi}
\newcommand{\cupdot}{\charfusion[\mathbin]{\cup}{\cdot}}
\newcommand{\AP}{\mathit{AP}}
\newcommand{\Cause}{\mathsf{C}}
\newcommand{\Effect}{\mathsf{E}}
\newcommand{\In}{\mathit{I}}
\newcommand{\Out}{\mathit{O}}
\newcommand{\true}[0]{\mathit{true}}
\newcommand{\false}[0]{\mathit{false}}
\newcommand{\ldot}{\mathpunct{.}}
\newcommand{\U}{\LTLuntil}
\newcommand{\X}{\LTLnext}
\newcommand{\G}{\LTLglobally}
\newcommand{\F}{\LTLeventually}
\newcommand{\R}{\LTLrelease}
\newcommand{\traces}{\mathit{traces}}
\renewcommand{\models}{\vDash}
\newcommand{\nmodels}{\nvDash}
\newcommand{\donotshow}[1]{}
\def\TT{\mathcal{T}}
\def\OO{\mathcal{O}}
\def\KK{\mathcal{K}}
\def\NN{\mathbb{N}}
\def\RR{\mathbb{R}}
\def\lim{{\varprojlim}}
\def\->{\rightarrow}
\newtheorem{theorem}{Theorem}
\newtheorem*{theorem*}{Theorem}
\newtheorem{lemma}{Lemma}
\newtheorem{corollary}{Corollary}
\newtheorem{proposition}{Proposition}
\newtheorem{definition}{Definition}
\newtheorem{example}{Example}
\newtheorem{remark}{Remark}
\newcommand{\pref}{\mathit{pref}}
\begin{document}

\title{Closure and Complexity of Temporal Causality
%{\footnotesize \textsuperscript{*}Note: Sub-titles are not captured in Xplore and
%should not be used}
}

\author{\IEEEauthorblockN{Mishel Carelli, Bernd Finkbeiner, and Julian Siber}
\IEEEauthorblockA{
\textit{CISPA Helmholtz Center for Information Security}\\
Saarbrücken, Germany \\
\{mishel.carelli, finkbeiner, julian.siber\}@cispa.de}
}

\maketitle

\begin{abstract}
Temporal causality defines what property causes some observed temporal behavior (the effect) in a given computation, based on a counterfactual analysis of similar computations. In this paper, we study its closure properties and the complexity of computing causes. For the former, we establish that safety, reachability, and recurrence properties are all closed under causal inference: If the effect is from one of these property classes, then the cause for this effect is from the same class. We also show that persistence and obligation properties are not closed in this way. These results rest on a topological characterization of causes which makes them applicable to a wide range of similarity relations between computations. Finally, our complexity analysis establishes improved upper bounds for computing causes for safety, reachability, and recurrence properties. We also present the first lower bounds for all of the classes.
\end{abstract}

\begin{IEEEkeywords}
Automata, counterfactual reasoning, infinite words, temporal properties, topology
\end{IEEEkeywords}

\section{Introduction}
Temporal causality is a flavor of counterfactual reasoning that causally relates temporal properties of a given system computation. Given, for instance, the system computation $$\{\mathit{start}\}\{\mathit{request_1}\}\{\mathit{request_2}\}\{\mathit{failure}\}^\omega \enspace,$$
temporal causality can tell us whether the property $\LTLnext \mathit{request_1}$ or $\LTLnext \LTLnext\mathit{request_2}$ is the cause for the property $\LTLdiamond\LTLsquare \mathit{failure}$.
It generalizes the concept of actual causality~\cite{HalpernP01,Halpern16} to symbolic temporal properties and provides a logician's lens to study reasoning used in a plethora of applications such as explaining verification results~\cite{ChocklerHK08,BeerBCOT09,CoenenDFFHHMS22}, attribution of blame in multi-agent systems~\cite{Datta0KSS15} and explainable AI~\cite{MothilalMTS21,ChocklerH24}.

According to the theory~\cite{CoenenFFHMS22}, a causal relationship holds between two properties on a given computation, with respect to a given similarity relation, if both properties are satisfied by the computation, the most similar computations that do not satisfy the cause property do not satisfy the effect property either, and the cause property is the minimal set that qualifies for the previous two conditions. Finkbeiner et al.~\cite{FinkbeinerFMS24} have recently presented an intuitive order-theoretic reformulation of this: Causes are exactly the largest downward closed set of system computations that satisfy the effect, or -- more informally speaking -- they describe the set of computations most similar to the given observed computation that continuously satisfy the effect property. With this reformulation, they show that $\omega$-regular effects imply $\omega$-regular causes by giving an algorithm that synthesizes the cause property as a nondeterministic Büchi automaton from a system, computation and effect property with respect to a given (effectively $\omega$-regular) similarity relation. Besides showing that $\omega$-regular properties are in this way \emph{closed under causal inference}, this construction also gives an upper bound on the size of the causal automaton that is roughly exponential in the system size and doubly exponential in the effect size (see Table~\ref{fig:complexity_results} for the exact complexity). This blow-up mainly stems from Büchi complementation that is performed twice during the cause synthesis algorithm. In this work, we spin the theoretical aspects of this problem further and conduct a detailed investigation of property classes that are closed under temporal causality, as well as the complexity of constructing temporal causes as automata.

\subsection{Closure Under Temporal Causality}

While the closure of $\omega$-regular properties under causal inference is of high practical significance because it suggest a general algorithm for, e.g., constructing explanations of model-checking counterexamples, it also raises intriguing philosophical questions regarding the temporal structure of cause-effect relationships in reactive systems that interact with their environment over a possibly infinite duration. For instance, a fundamental temporal aspect of causal relationships between events is that the cause happens before the effect. A similar trait holds for temporal causality: If the effect is given as a temporal logic formula containing $n$ $\LTLnext$-operators (which is a way of describing a set of concrete events), then the cause can be described by a formula containing at most $n$ $\LTLnext$-operators~\cite{BeutnerFFS23}. Hence, the events described by the cause are guaranteed to happen earlier or at the same time as the events described by the effect. Besides demonstrating this temporal aspect of causal relationships between events, this also gives a complete cause-synthesis algorithm for this fragment based on enumeration~\cite{BeutnerFFS23}.

In this paper, we go beyond events as described by the fragment containing only $\LTLnext$-operators to general temporal properties. In this general setting, a comparable notion of temporal precedence does not exist, as general temporal properties may place conditions over a full infinite word: For instance, it is impossible to say that $\LTLsquare \LTLdiamond a$ happens before $\LTLsquare \LTLdiamond b$. Therefore, we study closure properties along the lines of Manna and Pneuli's hierarchy of temporal properties~\cite{MannaP89}. The hierarchy organizes temporal properties into classes based on language-theoretic considerations that have a tight connection to concepts in topology, temporal logic, and automata theory. We analyze for which classes membership of the effect property implies membership in the same class for the resulting cause property. Our results and previous ones are illustrated along a recapitulation of the hierarchy of $\omega$-regular properties in Figure~\ref{fig:hierarchy_results}. Note that we follow Manna and Pneuli in using LTL operators for the sake of illustration, but mean all $\omega$-regular properties in the respective classes. On the highest level the class of reactivity properties, which is equivalent to the class of $\omega$-regular properties~\cite{MannaP89}, is closed under causality as already shown by Finkbeiner et al.~\cite{FinkbeinerFMS24}. One level below, things are less clear-cut: While every recurrence effect indeed has a recurrence cause, we show that this is not the case for the class of persistence properties. On the lower levels, we have that safety and guarantee properties are both closed under causality, while obligation properties, which correspond to Boolean %CNF-like 
combinations of properties from these classes as indicated in the figure, are not closed in the same way. Interestingly, the unnamed class of properties that are both safety and guarantee properties corresponds exactly to the fragment containing only $\LTLnext$-operators for which Beutner et al.\ have shown closure under causality~\cite{BeutnerFFS23}. Notably, our results require only general assumptions on the similarity relation, such that they can accommodate different relations that may be desirable in different problem settings.

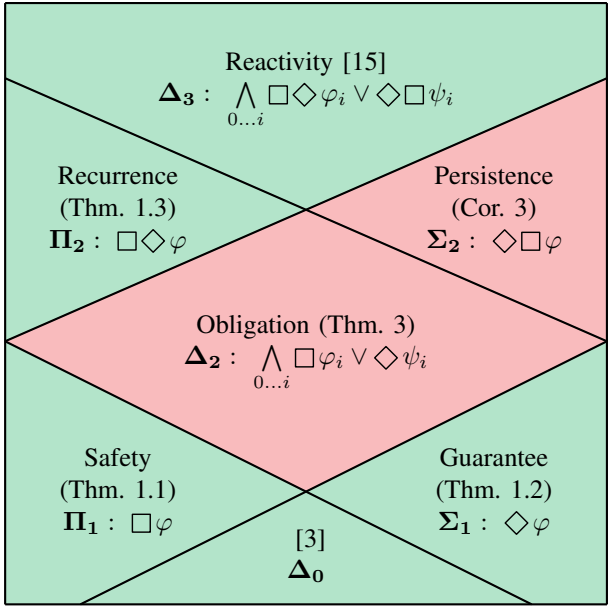
\begin{figure}
    \centering
    \normalsize
    \begin{tikzpicture}
        \draw [fill=Green, opacity=0.3,text centered]
       (0,0) -- (0,3.5) -- (4,1.5) -- (1,0) -- cycle;
        \draw [fill=Red, opacity=0.3]
       (4,5.25) -- (8,7) -- (8,3.5) -- cycle;
        \draw [fill=Green, opacity=0.3]
       (4,1.5) -- (8,3.5) -- (8,0) -- (7,0) -- cycle;
        \draw [fill=Green, opacity=0.3]
       (1,0) -- (4,1.5) -- (7,0) -- cycle;
        \draw [fill=Green, opacity=0.3]
       (0,3.5) -- (4,5.25) -- (0,7) -- cycle;
        \draw [fill=Green, opacity=0.3]
       (0,7) -- (0,8) -- (8,8) -- (8,7) -- (4,5.25) -- cycle;
        \draw [fill=Red, opacity=0.3]
       (0,3.5) -- (4,5.25) -- (8,3.5) -- (4,1.5) -- cycle;
        \draw[black,thick] (0,0) -- (8,0);
        \draw[black,thick] (8,0) -- (8,8);
        \draw[black,thick] (8,8) -- (0,8);
        \draw[black,thick] (0,8) -- (0,0);
        \draw[black,thick] (1,0) -- (8,3.5);
        \draw[black,thick] (0,3.5) -- (7,0);
        \draw[black,thick] (0,3.5) -- (8,7);
        \draw[black,thick] (0,7) -- (8,3.5);
        \node[align=center] at (6.5,5.25) {Persistence\\ (Cor.~\ref{nclosure_persistence})\\ $\boldsymbol{\Sigma_2}: \ \LTLdiamond \LTLsquare \varphi$ };
        \node[align=center] at (1.5,5.25) {Recurrence\\ (Thm.~\ref{hierarchy_preservation}.\ref{thm13})\\ $\boldsymbol{\Pi_2}: \  \LTLsquare\LTLdiamond \varphi$};
        \node[align=center] at (6.5,1.5) {Guarantee\\ (Thm.~\ref{hierarchy_preservation}.\ref{thm12})\\$\boldsymbol{\Sigma_1} : \ \LTLdiamond \varphi$ } ;
        \node[align=center] at (1.5,1.5) {Safety\\ (Thm.~\ref{hierarchy_preservation}.\ref{thm11})\\$\boldsymbol{\Pi_1} :\ \LTLsquare \varphi$};
        \node[align=center] at (4,3.35) {Obligation (Thm.~\ref{nclosure_obligation})\\$\boldsymbol{\Delta_2}: \ \bigwedge\limits_{0 \ldots i} \LTLsquare \varphi_i \lor \LTLdiamond \psi_i$};
        \node[align=center] at (4,6.85) {Reactivity~\cite{FinkbeinerFMS24}\\$\boldsymbol{\Delta_3} : \ \bigwedge\limits_{0 \ldots i} \LTLsquare\LTLdiamond \varphi_i \lor \LTLdiamond\LTLsquare \psi_i$};
        \node[align=center] at (4,0.65) {\cite{BeutnerFFS23} \\ $\boldsymbol{\Delta_0}$};
    \end{tikzpicture}
    \caption{Results on closure under causality of property classes in Manna and Pneuli's hierarchy of temporal properties~\cite{MannaP89}. Classes colored green are closed under causal inference, while classes colored in red are not. Note that $\varphi$ and $\psi$ are formulas containing no future operators.}
    \label{fig:hierarchy_results}
\end{figure}

\subsection{Complexity of Temporal Causality}

Our study of closure properties along the lines of the hierarchy of temporal properties suggest possible improvements for the synthesis of temporal causes from effects belonging to certain fragments. For instance, since we now know that a cause for a safety effect is itself a safety property, opting for a representation via bad prefixes promises a cheaper complementation operation than in the general case. Therefore, we conduct a detailed inquiry into the complexity of constructing temporal causes for effects from the varying classes. 

The main results of this inquiry are shown in Figure~\ref{fig:complexity_results}. The exact upper bound obtained from the cause synthesis algorithm of Finkbeiner et al.~\cite{FinkbeinerFMS24} is also listed in the table. As one of our main results, we can show that there is a family of problems where the cause automaton scales doubly exponential in the size of the effect. While there is still a logarithmic gap between our lower bounds and the upper bound of Finkbeiner et al.~\cite{FinkbeinerFMS24}, this demonstrates that the number of exponents of the upper bounds is already optimal. We have a tight bound for the size of the cause with respect to the size of the system, which is exponential with an additional logarithmic factor in the exponent. These results are particularly valuable because in practical instances such as explaining model-checking results, the system size tends to be much bigger than the effect size. 

We can also confirm our initial intuition regarding classes of effects for which cause automata can be synthesized more efficiently than with the general algorithm. As may be expected, this concerns the two classes on the lower level of the hierarchy: guarantee and safety properties. Since properties from these classes can effectively be described by finite word automata for good and bad prefixes, respectively, it is also possible to use the cheaper complementation operations for finite word automata in the general algorithm. We show that this approach results in an upper bound on the size of the cause automaton absent of logarithmic factors, and is mirrored by tight lower bounds in the size of the system and the effect. While the improvement in the upper bound is only by a logarithmic factor, this is still of high practical significance because the classes of guarantee and safety properties are ubiquitous in verification problems.

\subsection{Outline}

The paper is structured as follows. We first establish some basic preliminaries (Section~\ref{sec:preliminaries}). We then introduce background on temporal causality in Section~\ref{sec:temporal_causality}. Our contributions are then split into two main parts: In Section~\ref{closure_section} we establish our results on closure under causal inference using a topological argument, and in Section~\ref{sec:complexity} we prove lower and upper bounds on the size of causes. We discuss related results in Section~\ref{sec:rel_work} and end with a short summary and outlook on possible future applications of our results (Section~\ref{sec:summary}).

\begin{table*}
  \caption{Highlights of our analysis of upper and lower bounds on the size of cause automata synthesized from a system $\mathcal{T}$, computation $\pi$, a fixed similarity relation $\leq$ and an effect $E$, which belongs to a certain class as outlined in the first column.}\label{fig:complexity_results}
  \centering
        \bgroup
        \def\arraystretch{1.25}
        \setlength\tabcolsep{2em}
        \normalsize
		\begin{tabular}{cccc}
			\toprule
			\textbf{Effect Class} & \textbf{Lower bound in $\boldsymbol{E}$} & \textbf{Lower bound in $\boldsymbol{\mathcal{T}}$} & \textbf{Upper bound} \\
			\midrule
			Reactivity & $2^{2^{\Omega(|E|)}}$ (Thm.~\ref{lower_effect}) & $2^{\Omega(|\mathcal{T}| \cdot \log(\mathcal{T}|))}$ (Thm.~\ref{lower_system}.\ref{thm51})  & $|\pi| \cdot 2^{2^{\mathcal{O}(|E| \cdot \log(|E|))} \cdot |\mathcal{T}| \cdot \log(|\mathcal{T}|)}$~\cite{FinkbeinerFMS24} \\
			\midrule
			Persistence & $2^{2^{\Omega(|E|)}}$ (Thm.~\ref{lower_effect}) & $2^{\Omega(|\mathcal{T}| \cdot \log(\mathcal{T}|))}$ (Thm.~\ref{lower_system}.\ref{thm51})  & $|\pi| \cdot 2^{2^{\mathcal{O}(|E| \cdot \log(|E|))}\cdot |\mathcal{T}| \cdot \log(|\mathcal{T}|)}$~\cite{FinkbeinerFMS24} \\
			\midrule
			Recurrence & $2^{2^{\Omega(|E|)}}$ (Thm.~\ref{lower_effect}) & $\Omega(3^{|\mathcal{T}|})$ (Thm.~\ref{lower_system}.\ref{thm52}) & $|\pi| \cdot 3^{2^{\mathcal{O}(|E| \cdot \log(|E|))}\cdot |\mathcal{T}|}$ (Thm.~\ref{upper_NCW})\\
			\midrule
			Safety & $2^{2^{\Omega(|E|)}}$ (Thm.~\ref{lower_effect}) & $\Omega(2^{|\mathcal{T}|})$ (Thm.~\ref{lower_system}.\ref{thm53}) & $|\pi| \cdot 2^{2^{\mathcal{O}(|E|)}\cdot |\mathcal{T}|}$ (Thm.~\ref{upper_safety})\\
			\midrule
			Guarantee & $2^{2^{\Omega(|E|)}}$ (Thm.~\ref{lower_effect}) & $\Omega(2^{|\mathcal{T}|})$ (Thm.~\ref{lower_system}.\ref{thm53}) & $|\pi| \cdot 2^{2^{\mathcal{O}(|E|)}\cdot |\mathcal{T}|}$ (Thm.~\ref{upper_guarantee})\\
			\bottomrule
		\end{tabular}
        \egroup
\end{table*}

\section{Preliminaries}\label{sec:preliminaries}

We recall some general background on infinite words, temporal properties, automata, and temporal logic.

\subsection{Words and Properties}
We model computations by words over some \emph{alphabet} $\Sigma$. If not discussed explicitly, we assume $\Sigma$ to be finite. A \emph{word} $\pi = \pi_0 \pi_1 \ldots $ then is a sequence of letters $\pi_i \in \Sigma$ from the alphabet. For some word $\pi$, its prefix of length $n$ is denoted by $\pi_{(n)}$. The set of all prefixes of a given length is defined as $\pref(\pi) := \{ \pi_{(n)} \mid n\in \NN \}.$ We denote by $\Sigma^*$ the set of finite words and by $\Sigma^\omega$ the set of infinite words. $\Sigma^\infty = \Sigma^* \cup \Sigma^\omega$ is the set of both finite and infinite words. A \emph{language} $L$ is a subset of $\Sigma^\infty$. We denote by $\overline{L}$ the complement of a language $L$. We model \emph{properties} as finite or infinite word languages $L \subset \Sigma^*$ or  $L \subset \Sigma^\omega$, respectively. For $L\subseteq \Sigma^\omega$ the set of prefixes of words from $L$ of length $n$ is denoted as $\pref_n(L) := \{ \pi_{(n)}  \mid \pi \in L\}$ and the set of all prefixes of words in $L$ is denoted as:
$\pref(L) = \{ \pi_{(n)} \mid \pi\in L, n\in \NN\}$.

We now recall Manna and Pneuli's formal categorization of infinite word languages based on construction rules from finite world languages, which is illustrated in Figure~\ref{fig:hierarchy_results}.

\begin{definition}[Manna and Pneuli~\cite{MannaP89}]
    An infinite word language is a safety, guarantee, recurrence or persistence property, if the following holds:
    \begin{itemize}
        \item An infinite word language $L$ is a \textbf{safety property} if there exists a finite words language $\Phi$ such that $L$ consists of all infinite words $\pi$ such that every prefix of $\pi$ is in $\Phi$.

        \item An infinite word language $L$ is a \textbf{guarantee property} if there exists a finite words language $\Phi$ such that $L$ consists of all infinite words $\pi$ such that there exists a prefix of $\pi$ that is in $\Phi$.

        \item An infinite word language $L$ is a \textbf{recurrence property} if there exists a finite words language $\Phi$ such that $L$ consists of all infinite words $\pi$ such that infinitely many prefixes of $\pi$ are in $\Phi$.

        \item An infinite word language $L$ is a \textbf{persistence property} if there exists a finite words language $\Phi$ such that $L$ consists of all infinite words $\pi$ such that finitely many prefixes of $\pi$ are in $\Phi$.

        \item An infinite word language $L$ is an \textbf{obligation property} if it is a Boolean combination of safety and guarantee properties.

        \item An infinite word language $L$ is a \textbf{reactivity property} if it is a Boolean combination of recurrence and persistence properties.
    \end{itemize}
\end{definition}

Moreover, the obligation class is precisely the intersection of the recurrence and persistence classes, and the reactivity class contains all $\omega$-regular properties \cite{MannaP89}.

\subsection{Automata and Systems}
An automaton is a tuple $A = (Q,\Sigma, Q_0, F, \Delta)$, where $Q$ denotes a finite set of \emph{states}, $\Sigma$ is an alphabet, $Q_0 \subseteq Q$ is a set of \emph{initial states}, $F\subseteq Q$ is the set of \emph{accepting states}, and $\Delta: (Q \times \Sigma) \times Q$ is a \emph{transition relation} that maps a state and a letter to a set of possible successor states. For an automata $A$ the number of states of $A$ is denoted as $|A|$. An automaton $A = (Q,\Sigma, Q_0, F, \Delta)$ is \emph{deterministic} if the transition relation $\Delta$ is a function and $Q_0$ is a singleton, otherwise it is a \emph{nondeterministic} automaton. A \emph{run} of $A$ on a \emph{word} $\pi = \pi_0\pi_1 \ldots \in \Sigma^{\infty}$ is a sequence $ r = q_0q_1\ldots$ of states $q_i \in Q$ with $q_0 \in Q_0$ and $\big((q_i,w_i),q_{i+1}\big) \in \Delta$ for all $i$. \emph{Universal automata} are nondeterminstic automata that accept a word $\pi$ if all of the runs on $\pi$ fulfill the automatons acceptance condition, for all other automata only one run needs to fulfill this condition. In a \emph{finite word automaton} the acceptance condition for a run $ r = q_0q_1\ldots q_k$ is that $q_k \in F$. We use the shorthands DFW, NFW and UFW for determinstic, nondeterminstic and universal finite word automata, respectively, and analogous shorthands for all other automata types. In a \emph{Büchi word automaton} (DBW, etc.), an infinite run $ r = q_0q_1\ldots$ fulfills the acceptance condition if there exist infinitely many $i \in \mathbb{N}$ such that $q_i \in F$. Lastly, in a \emph{Co-Büchi word automaton} (DCW, etc.), an infinite run $ r = q_0q_1\ldots$ is accepting if all states appearing infinitely often are not in $F$, i.e., there is an $j \in \mathbb{N}$ such that all $i > j $ have $q_i \notin F$. The \emph{language} $L(A)$ of an automaton $A$ is the set of all words that have an accepting run. An infinite word language $L \subset \Sigma^\omega$ is $\omega$-regular, if it is recognized by a nondeterministic Büchi word automaton (NBW) $A$ such that we have $L(A) = L$. 

A \emph{systems} is a tuple $\mathcal{T} = (S, s_0, \AP, \delta, l)$ where 
$S$ is a finite set of \emph{states}, $s_0 \in S$ is the \emph{initial state}, $\AP = \In \cupdot \Out$ consists of \emph{inputs} $\In$ and \emph{outputs} $\Out$, $\delta: S \times 2^I \rightarrow 2^S$ is the \emph{transition function} describing the successor states for some state and input, and $l: S \rightarrow 2^\Out$ is the \emph{labeling function} labeling each state with a set of outputs. 
A \emph{trace} of $\mathcal{T}$ is an infinite sequence $\pi = \pi_0 \pi_1 \ldots \in (2^\AP)^\omega$, with $\pi_i = A \cup l(s_{i+1})$ for some $A \subseteq I$ and $s_{i+1} \in \delta(s_i,A)$ for all $i \geq 0$. Note that the label of the initial state is omitted in the first position. 
$\mathit{traces}(\mathcal{T})$ is the set of all traces of $\mathcal{T}$. A \emph{zipped trace} of the three traces $\pi^{0,1,2}$ is then defined as $\mathit{zip}(\pi^0,\pi^1,\pi^2)_i = \{a^k \, | \, a \in \pi_i^k\}$, i.e., we construct the zipped trace from disjoint unions of the positions of the three traces, where inputs and outputs from the traces $\pi^{0,1,2}$ are distinguished through superscripts. We also define projection and equivalence on traces: for $A,B \subseteq I \cup O$ and traces $\pi,\pi'$, let $A|_B = A \cap B$, $\pi|_B = \pi_0|_B\pi_1|_B\ldots$ and $\pi =_A \pi'$ iff $\pi|_A = \pi'|_A$.

\subsection{Linear-time Temporal Logic} We will use \emph{Linear-time Temporal Logic} (LTL)~\cite{Pnueli77} when we want to describe a property more conveniently than with automata in this paper (even though not every $\omega$-regular property can be expressed this way). The grammar for LTL formulas is as follows, where $a \in \Sigma$:
\begin{equation*}
\varphi \Coloneqq a \mid \neg \varphi \mid \varphi \land \varphi \mid \LTLnext \varphi \mid \varphi \LTLu \varphi \mid \LTLnext^- \varphi \mid \LTLuntil^- \varphi \enspace .
\end{equation*}
All temporal operators with a minus superscript are past operators~\cite{LichtensteinPZ85}, the others are future operators. The semantics of LTL are given as follows.
\begin{equation*}
\begin{array}{lll}
\pi,i \models a       & \text{iff } & a = \pi_i \\
\pi,i  \models \neg \varphi              & \text{iff } & \pi,i  \nmodels \varphi \\
\pi,i  \models \varphi \land \psi         & \text{iff } & \pi,i  \models \varphi \text{ and } \pi,i \models \psi \\
\pi,i \models \X \varphi                & \text{iff } & \pi,i+1 \models \varphi \\
\pi,i \models \LTLnext^- \varphi	&\text{iff }	& i > 0 \land \pi,i-1 \models \varphi\\
\pi,i \models \varphi\U\psi             & \text{iff } & \exists j \geq i \text{ such that } \pi,j \models \psi \text{ and }\\ & &\forall i \leq k < j \ldot \pi,k \models \varphi\\
\pi,i \models \varphi \LTLuntil^- \psi	& \text{iff } 	& \exists k \leq i \text{ such that } \pi,k \models \psi\text{ and }\\ & & \forall i \geq j > k: \pi,j \models \varphi \enspace.
\end{array}
\end{equation*}
A word $\pi$ \emph{satisfies} a formula $\varphi$, denoted by $\pi \models \varphi$  iff $\pi,0 \models \varphi$, i.e., the formula holds at the first position.
The \emph{language} $L(\varphi)$ of a formula $\varphi$ is the set of all traces that satisfy it. We also use the derived Boolean connectives ($\lor$, $\rightarrow$, $\leftrightarrow$) and temporal operators ($\varphi \R \psi \equiv \neg(\neg \varphi \U \neg \psi)$, $\F \varphi \equiv \true \U \varphi$, $\G \varphi \equiv \false \R \varphi$, $\LTLeventually^- \varphi \equiv \true \LTLuntil^- \varphi$, $\LTLglobally^- \varphi \equiv \lnot \LTLeventually^- \lnot \varphi$).

\section{Temporal Causality}\label{sec:temporal_causality}
We now present a comprehensive primer on previous work regarding temporal causality that is relevant to this paper. Temporal causality is concerned with analyzing the cause for some effect emerging on an observed computation of a reactive system. In particular, this observed computation can be infinite, such as obtained as a counterexample from model checking.
Informally speaking, the conditions for a causal relationship between two properties are as follows~\cite{CoenenFFHMS22}:

\begin{itemize}
    \item Both the cause property and the effect property are satisfied by the observed computation (\textbf{SAT} condition).
    \item The closest, i.e., most similar traces that do not satisfy the cause also do not satisfy the effect (\textbf{CF} condition).
    \item No subset of the cause property satisfies the previous two conditions (\textbf{MIN} condition).
\end{itemize}

More formally, the definition of such a causal relationship requires fixing a notion of closeness between system computations based on a similarity relation $\leq_\pi \, \subseteq \Sigma^\omega \times \Sigma^\omega$, which orders two traces $(\pi^1,\pi^2) \in \; \leq_\pi$ if $\pi^1$ is at least as similar to $\pi$ than $\pi^2$. Such a relation can be expressed by a (relational) temporal formula such that, for instance, $\pi^{\mathit{1}} \leq^{\mathit{subset}}_{\pi^0} \pi^{2}$ iff:
$$\mathit{zip}(\pi^0,\pi^{1},\pi^{2}) \models \G \bigwedge_{i\in I} \big((i^0 \not\leftrightarrow i^1) \rightarrow (i^0 \not\leftrightarrow i^2)\big) \enspace .$$

Note that while we allow similarity relations over the full alphabet, we are usually interested in similarity of the input sequences~\cite{CoenenFFHMS22}, such as defined by $\leq^{\mathit{subset}}$. With these similarity relations at hand, we can now recall the formal definition of temporal causality.

\begin{definition}[Temporal Cause~\cite{FinkbeinerFMS24}]\label{def:causality}
    Let $\mathcal{T}$ be a system, $\pi\in \mathit{traces}(\mathcal{T})$ a computation of the system, $\leq_\pi$ a similarity relation, and $E \subseteq (2^\AP)^\omega$ an effect property. We say that $C  \subseteq (2^{I})^\omega$ is a cause of $E$ on $\pi$ in $\mathcal{T}$ if the following holds.
	\begin{description}[leftmargin=!]
		\item[SAT:] $\forall \pi' \in \mathit{traces}(\mathcal{T}) : \pi' =_I \pi \rightarrow \pi'|_I \in C \land \pi' \in E$.
		\item[CF:] $\forall \pi' \in \overline{C} : \exists \pi'' \in \mathit{traces}(\mathcal{T}) : \pi'' \leq_\pi \pi' \land \pi'' \in \overline{E}$.
		\item[MIN:] $\nexists C' \subset C :$ $C'$ satisfies \emph{\textbf{SAT}} and \emph{\textbf{CF}}.
	\end{description}	
\end{definition}

There are two details that stand out over our earlier, informal definition: First, the \textbf{SAT} condition requires that all traces that are input-equivalent to the observed trace satisfy cause and effect, which ensures that there is no causal property in the case of nondeterminism on the observed trace. Second, the \textbf{CF} condition is realized through a $\forall\exists$-quantifier alternation because there are cases where $(\mathit{traces}(\mathcal{T}),\leq_\pi)$ is not well-founded, such that no ``closest" traces exist and the limit assumption for counterfactual reasoning is not met~\cite{Lewis73a,FinkbeinerFMS24}.

\begin{figure*}[t]
         \begin{subfigure}{.5\textwidth}
         \centering
            \begin{tikzpicture}[->,shorten >=1pt,thick,auto,node distance=2.75cm, on grid,initial text=,
			every state/.style={minimum size=20pt,inner sep=2pt}]
         \node[state,initial left](1){$\emptyset$};
         \node[state, right = of 1](2){$\emptyset$};
         \node[state, right = of 2](3){$\{o\}$};
         \node[state,above = of 2](4){$\emptyset$};
         \path[->,draw](1) edge[] node[swap]{$i$} (2)
         (2) edge[] node[swap]{$i$} (3)
         (2) edge[] node[pos=0.25]{$\lnot i$} (4);
         \path[->,draw](4) edge[] node[]{$i$} (3)
         (3) edge[loop below] node[]{$\top$} (3);
         \path[->,draw](1) edge[loop below] node[]{$\lnot i$} (1);
         \path[->,draw](4) edge[] node[swap]{$\lnot i$} (1);
         \end{tikzpicture}\hspace{2em}
         \caption{An example reactive system $\mathcal{T}$.}\label{fig:example}
     \end{subfigure}%
     \begin{subfigure}{.5\textwidth} 
        \centering
            \begin{tikzpicture}[>=latex,thick,label distance=-1pt]
                \draw[rounded corners=1mm,color=Goldenrod,fill=Goldenrod,line width=0pt,fill opacity=0.5,opacity=0.5] (5.72,0.28) -- (6.28,0.28)  -- (6.28,-3.78) -- (5.72,-3.78) -- cycle;
                
                \draw[rounded corners=1mm,color=MidnightBlue,fill=Goldenrod,line width=3pt,fill opacity=0.5] (-0.25,0.25) -- (3.0,0.25)  -- (3.0,-3.75) -- (-0.25,-3.75) -- cycle;
            
                \node[draw, circle, inner sep=2pt, fill=white] at (0,0)(s0) {};
                \node[draw, circle, inner sep=2pt, fill=white, below =1 of s0](a0) {};
                \node[draw, circle, inner sep=2pt, fill=white, below =1 of a0](b0) {};
                \node[circle, inner sep=2pt, below =1 of b0](c0) {};
                \draw[-stealth,decorate,decoration={snake,segment length=3.5pt,amplitude=0.75pt,pre length=1pt,post length=3pt}] (s0) -- (a0);
                \draw[-stealth,decorate,decoration={snake,segment length=3.5pt,amplitude=0.75pt,pre length=1pt,post length=3pt}] (a0) -- (b0);
                \draw[-stealth,decorate,decoration={snake,segment length=3.5pt,amplitude=0.75pt,pre length=1pt,post length=3pt}] (b0) -- (c0);
                \node[above = 0.25 of s0](p0){$\pi_a$};
                
                \node[draw, circle, inner sep=2pt, fill=white] at (1.5,0)(s1) {};
                \node[draw, circle, inner sep=2pt, fill=white, below =1 of s1](a1) {};
                \node[draw, circle, inner sep=2pt, fill=white, below =1 of a1](b1) {};
                \node[circle, inner sep=2pt, below =1 of b1](c1) {};
                \draw[-stealth,decorate,decoration={snake,segment length=3.5pt,amplitude=0.75pt,pre length=1pt,post length=3pt}] (s1) -- (a1);
                \draw[-stealth,decorate,decoration={snake,segment length=3.5pt,amplitude=0.75pt,pre length=1pt,post length=3pt}] (a1) -- (b1);
                \draw[-stealth,decorate,decoration={snake,segment length=3.5pt,amplitude=0.75pt,pre length=1pt,post length=3pt}] (b1) -- (c1);
                \node[above = 0.25 of s1](p1){$\pi_b$};
                
                \node[draw, circle, inner sep=2pt, fill=white] at (4.5,0)(s2) {};
                \node[draw, circle, inner sep=2pt, fill=white, below =1 of s2](a2) {};
                \node[draw, circle, inner sep=2pt, fill=white, below =1 of a2](b2) {};
                \node[circle, inner sep=2pt, below =1 of b2](c2) {};
                \draw[-stealth,decorate,decoration={snake,segment length=3.5pt,amplitude=0.75pt,pre length=1pt,post length=3pt}] (s2) -- (a2);
                \draw[-stealth,decorate,decoration={snake,segment length=3.5pt,amplitude=0.75pt,pre length=1pt,post length=3pt}] (a2) -- (b2);
                \draw[-stealth,decorate,decoration={snake,segment length=3.5pt,amplitude=0.75pt,pre length=1pt,post length=3pt}] (b2) -- (c2);
                \node[above = 0.25 of s2](p2){$\pi_c$};
                
                \node[draw, circle, inner sep=2pt, fill=white] at (6,0)(s3) {};
                \node[draw, circle, inner sep=2pt, fill=white, below =1 of s3](a3) {};
                \node[draw, circle, inner sep=2pt, fill=white, below =1 of a3](b3) {};
                \node[circle, inner sep=2pt, below =1 of b3](c3) {};
                \draw[-stealth,decorate,decoration={snake,segment length=3.5pt,amplitude=0.75pt,pre length=1pt,post length=3pt}] (s3) -- (a3);
                \draw[-stealth,decorate,decoration={snake,segment length=3.5pt,amplitude=0.75pt,pre length=1pt,post length=3pt}] (a3) -- (b3);
                \draw[-stealth,decorate,decoration={snake,segment length=3.5pt,amplitude=0.75pt,pre length=1pt,post length=3pt}] (b3) -- (c3);
                \node[above = 0.25 of s3](p3){$\pi_d$};
                
                \node[draw, circle, inner sep=2pt, fill=white] at (7.5,0)(s4) {};
                \node[draw, circle, inner sep=2pt, fill=white, below =1 of s4](a4) {};
                \node[draw, circle, inner sep=2pt, fill=white, below =1 of a4](b4) {};
                \node[circle, inner sep=2pt, fill=white, below =1 of b4](c4) {};
                \draw[-stealth,decorate,decoration={snake,segment length=3.5pt,amplitude=0.75pt,pre length=1pt,post length=3pt}] (s4) -- (a4);
                \draw[-stealth,decorate,decoration={snake,segment length=3.5pt,amplitude=0.75pt,pre length=1pt,post length=3pt}] (a4) -- (b4);
                \draw[-stealth,decorate,decoration={snake,segment length=3.5pt,amplitude=0.75pt,pre length=1pt,post length=3pt}] (b4) -- (c4);
                \node[above = 0.25 of s4](p4){$\pi_e$};

                \node[right = 0.05 of p0](l01){$\leq_{\pi_a}$};
                \node[right = 0.05 of p1](l12){$\leq_{\pi_a}$};
                \node[left = 0.8 of p2](l12d){$\cdots$};
                \node[left = 0.05 of p2](l12x){$\leq_{\pi_a}$};
                \node[right = 0.05 of p2](l23){$\leq_{\pi_a}$};
                \node[right = 0.05 of p3](l34){$\leq_{\pi_a}$};
                \node[below = 1.75 of l01](d0){$\ldots$};
                \node[below = 1.75 of l12](d1){$\ldots$};
                \node[below = 1.75 of l12x](d1x){$\ldots$};
                \node[below = 1.75 of l23](d2){$\ldots$};
                \node[below = 1.75 of l34](d3){$\ldots$};
            
            \end{tikzpicture}
         \caption{Cause $C$ as the complement of the upward closure of $\overline{E}$~\cite{FinkbeinerFMS24}.}\label{fig:closure}
     \end{subfigure} 
     \caption{Figure~\ref{fig:example} illustrated the reactive system with the input $i$ and output $o$ that is used to outline temporal causality in Example~\ref{ex:intro}. The system sets the output $o$ continuously whenever the input $i$ is enabled less than three time units apart. Figure~\ref{fig:closure} pictures a chain in $(\mathit{traces}(\mathcal{T}),\leq_{\pi_a})$ to illustrate that the cause $C$ (enclosed by the blue frame) on $\pi_a$ is the largest downward closed set of traces satisfying the effect $E$ (shaded yellow), which is the complement of the upward closure of $\overline{E}$.} 
\end{figure*}
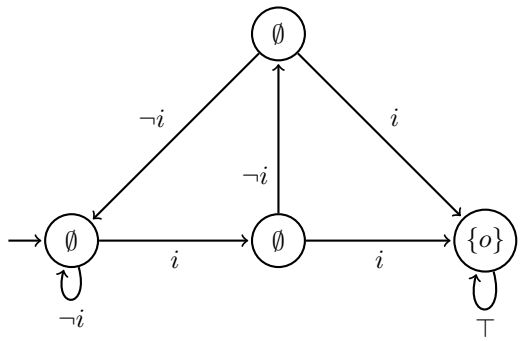
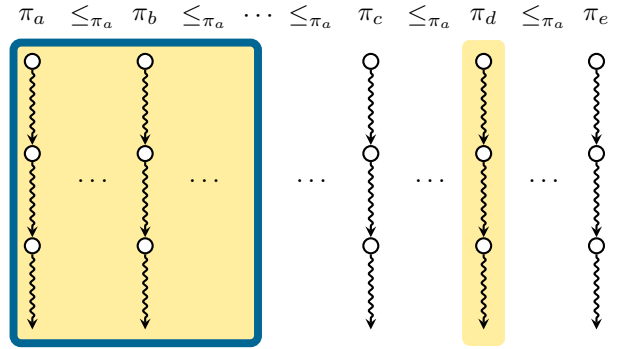

\begin{example}\label{ex:intro}
    Consider the reactive system $\mathcal{T}$ illustrated in Figure~\ref{fig:example}, the trace $\pi = \{i\}\{i,o\}\{o\}^\omega$, and the effect $E = \LTLglobally \LTLeventually o$, with similarity relation $\leq^{\mathit{subset}}$. The cause $C_\pi$ for $E$ on $\pi$ is characterized by the formula $C_\pi = L(i \land \LTLnext i)$. It is easy to see that the \textbf{SAT} condition is met, let us take a closer look at the other two. For \textbf{CF}, we can find for any $\pi' \notin C_\pi$ either $\{\}\{i\}\{\}^\omega$ or $\{i\}\{\}\{\}^\omega$ as a $\pi'' \notin E$, in particular for traces such as $\pi''' = \{i\}\{\}\{i\}\{i,o\}^\omega$ that satisfy $E$ but are less similar to $\pi$, e.g., $\{i\}\{\}\{\}^\omega \leq_\pi^{\mathit{subset}} \pi'''$. For \textbf{MIN}, we can see that restricting $C_\pi$ further in any way that satisfies \textbf{SAT} leads to violation of the \textbf{CF} condition: For instance, if we had $C_\pi' = L( i\land \LTLnext i \land \lnot \LTLnext \LTLnext i)$, then there would be the trace $\{i\}\{i,o\}\{i,o\}\{o\}^\omega \notin C_\pi'$ for which no at least as similar trace exists that does not satisfy the effect. It gets more complex when we have a trace such as $\sigma = \{i\}\{i,o\}^\omega$ where no finite number of inputs is responsible for obtaining the effect $E$. We can perform a similar analysis as above to show that $C_\sigma = L\big(\LTLdiamond (i \land \LTLnext  (i \lor \LTLnext i))\big)$ is the cause for $E$ on $\sigma$.
\end{example}

While Definition~\ref{def:causality} with its three conditions closely mirrors Halpern and Pearl's definition of actual causality~\cite{HalpernP01} by which it is inspired, Finkbeiner at al.~\cite{FinkbeinerFMS24} have shown recently that causes can be characterized much more succinctly. This is because they directly correspond to the largest downward closed set of traces in $(\mathit{traces}(\mathcal{T}),\leq_\pi)$ that satisfy the effect $E$. Vice versa, they are also the complement of the upward closure of $\overline{E}$. This stems from the balance of the \textbf{CF} and \textbf{MIN} conditions: If there was a trace in the cause that does not satisfy the effect, its upward closure could be removed -- including other traces that do in fact satisfy the effect. As pictured in Figure~\ref{fig:closure}, this results in a cause (framed by the blue border) that, in essence, describes the local continuous neighborhood of traces that are similar to the observed trace $\pi_a$ and satisfy the effect $E$ colored by the yellow area. More formally, this is captured by the following lemma.

\begin{lemma}[Finkbeiner et al.~\cite{FinkbeinerFMS24}]\label{lem:cause}
    Let $\mathcal{T}$ be a system, $\pi\in \mathit{traces}(\mathcal{T})$
    a computation of the system, $\leq_\pi$ a similarity relation, and $E \subseteq \Sigma^\omega$ an 
    effect property. If there is a cause $C$ of $E$ on $\pi$ in $\mathcal{T}$ then it is the largest downward closed set of system computations, i.e., $$C = \{\,\rho \in (2^I)^\omega \,  \mid \, \forall \sigma \in \mathit{traces}(\mathcal{T}) \ \sigma \leq_\pi \rho \rightarrow \sigma \in E \,\} \enspace.$$
    Moreover, the above set is empty iff there is no cause
\end{lemma}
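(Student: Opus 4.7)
The plan is to prove the set equality $C = D$, where $D$ denotes the right-hand side $\{\rho \in (2^I)^\omega \mid \forall \sigma \in \traces(\mathcal{T}) : \sigma \leq_\pi \rho \rightarrow \sigma \in E\}$, by mutual inclusion, and then to deduce the ``moreover'' clause from this equality together with the \textbf{SAT} condition.

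For the inclusion $D \subseteq C$, the argument is a contrapositive reading of \textbf{CF}. If $\rho \in D$, every $\sigma \in \traces(\mathcal{T})$ below $\rho$ under $\leq_\pi$ lies in $E$, which rules out the counterfactual witness that \textbf{CF} demands for any element of $\overline{C}$; hence $\rho \in C$.

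The reverse inclusion $C \subseteq D$ is the more delicate part, and I would obtain it by verifying that $D$ itself already satisfies both \textbf{SAT} and \textbf{CF}, then invoking \textbf{MIN}: combined with $D \subseteq C$ from the previous step, minimality of $C$ forces $D = C$. The \textbf{CF} condition for $D$ is immediate from negating the defining universal quantifier: if $\rho \notin D$ then the negation directly produces a system trace $\sigma \leq_\pi \rho$ outside $E$. For \textbf{SAT}, the clause $\pi' \in E$ for every $\pi' =_I \pi$ is inherited from \textbf{SAT} applied to $C$; the subtler clause $\pi'|_I \in D$ is where a mild structural assumption on the similarity relation enters, namely that the observed trace $\pi$ and its input-equivalents are minimal in $\leq_\pi$. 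Under this assumption, every $\sigma \in \traces(\mathcal{T})$ with $\sigma \leq_\pi \pi'|_I$ is itself input-equivalent to $\pi$ and therefore lies in $E$ by \textbf{SAT} for $C$.

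Finally, for the ``moreover'' claim, if $D = \emptyset$ then no candidate cause can both equal $D$ and contain $\pi|_I$, so \textbf{SAT} fails and no cause exists; conversely, if a cause $C$ exists, then $C = D$ together with $\pi|_I \in C$ from \textbf{SAT} yields $D \neq \emptyset$. The step I expect to be the main obstacle is the \textbf{SAT} verification for $D$, since this is the unique place where structural properties of $\leq_\pi$ beyond transitivity are used; the rest of the argument is purely set-theoretic manipulation of the three defining conditions, and once \textbf{SAT} is in hand, \textbf{MIN} pins down $C$ uniquely as the set $D$.
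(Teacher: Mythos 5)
The paper itself does not prove this lemma---it is imported verbatim from Finkbeiner et al.~\cite{FinkbeinerFMS24}, with only an informal gloss about the ``balance of \textbf{CF} and \textbf{MIN}''---so there is no in-paper proof to match against. Your argument for the set equality is essentially the natural one and is correct: writing $D$ for the right-hand side, the inclusion $D \subseteq C$ follows from \textbf{CF} by contraposition, and $C \subseteq D$ follows because $D$ satisfies \textbf{SAT} and \textbf{CF}, so \textbf{MIN} forbids $D \subsetneq C$. You are also right to flag that the \textbf{SAT} check for $D$ (specifically $\pi|_I \in D$) is not free for an arbitrary similarity relation: one needs that the only system traces $\leq_\pi$-below $\pi|_I$ are the input-equivalents of $\pi$, a property that $\leq^{\mathit{subset}}$ has and that the lemma tacitly inherits from the cited work. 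Identifying that as the load-bearing assumption is a genuine insight rather than a defect.

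The genuine gap is in the ``moreover'' clause. You prove ``$D = \emptyset \Rightarrow$ no cause'' and ``a cause exists $\Rightarrow D \neq \emptyset$'', but these are contrapositives of one another, so together they establish only one direction of the stated equivalence. The missing direction is ``$D \neq \emptyset \Rightarrow$ a cause exists'', and it cannot be dispatched by the same bookkeeping: since no cause is assumed to exist, you must show directly that the nonempty set $D$ \emph{is} a cause. That means (i) establishing \textbf{SAT} for $D$ without borrowing $\pi' \in E$ from \textbf{SAT} for $C$---here one needs that $\pi$ and its input-equivalents lie $\leq_\pi$-below every $\rho$, so that $\rho \in D$ already forces all input-equivalents of $\pi$ into $E$ and hence $\pi|_I \in D$; and (ii) verifying \textbf{MIN} for $D$, e.g.\ by noting that any proper subset $D' \subsetneq D$ would need a \textbf{CF}-witness $\sigma \leq_\pi \rho$ with $\sigma \notin E$ for some $\rho \in D \setminus D'$, contradicting $\rho \in D$. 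Neither step appears in your proposal, and (i) in particular uses a strictly stronger assumption on $\leq_\pi$ than the minimality property you invoke elsewhere.
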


Notably, Lemma~\ref{lem:cause} also means a cause is unique if it exists. In the following, we use this convenient characterization for a detailed study of the closure of property classes under causal inference and of complexity bounds on the size of automata representations for causes.

\section{Closure Under Causal Inference} \label{closure_section}

In this section, we investigate the closure under causal inference of classes in the temporal hierarchy. As can be seen in Lemma~\ref{lem:cause}, causes can be directly defined via universal quantification over the system traces to express that they are downward closed in $(\mathit{traces}(\mathcal{T}),\leq_\pi)$ and all satisfy the effect $E$. Speaking more abstractly, we are hence interested in whether properties $X$ for which membership of some word is decided based on whether \emph{all} associated words satisfy some other property $Y$ inherit the hierarchy property classes from $Y$. We introduce a more convenient yet more general concept to facilitate this abstract analysis: the \emph{universal preimage}. This concept essentially formalizes set membership based on universal quantification and makes our results applicable beyond closure under causal inference, i.e., to quantifier elimination in automata-based model checking of hyperproperties~\cite{ClarksonFKMRS14,FinkbeinerRS15} and synthesis from partial information~\cite{Kupferman2000}.

\begin{definition}
    Let $X$ and $Y$ be sets, and let $f: X \to \mathcal{P}(Y)$ be a function. The \textbf{universal preimage} of $S$ under $f$, denoted $f^{-1}_U(S)$, is the set of elements whose images under $f$ are subsets of $S$: $f^{-1}_U(S) := \{x\in X \mid f(x) \subseteq S \}$.
\end{definition}
Informally, for an element $x\in X$ the function $f$ determines the set of values of $Y$ over which our universal quantifier ranges (e.g., all traces at least as similar as $\rho$ in Lemma~\ref{lem:cause}). If all of them are included in the set $S$, then $x$ is included in the universal preimage $f^{-1}_U(S)$.

Causality can be defined via a universal preimage: The cause of $E$ on $\pi$ in $\TT$ with similarity relation $\leq$ is the universal preimage of the effect under the map that sends a trace $\pi''$ to the set of at least as similar traces $\pi'\in \TT$, such that $\pi'\leq_{\pi} \pi''$. 

The choice of similarity relation heavily influences the closure of temporal hierarchy classes under causality. Different similarity relations produce different functions $f$. Instead of proving closure under causal inference for one specific similarity relation, we identify general assumptions on the relation, and hence the function $f$, that facilitate our closure results, such that they can be transferred to a variety of use cases and similarity relations.

The primary goal for the rest of this section is to formulate and prove the closure of temporal classes under the universal preimage. Section~\ref{linguistic_closure} introduces the \emph{Universal Closure Theorem} and provides a linguistic perspective on the main definitions. Section~\ref{topological_closure} then provides a more abstract topological interpretation of the theorem and proves it using results from general topology. The last subsection (Section~\ref{sec:non-closure}) presents the concept of the existential projection and discusses the parts of the temporal hierarchy which are not closed.

\subsection{Linguistic Formulation} \label{linguistic_closure}

In general, temporal classes are not be closed under the universal preimage for an arbitrary $f: \Sigma_1^\omega \to \mathcal{P}(\Sigma_2^\omega)$. To ensure closure, we impose certain restrictions on $f$. 

Informally, to show that reachability is closed under $f_U^{-1}$, we need to show that if for some $\pi''$ every trace $\pi'\in f(\pi'')$ has a good prefix, then $\pi''$ must also have a good prefix. 

The first property of $f$, which we need, is that we can build a prefix tree of $f(\pi'')$ while reading $\pi''$. It could be the case, for example, that the second level of the tree (the set of prefixes of $f(\pi'')$ of the length $2$) cannot be determined unless the whole word $\pi''$ is observed.  If such a case, the reachability class might not be closed under the universal preimage, since we do not know if every trace from $f(\pi'')$ has a good prefix unless we observe the whole $\pi''$.

\begin{definition}
    A function $f: \Sigma_1^\omega \to \mathcal{P}(\Sigma_2^\omega)$ is called \textbf{prefix-continuous} if for every $n\in \NN$ and $\pi \in \Sigma_1^\omega$ there exists $m\in \NN$ such that, for every $\pi' \in \Sigma_1^\omega$ if $\pi'_{(m)} = \pi_{(m)}$, then
    $$\pref_n(f(\pi)) = \pref_n(f(\pi')) \enspace .$$
\end{definition}

Informally, this ensures that the prefix tree of $f(\pi'')$ can be constructed incrementally.

The second property that we need is that every infinite trace in the prefix tree corresponds to some trace in $f(\pi'')$. 

\begin{definition}
     A function $f: \Sigma_1^\omega \to \mathcal{P}(\Sigma_2^\omega)$ is called \textbf{prefix-closed} if for every $\pi \in \Sigma_1^\omega$ and $\pi' \in \Sigma_2^\omega$ :
     $$\pref(\pi') \subseteq \pref(f(\pi)) \Rightarrow \pi' \in f(\pi) \enspace .$$
\end{definition}

We already have everything that we need for the case of a finite alphabet $\Sigma_2$, but if it is infinite then we set one additional restriction. The last property that we require from $f$ is the finite branching of the prefix tree. Clearly, this is trivially satisfied for a finite alphabet.

\begin{definition}
    A function $f: \Sigma_1^\omega \to \mathcal{P}(\Sigma_2^\omega)$ is called \textbf{prefix-compact} if for every $n\in \NN$ and $\pi \in \Sigma_1^\omega$ the set $\pref_n(f(\pi))$ is finite.
\end{definition}

\begin{remark}
    If the alphabet $\Sigma_2$ is finite, then any function $f:  \Sigma_1^\omega \to \mathcal{P}(\Sigma_2^\omega)$ is prefix-compact. 
\end{remark}

Now everything is ready to formulate the main Theorem of this section, which formally states that several classes from the temporal hierarchy are closed under the universal preimage operation, for functions that satisfy the previously introduced requirements.

\begin{theorem} [Universal Closure]\label{hierarchy_preservation}
    Let $\Sigma_1$ and $\Sigma_2$ be alphabets. Assume a function $f:\Sigma_1^\omega \to \mathcal{P}(\Sigma_2^\omega)$ is \textbf{prefix-continuous, prefix-closed}, and \textbf{prefix-compact}, then:
    \begin{enumerate}
    
        \item\label{thm11} If $L$ is a safety property, then $f_U^{-1}(L)$ is a safety property.

        \item\label{thm12} If $L$ is a guarantee property, then $f_U^{-1}(L)$ is a guarantee property.

        \item\label{thm13} If $L$ is a recurrence property, then $f_U^{-1}(L)$ is a recurrence property.
    \end{enumerate}
    
\end{theorem}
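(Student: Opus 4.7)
The plan is to treat the three cases via a common construction. Prefix-continuity lets me define, for each $n \in \NN$ and $u \in \Sigma_1^*$, a set $P_n(u) \subseteq \Sigma_2^n$ that equals $\pref_n(f(\pi))$ whenever $\pi$ extends $u$ and all such extensions agree on this value (which, by the prefix-continuity condition, happens once $|u|$ is large enough); otherwise $P_n(u)$ is undefined. A short monotonicity check shows that if $P_n(u)$ is defined then so is $P_n(u')$ for every extension $u'$ of $u$, with the same value, and for each $\pi$ and $n$ some prefix $\pi_{(m)}$ eventually makes $P_n$ defined and equal to $\pref_n(f(\pi))$. The witness set $\Phi'$ for $f_U^{-1}(L)$ will in each case be built from the sets $P_n(u)$ relative to the original witness set $\Phi$ for $L$.

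For case~\ref{thm11}, safety, I would set $\Phi'$ to be the set of $u \in \Sigma_1^*$ such that $P_n(u) \subseteq \Phi$ for every $n$ for which $P_n(u)$ is defined. The forward direction of $\pi \in f_U^{-1}(L) \Leftrightarrow \pref(\pi) \subseteq \Phi'$ is immediate since $\pref_n(f(\pi)) \subseteq \Phi$ for all $n$, and the backward direction uses prefix-continuity: for each $n$, some prefix $\pi_{(m)}$ lies in $\Phi'$ and satisfies $\pref_n(f(\pi)) = P_n(\pi_{(m)}) \subseteq \Phi$, whence every $\sigma \in f(\pi)$ has all prefixes in $\Phi$ and so lies in $L$. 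Notably, safety does not require prefix-closedness or prefix-compactness.

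For case~\ref{thm12}, guarantee, the crucial new ingredient is a K\"onig-style argument combining prefix-compactness (the prefix tree of $f(\pi)$ is finitely branching) with prefix-closedness (every infinite branch of this tree lies in $f(\pi)$): if every $\sigma \in f(\pi)$ has a prefix in $\Phi$, then a uniform depth $N$ suffices so that every $\tau \in \pref_N(f(\pi))$ already has a prefix in $\Phi$. Otherwise the subtree of prefixes with no ancestor in $\Phi$ would be infinite and finitely branching, yielding via K\"onig an infinite branch whose membership in $f(\pi)$ by prefix-closedness contradicts the assumption. I would then take $\Phi'$ to contain those $u$ for which some $n$ has $P_n(u)$ defined with every $\tau \in P_n(u)$ containing a prefix in $\Phi$; the equivalence of this to ``some prefix of $\pi$ lies in $\Phi'$'' and to $\pi \in f_U^{-1}(L)$ then follows from the K\"onig argument combined with prefix-continuity.

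For case~\ref{thm13}, recurrence, which is the main obstacle, I would track progress toward ``infinitely many good prefixes'' via a monotone counter. Define $\mathit{lvl}(u)$ to be the largest $k \leq |u|$ such that some $n \leq |u|$ has $P_n(u)$ defined with every $\tau \in P_n(u)$ containing at least $k$ prefixes in $\Phi$, and $\mathit{lvl}(\epsilon) = 0$. The cap $k \leq |u|$ is essential to keep $\mathit{lvl}$ finite, and monotonicity of $P_n$ in $u$ gives monotonicity of $\mathit{lvl}$. Let $\Phi'$ be the set of strict-increment points of $\mathit{lvl}$, so that $\pi$ has infinitely many prefixes in $\Phi'$ iff $\mathit{lvl}(\pi_{(m)}) \to \infty$. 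The backward direction (``$\mathit{lvl}(\pi_{(m)}) \to \infty$ implies $\pi \in f_U^{-1}(L)$'') is symbolic from the definition of $\mathit{lvl}$ together with $P_n(\pi_{(m)}) = \pref_n(f(\pi))$; the forward direction iterates the K\"onig argument from case~\ref{thm12} at each level $k$ to produce a uniform depth $N_k$ with the required $k$-prefix property, which prefix-continuity then lifts to a sufficiently long prefix of $\pi$. The delicate part is designing $\mathit{lvl}$ so that it is simultaneously monotone, finite-valued, and provably unbounded on inputs in $f_U^{-1}(L)$ while counting only genuinely new milestones --- without which $\Phi'$ might fail to characterize $f_U^{-1}(L)$ as a recurrence property.
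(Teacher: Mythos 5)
Your proposal is correct, but it takes a genuinely different route from the paper. The paper proves Theorem~\ref{hierarchy_preservation} topologically: Proposition~\ref{prefix_proposition} identifies prefix-continuous, prefix-closed, prefix-compact functions with continuous maps $\Sigma_1^\omega \to \KK(\Sigma_2^\omega)$ into the hyperspace of compact sets under the Hausdorff metric, so that $f_U^{-1}(L) = f^{-1}(\KK(L))$; Lemma~\ref{Hausdorff_lemma} shows $\KK$ preserves open, closed, and $\boldsymbol{\Pi_2}$ sets; and Proposition~\ref{continuous_borel} (continuous preimages preserve Borel classes) plus the Manna--Pnueli correspondence (Theorem~\ref{hierarchy_correspondence}) finishes. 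You instead give a direct, topology-free construction of explicit witness languages $\Phi'$ from the witness $\Phi$, via the locally determined prefix sets $P_n(u)$, a K\"onig-lemma argument (which is precisely where prefix-compactness and prefix-closedness enter, mirroring the paper's restriction to compact subsets), and a monotone counter for recurrence. The paper's route is more modular -- it yields the dual persistence/existential-preimage corollary by complementation for free and connects to the hyperspace literature -- while your route buys two things the topological proof hides: the witnesses $\Phi'$ are explicit (a natural starting point for the effective constructions in the complexity section), and the safety case visibly uses only prefix-continuity, neither prefix-closedness nor prefix-compactness, which mildly strengthens the stated hypotheses for that item. Your handling of the delicate recurrence case (capping the counter at $|u|$ to keep it finite, taking strict-increment points as $\Phi'$, and iterating the K\"onig argument to obtain a uniform depth $N_k$ at each level $k$) is sound.
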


The proof of Theorem~\ref{hierarchy_preservation} requires establishing some auxiliary results using a topological argument, which we formulate in Section~\ref{topological_closure}. With the theorem at hand, we can show that the general result in particular covers the previously introduced similarity relation $\leq^{subset}$, since this relation satisfies all the introduced requirements.

\begin{proposition} \label{causality_is_up}
    Let $\mathcal{T}$ be a finite state system, $\pi\in \mathit{traces}(\mathcal{T})$ a trace, $\leq^{subset}$ a subset similarity relation, and $\Effect \subseteq (2^\AP)^\omega$ an effect property. Suppose $\Cause$ is a cause of $\Effect$ on $\pi$. Then $C$ is a universal preimage of $E$ under a prefix-continuous, prefix-closed, and prefix-compact function. 
\end{proposition}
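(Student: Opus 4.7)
The plan is to exhibit the obvious function realizing the universal preimage. Given $\mathcal{T}$, $\pi$, $\leq^{subset}$ and the effect $E$, define
\[
f\colon (2^I)^\omega \to \mathcal{P}((2^{AP})^\omega), \qquad f(\rho) := \{\sigma \in \mathit{traces}(\mathcal{T}) \mid \sigma \leq^{subset}_\pi \rho\}.
\]
Lemma~\ref{lem:cause} then directly gives $C = \{\rho \mid f(\rho) \subseteq E\} = f_U^{-1}(E)$, so the proof reduces to verifying the three structural properties of $f$.

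Prefix-compactness is immediate, since $2^{AP}$ is finite and hence $\pref_n(f(\rho)) \subseteq (2^{AP})^n$ is finite. For prefix-closedness, suppose every prefix of some $\sigma \in (2^{AP})^\omega$ lies in $\pref(f(\rho))$. A standard König's-lemma argument on the finitely branching computation tree of the finite-state system $\mathcal{T}$ shows that $\mathit{traces}(\mathcal{T})$ is closed in the Cantor topology, and hence $\sigma \in \mathit{traces}(\mathcal{T})$. Moreover, $\sigma \leq^{subset}_\pi \rho$ is a conjunction over positions of deviation-inclusion constraints, and each such constraint at position $i$ is witnessed by any extension of $\sigma_{(i+1)}$ that lies in $f(\rho)$; hence the similarity holds at every position, and $\sigma \in f(\rho)$.

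The main obstacle is prefix-continuity, and the plan is to show that $m := n$ suffices. Fix $\rho, \rho'$ with $\rho_{(n)} = \rho'_{(n)}$, and let $\tau \in \pref_n(f(\rho))$, witnessed by some $\sigma \in f(\rho)$ with $\sigma_{(n)} = \tau$. The key step is to overwrite the tail of $\sigma$ at positions $\geq n$ by the input sequence of $\pi$. Under the standard input-enabledness of reactive systems (i.e., $\delta(s, A) \neq \emptyset$ for every $s$ and every $A \subseteq I$), this yields a valid trace $\sigma' \in \mathit{traces}(\mathcal{T})$ with $\sigma'_{(n)} = \tau$ whose inputs match $\pi$ on all positions $\geq n$. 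Thus the input-deviations of $\sigma'$ from $\pi$ are confined to the first $n$ positions, where they equal the deviations of $\sigma$, which by assumption are contained in those of $\rho_{(n)} = \rho'_{(n)}$. Consequently $\sigma' \leq^{subset}_\pi \rho'$, so $\tau \in \pref_n(f(\rho'))$; a symmetric argument gives the reverse inclusion. The \emph{subset} nature of $\leq^{subset}$ is essential in this tail-trimming step: under a similarity relation demanding matching deviations one could not neutralize $\sigma$'s tail like this, and a different choice of $m$ or a more refined construction would be required.
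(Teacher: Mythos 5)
The paper states Proposition~\ref{causality_is_up} without proof (it does not appear in the appendix either), so there is no official argument to compare against; judged on its own, your proof is essentially complete and uses exactly the function the paper intends -- the text immediately after the definition of the universal preimage describes $f$ as ``the map that sends a trace $\pi''$ to the set of at least as similar traces $\pi'\in\TT$,'' which is your $f(\rho)=\{\sigma\in\mathit{traces}(\TT)\mid \sigma\leq^{subset}_\pi\rho\}$. The identification $C=f_U^{-1}(E)$ via Lemma~\ref{lem:cause}, the triviality of prefix-compactness over the finite alphabet $2^{\AP}$, and the K\"onig's-lemma-plus-positionwise argument for prefix-closedness (closedness of $\mathit{traces}(\TT)$, and the fact that $\leq^{subset}$ is a conjunction of per-position constraints each witnessed by a long enough prefix) are all sound. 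As a side remark, $\pi\in f(\rho)$ for every $\rho$, so the images are nonempty compact sets, which is what Proposition~\ref{prefix_proposition} actually needs downstream.

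The one point that deserves scrutiny is prefix-continuity. Your tail-overwriting argument with $m=n$ genuinely requires input-enabledness, i.e., $\delta(s,A)\neq\emptyset$ for every state $s$ and every $A\subseteq I$, and the paper's definition of a system does not explicitly guarantee this ($\delta$ maps into $2^S$, so empty successor sets are formally permitted). This is not merely cosmetic: if some state reachable after producing $\tau$ forced the inputs to deviate from $\pi$ at every subsequent position, then whether $\tau\in\pref_n(f(\rho))$ would depend on the entire tail of $\rho$, no finite $m$ would suffice, and prefix-continuity -- hence the proposition -- would fail. Since reactive systems in this line of work are standardly input-complete (as in the example of Figure~\ref{fig:example}) and you flag the assumption explicitly, I read this as an implicit hypothesis of the paper rather than a gap in your argument, but it would be worth stating the assumption once at the level of the system definition.
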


The immediate corollary from Theorem~\ref{hierarchy_preservation} and Proposition~\ref{causality_is_up} is the closure of temporal hierarchy under causality.

\begin{corollary}[Causality Closure] \label{causality_preservation}
    Let $\mathcal{T}$ be a finite state system, $\pi\in \mathit{traces}(\mathcal{T})$
    a trace, $\leq^{subset}$ a subset similarity relation, and $\Effect \subseteq (2^\AP)^\omega$ an effect property. Suppose $\Cause$ is a cause of $\Effect$ on $\pi$ in $\TT$. Then the following statements hold.

    \begin{enumerate}
        \item If $E$ is a safety property, then $C$ is a safety property.
        \item If $E$ is a guarantee property, then $C$ is a guarantee property.
        \item If $E$ is a recurrence property, then $C$ is a recurrence property.
    \end{enumerate}
\end{corollary}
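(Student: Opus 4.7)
The plan is to derive this corollary as an essentially mechanical composition of Proposition~\ref{causality_is_up} with Theorem~\ref{hierarchy_preservation}. First, I would invoke Proposition~\ref{causality_is_up} for the given $\mathcal{T}$, $\pi$, $\leq^{subset}$ and $E$, which tells me that the cause $C$ can be written as $C = f_U^{-1}(E)$ for some function $f: \Sigma_1^\omega \to \mathcal{P}(\Sigma_2^\omega)$ that is prefix-continuous, prefix-closed, and prefix-compact. Concretely, by Lemma~\ref{lem:cause} this $f$ should be the map sending a candidate input word $\rho$ to the set $\{\sigma \in \mathit{traces}(\mathcal{T}) \mid \sigma \leq_\pi \rho\}$ of system traces at least as similar to $\pi$ as $\rho$, so that $\rho \in C$ precisely when all such $\sigma$ belong to $E$.

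Once this representation is in hand, the three statements follow by direct application of the respective parts of Theorem~\ref{hierarchy_preservation}: if $E$ lies in the safety, guarantee, or recurrence class, then $f_U^{-1}(E) = C$ lies in the same class. I would present the three cases in a single short argument, cycling over items~\ref{thm11}, \ref{thm12}, and~\ref{thm13} of the theorem, since no additional construction is needed beyond the chosen $f$.

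The only subtlety worth flagging explicitly is Lemma~\ref{lem:cause}'s caveat that a cause exists only when the downward closed set is non-empty; since the corollary already assumes that $C$ is a cause of $E$ on $\pi$, existence is not an issue, and the universal-preimage characterization applies immediately. The genuinely hard part of the overall closure story is Proposition~\ref{causality_is_up} itself --- verifying that the trace-comparison function induced by $\leq^{subset}$ actually satisfies prefix-continuity, prefix-closedness, and prefix-compactness --- but that is the content of that proposition and need not be redone here. Thus the proof of the corollary reduces to citing the two prior results and observing that their hypotheses are met by the setting of the corollary.
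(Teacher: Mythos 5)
Your proposal matches the paper exactly: the corollary is presented there as the immediate consequence of composing Proposition~\ref{causality_is_up} (which exhibits $C$ as $f_U^{-1}(E)$ for a prefix-continuous, prefix-closed, prefix-compact $f$) with the three items of Theorem~\ref{hierarchy_preservation}. Your added remark that the existence caveat of Lemma~\ref{lem:cause} is discharged by the corollary's hypothesis is correct and does not change the argument.
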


Note that we will discuss the classes that are not closed in this way in Section~\ref{sec:non-closure}, after proving Theorem~\ref{hierarchy_preservation} with results from general topology in the next section.

\subsection{Proof via Topology} \label{topological_closure}
This section is dedicated to a topological characterization of the concepts presented before and proves Theorem \ref{hierarchy_preservation} using results from general topology. We first introduce the basic topological definitions. In parallel, we define the Cantor metric and Cantor topology on the space of words.

\subsubsection{Metric spaces}

A metric space is a set equipped with a notion of distance between its elements. Formally, a \emph{metric space} is a pair $(M,d)$, where $M$ is a set and $d: M\times M \to \RR$ is a non-negative, symmetric function, which satisfies the triangle inequality and $d(x,y) = 0 \Longleftrightarrow x=y$. 

The set of words over an alphabet $\Sigma$ can be equipped with a distance function $d_C$ called the \emph{Cantor distance}. The Cantor distance between two words $\pi_1$ and $\pi_2$ is defined as $0$ if they are identical, and as $d_C(\pi_1,\pi_2) := 2^{-j}$ otherwise, where $j$ is the length of the longest common prefix of $\pi_1$ and $\pi_2$. Intuitively, the closer two words are in the Cantor metric, the longer their common prefix. Note that the Cantor distance is bounded by $1$.

For a metric space $M$, a set $O \subseteq M$ is \emph{open} if, for every $x\in O$ there exists $\epsilon>0$, such that the ball $B_\epsilon(x)$ of radius $\epsilon$ with the center in $x$ lies entirely in $O$.
$$B_\epsilon(x) := \{ x'\in M \mid d(x,x') \leq \epsilon\} \subseteq O \enspace.$$

The complement of an open set is called \emph{closed}. The collection of open sets forms a \emph{topology} on $M$, induced by the metric $d$.

For $\Sigma^\omega$ with the Cantor metric, open sets are of the form $X\Sigma^\omega$, where $X\subseteq \Sigma^*$ is a finite word language according to Proposition 3.1 in \cite{infinite-words}. Intuitively, a set $L\subseteq \Sigma^\omega$ is open if, for every $\pi \in L$, there exists $n\in \NN$, such that any $\pi'$ coinciding with $\pi$ in the first $n$ letters also belongs to $L$. This is equivalent to saying that the ball of radius $2^{-n}$ centered at $\pi$ is contained in $L$. 
The topology induced by the Cantor distance is called the \emph{Cantor topology}.

According to Proposition 3.5 in \cite{infinite-words}, a set $L\subseteq \Sigma^\omega$ is closed if for every $\pi\in \Sigma^\omega$: $\pref(\pi) \subseteq \pref(L) \Rightarrow \pi \in L$.

The function $f: M_1 \to M_2$ between two metric spaces $(M_1,d_1)$ and $(M_2,d_2)$ is \emph{continuous} if for every $x\in M_1$ and every $\epsilon > 0$, there exists $\delta > 0$ such that: 
$$\forall x'\in M_1 : \ \text{if} \ d_1(x,x') < \delta, \ \text{then} \ d_2(f(x),f(x'))< \epsilon \enspace .$$

Cantor topology derives its name from the fact that the space of words with this topology can be continuously injected into the Cantor space, as shown by Plotkin \cite{Plotkin1976APC}.

\subsubsection{Borel hierarchy}

In a metric space $M$ the union of any collection of open sets is also open and the intersection of a finite number of open sets remains open. However, the intersection of a countable collection of open sets may no longer be open. The set of countable intersections of open sets is noted as $\boldsymbol{\Pi_2}$.

Symmetrically, the set of closed sets is closed under arbitrary intersections but only finite unions. The set of countable unions of closed sets is denoted as $\boldsymbol{\Sigma_2}$. Together, $\boldsymbol{\Sigma_2}$ and $\boldsymbol{\Pi_2}$ form the second level of the Borel hierarchy. 

The Borel hierarchy organizes subsets of the metric space $M$ into classes. A set is called Borel if it belongs to some level of the Borel hierarchy. The first level of the Borel hierarchy consists of the set of open sets $\boldsymbol{\Sigma_1}$ and the set of closed sets $\boldsymbol{\Pi_1}$. The higher levels are defined recursively as follows:
$$\boldsymbol{\Sigma_n} := \{ \cup_{i\in \NN} X_i \mid X_i \in \boldsymbol{\Pi_{n-1}} \},$$
$$\boldsymbol{\Pi_n} := \{X \mid M\setminus X \in \boldsymbol{\Sigma_{n}}\},$$
$$\boldsymbol{\Delta_n} := \boldsymbol{\Sigma_n} \cap \boldsymbol{\Pi_n}.$$
The first two and a half levels of the Borel hierarchy for the set of infinite words $\Sigma^\omega$ with the Cantor topology correspond to the temporal hierarchy, as established by Mana and Pnueli~\cite{MannaP89}.

\begin{theorem} [Mana and Pnueli~\cite{MannaP89}] \label{hierarchy_correspondence}
    Let $L \subseteq \Sigma^\omega$ be an infinite word language.

    \begin{enumerate}
        \item $L$ is a safety property iff $L$ is a closed set.
        \item $L$ is a guarantee property iff $L$ is an open set.
        \item $L$ is a obligation property iff $L \in \boldsymbol{\Delta_2}$.
        \item $L$ is a recurrence property iff $L \in \boldsymbol{\Pi_2}$.
        \item $L$ is a persistence property iff $L  \in \boldsymbol{\Sigma_2}$.
        \item $L$ is a reactivity property iff $L \in \boldsymbol{\Delta_3}$.
    \end{enumerate}
\end{theorem}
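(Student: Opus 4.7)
The plan is to prove each of the six equivalences by directly matching the prefix-based definition of the corresponding temporal class to the topological characterization of its Borel class in the Cantor topology, making essential use of the two facts recalled just above the theorem statement: open sets are exactly languages of the form $X\Sigma^\omega$ for $X \subseteq \Sigma^*$, and closed sets $L$ are exactly those satisfying $\pref(\pi) \subseteq \pref(L) \Rightarrow \pi \in L$.

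The first level is almost immediate. For guarantee $\Leftrightarrow$ open, if $L$ is guarantee witnessed by $\Phi$, then $L = \Phi\Sigma^\omega$ is open; conversely, an open $L = X\Sigma^\omega$ is guarantee with $\Phi = X$. For safety $\Leftrightarrow$ closed, first assume $L$ is safety with witness $\Phi$: if $\pref(\pi) \subseteq \pref(L)$, each prefix $\pi_{(n)}$ extends to some $\sigma^n \in L$, all of whose prefixes (including $\pi_{(n)}$) lie in $\Phi$, hence every prefix of $\pi$ lies in $\Phi$ and $\pi \in L$; conversely, for closed $L$ take $\Phi = \pref(L)$ and apply the closedness characterization directly.

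For recurrence $\Leftrightarrow \boldsymbol{\Pi_2}$, the forward direction writes a recurrence property with witness $\Phi$ as $L = \bigcap_n \bigl(\bigcup_{w \in \Phi,\, |w| \geq n} w\Sigma^\omega\bigr)$, exhibiting it as a countable intersection of open sets. For the converse, given $L = \bigcap_n O_n$ with $O_n = X_n \Sigma^\omega$, we may assume the $O_n$ form a decreasing chain by replacing $O_n$ with $\bigcap_{k \leq n} O_k$; we then take $\Phi$ to consist of appropriately layered witnesses (roughly, those $w$ that witness membership of $O_{|w|}$) so that $\pi$ meets every $O_n$ iff it has prefixes in $\Phi$ of unbounded length, which by the König-style argument already used for prefix-compactness in Section~\ref{topological_closure} means infinitely many prefixes in $\Phi$. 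Persistence $\Leftrightarrow \boldsymbol{\Sigma_2}$ then follows by complementation: the complement of a persistence property with witness $\Phi$ is a recurrence property with the same $\Phi$, and $\boldsymbol{\Sigma_2}$ is the complement class of $\boldsymbol{\Pi_2}$.

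Finally, obligation $\Leftrightarrow \boldsymbol{\Delta_2}$ and reactivity $\Leftrightarrow \boldsymbol{\Delta_3}$ are handled by distributing Boolean combinations through the previously proved equivalences: any finite Boolean combination of open and closed sets lies in both $\boldsymbol{\Sigma_2}$ and $\boldsymbol{\Pi_2}$, and similarly any finite Boolean combination of $\boldsymbol{\Pi_2}$ and $\boldsymbol{\Sigma_2}$ sets lies in $\boldsymbol{\Delta_3}$. The main obstacle I expect is the converse of these two equivalences, since in an arbitrary topological setting $\boldsymbol{\Delta_2}$ is strictly larger than the Boolean closure of $\boldsymbol{\Sigma_1} \cup \boldsymbol{\Pi_1}$; the argument must therefore exploit the implicit restriction to $\omega$-regular languages — together with the paper's observation that the obligation class equals the intersection of recurrence and persistence — to extract a finite Boolean normal form, reducing an arbitrary $\omega$-regular $\boldsymbol{\Delta_2}$ (resp.\ $\boldsymbol{\Delta_3}$) set to a finite Boolean combination of safety and guarantee (resp.\ recurrence and persistence) properties.
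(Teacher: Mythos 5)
First, note that the paper offers no proof of this statement at all: it is quoted as a known result of Manna and Pnueli, so your attempt can only be judged on its own merits. On those merits, items 1, 2, 4 and 5 are essentially right. Items 1 and 2 are the standard matching of the prefix-based definitions against the two characterizations of open and closed sets recalled just before the theorem, and your arguments there are complete; item 5 by complementation from item 4 is also fine. For item 4 the forward direction is exactly right, but in the converse your witness set ``those $w$ that witness membership of $O_{|w|}$'' does not work as literally stated: if confirming membership in $O_n$ requires reading, say, $2^n$ letters, then no prefix $w$ ever satisfies $w\Sigma^\omega \subseteq O_{|w|}$. One needs the genuinely layered, recursive definition in which $w$ enters $\Phi$ exactly when $w\Sigma^\omega \subseteq O_k$ for $k$ the number of proper prefixes of $w$ already placed in $\Phi$; with that choice the count of prefixes in $\Phi$ is infinite iff $\pi$ meets every $O_k$. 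Also, no K\"onig-style argument is needed at the end -- a word has exactly one prefix of each length, so ``prefixes in $\Phi$ of unbounded length'' and ``infinitely many prefixes in $\Phi$'' are the same statement.

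The genuine gap is in items 3 and 6, which you have correctly diagnosed but not closed. The direction you prove (finite Boolean combinations of open and closed sets lie in $\boldsymbol{\Delta_2}$, and finite Boolean combinations of $\boldsymbol{\Sigma_2}$ and $\boldsymbol{\Pi_2}$ sets lie in $\boldsymbol{\Delta_3}$) is the easy inclusion. The converses are in fact false for arbitrary languages: by the Hausdorff--Kuratowski difference hierarchy, $\boldsymbol{\Delta_2}$ in Cantor space contains sets of arbitrarily high countable difference rank over the closed sets, and these are not finite Boolean combinations of safety and guarantee properties. So the theorem as literally stated for arbitrary $L \subseteq \Sigma^\omega$ holds only for items 1, 2, 4, 5; items 3 and 6 (like the paper's preceding remark that obligation equals recurrence intersected with persistence) require restricting to $\omega$-regular $L$, and establishing them there needs real machinery -- Wagner's analysis of $\omega$-regular Wadge degrees, the Staiger--Wagner characterization of the regular $\boldsymbol{\Delta_2}$ sets, or Manna and Pnueli's automaton normal forms -- which your sketch only gestures at. Since the paper's own use of this theorem (in the proof of Theorem~\ref{hierarchy_preservation}) invokes only items 1, 2 and 4, the gap is harmless for the paper's purposes, but as a proof of the stated theorem your proposal is incomplete on these two items.
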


Thus, the problem of the closure of temporal classes is equivalent to the problem of the closure of Borel classes.

A fundamental fact about Borel classes is that they are closed under the continuous preimage of a function \cite{infinite-words}.

\begin{proposition} [\cite{infinite-words}] \label{continuous_borel}
    The preimage of a Borel set under a continuous function is a Borel set of the same Borel class.
\end{proposition}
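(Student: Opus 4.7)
The plan is to prove the statement by induction on the level of the Borel hierarchy, exploiting the fact that the set-theoretic preimage commutes with arbitrary unions, arbitrary intersections, and with complements. These Boolean closure identities are what allow the inductive definition of the hierarchy given just before the statement to be propagated through $f^{-1}$ cleanly, so the proof is essentially a structural induction along the recursion $\boldsymbol{\Sigma_n}, \boldsymbol{\Pi_n}, \boldsymbol{\Delta_n}$.

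For the base case ($\boldsymbol{\Sigma_1}$), I would appeal to the standard equivalence, for functions between metric spaces, between the $\epsilon$-$\delta$ definition of continuity stated earlier in the excerpt and the topological characterization ``$f^{-1}(O)$ is open whenever $O$ is open''. Only one direction is needed: given an open $O \subseteq M_2$ and $x \in f^{-1}(O)$, pick $\epsilon > 0$ with $B_\epsilon(f(x)) \subseteq O$, then the $\delta$ supplied by continuity of $f$ at $x$ yields $B_\delta(x) \subseteq f^{-1}(O)$, so $f^{-1}(O)$ is open. The $\boldsymbol{\Pi_1}$ case is then immediate from $f^{-1}(M_2 \setminus A) = M_1 \setminus f^{-1}(A)$, so closed sets pull back to closed sets.

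For the inductive step, assume the claim for all levels up to $n-1$. If $B \in \boldsymbol{\Sigma_n}$, write $B = \bigcup_{i \in \NN} X_i$ with each $X_i \in \boldsymbol{\Pi_{n-1}}$; since preimage distributes over unions, $f^{-1}(B) = \bigcup_{i \in \NN} f^{-1}(X_i)$, and by the induction hypothesis each $f^{-1}(X_i)$ lies in $\boldsymbol{\Pi_{n-1}}$, so $f^{-1}(B) \in \boldsymbol{\Sigma_n}$. The $\boldsymbol{\Pi_n}$ case is obtained either by taking complements and invoking the $\boldsymbol{\Sigma_n}$ case, or dually by observing that preimage distributes over countable intersections. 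The $\boldsymbol{\Delta_n}$ case follows from $\boldsymbol{\Delta_n} = \boldsymbol{\Sigma_n} \cap \boldsymbol{\Pi_n}$. There is no real obstacle: the only point that demands attention is to ensure that countability of the index sets is preserved when passing through $f^{-1}$, which it is, and to match the base case correctly to the $\epsilon$-$\delta$ formulation of continuity adopted in the excerpt rather than silently switching to a different one.
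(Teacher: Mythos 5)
The paper does not prove this proposition---it is imported verbatim from the cited reference---and your induction on the hierarchy level, with the base case being the standard equivalence between $\epsilon$-$\delta$ continuity and openness of preimages of open sets, is exactly the textbook argument that reference uses. Your proof is correct (the only cosmetic wrinkle is that the paper defines $B_\epsilon$ with $\leq$ while continuity is stated with strict inequalities, which is absorbed by shrinking $\delta$), so nothing further is needed.
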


\subsubsection{Hausdorff metric}

A set $X\subseteq M$ is \emph{compact} if for every collection of open sets $\{ Y_i\}_{i\in \alpha}$ that covers $X$: $X\subseteq \bigcup_{i\in \alpha }Y_i$ there exists a finite subset of $\{ Y_i\}_{i\in \alpha}$ that also covers $X$. Every compact subset of a metric space is closed.

According to Proposition 3.12 in~\cite{infinite-words}, a set $L \subseteq \Sigma^\omega$ is compact in the Cantor topology if it is closed and for every $n\in \NN$ the set $\pref_n(L)$ is finite.

Since we are addressing the problem of closure of Borel classes under the preimages of a function $f:\Sigma_1^\omega \to \mathcal{P} (\Sigma_2^\omega )$, which maps traces to sets of traces, it is essential to define a metric on the set of sets of traces. If $f$ is prefix-closed and prefix-compact it maps traces to compact sets of traces.  Therefore, it suffices to define a metric on the set of compact sets. For a set $X \subseteq M$, we denote the set of all nonempty compact subsets of $X$ as $\KK(X)$. The Hausdorff distance between two nonempty sets $X, Y \subseteq M$ is defined as follows:
$$d_H (X,Y) = max (\underset{x\in X}{sup} \ d(x,Y), \underset{y\in Y}{sup} \ d(X,y)) \enspace,$$
$$\text{where }d(a,B) = \underset{b\in B}{\mathit{inf}} \ d(a,b)\text{ for }B\subseteq M \text{ and }a\in M.$$

On the set $\KK(M)$, the Hausdorff distance $d_H$ is a metric, which induces a topology known as the Hausdorff topology~\cite{Goldlcke2014VariationalA}. The next proposition shows that prefix-continuous functions are essentially functions that are continuous in Hausdorff topology. 

\begin{proposition} \label{prefix_proposition}
    A function $f: \Sigma_1^\omega \to \mathcal{P}(\Sigma_2^\omega)$ is prefix-continuous, prefix-closed and prefix-compact if and only if it is a continuous function from the metric space $\Sigma_1^\omega$ to the metric space $\KK(\Sigma_2^\omega)$ with the Hausdorff metric $d_H$.
\end{proposition}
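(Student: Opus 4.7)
My plan is to reduce the three prefix-conditions on $f$ to the classical metric characterisation of continuity from $\Sigma_1^\omega$ into the hyperspace $(\KK(\Sigma_2^\omega), d_H)$. The pivotal observation I will establish first is that for any nonempty compact sets $A, B \subseteq \Sigma_2^\omega$ and any $n \in \NN$,
$$d_H(A,B) \leq 2^{-n} \quad \text{iff} \quad \pref_n(A) = \pref_n(B) \enspace .$$
This follows by unfolding definitions and exploiting that in the Cantor metric $d_C(a,b) \leq 2^{-n}$ is equivalent to $a$ and $b$ agreeing on their first $n$ letters; compactness (which keeps the relevant infima attained, since Cantor distances take values in the discrete set $\{0\}\cup\{2^{-j}:j\in\NN\}$) then turns $\sup_{a\in A} d(a,B)\leq 2^{-n}$ into $\pref_n(A)\subseteq \pref_n(B)$ and symmetrically for the other supremum.

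For the forward direction, I first note that prefix-closedness coincides verbatim with the characterisation of Cantor-closed sets stated in Proposition~3.5 of~\cite{infinite-words}, hence each $f(\pi)$ is closed; combined with prefix-compactness, Proposition~3.12 of~\cite{infinite-words} then yields $f(\pi) \in \KK(\Sigma_2^\omega)$. To verify continuity at an arbitrary $\pi$, I fix $\varepsilon > 0$, pick $n$ with $2^{-n} \leq \varepsilon$, and invoke prefix-continuity to obtain $m$ such that $\pi'_{(m)} = \pi_{(m)}$ implies $\pref_n(f(\pi)) = \pref_n(f(\pi'))$. By the pivotal observation this is the same as $d_H(f(\pi), f(\pi')) \leq 2^{-n} \leq \varepsilon$, so $\delta := 2^{-m}$ witnesses the $\varepsilon$-$\delta$ criterion.

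For the reverse direction I assume $f$ is continuous into $(\KK(\Sigma_2^\omega), d_H)$. Then each $f(\pi)$ is Cantor-compact, so Propositions~3.5 and~3.12 of~\cite{infinite-words} imply that $f(\pi)$ is closed and that each $\pref_n(f(\pi))$ is finite, which is precisely prefix-closedness together with prefix-compactness. For prefix-continuity, I apply continuity at $\pi$ with $\varepsilon = 2^{-n}$, pick $m$ with $2^{-m}$ at most the resulting $\delta$, and read the pivotal observation in the converse direction to conclude $\pref_n(f(\pi)) = \pref_n(f(\pi'))$ for every $\pi'$ agreeing with $\pi$ up to position $m$.

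The main task is the pivotal observation linking the Hausdorff distance to prefix equality; once it is isolated, the remainder is routine translation between the prefix-language formulation and the metric formulation, with no deep obstacle. A minor subtlety I will need to flag is that $\KK(\Sigma_2^\omega)$ collects \emph{nonempty} compact sets, whereas the three prefix conditions are vacuously consistent with $f(\pi) = \emptyset$; the equivalence should accordingly be read for $f$ taking nonempty values, which is in any case the setting of the causality instances relevant to Proposition~\ref{causality_is_up}.
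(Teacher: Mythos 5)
Your proposal is correct; the paper states Proposition~\ref{prefix_proposition} without an explicit proof, and your argument is exactly the standard one such a proof would take. The pivotal equivalence $d_H(A,B) \leq 2^{-n} \iff \pref_n(A) = \pref_n(B)$ for nonempty compact $A,B$ is the right bridge between the prefix formulation and the $\varepsilon$--$\delta$ formulation, your use of Propositions~3.5 and~3.12 of~\cite{infinite-words} to identify prefix-closed plus prefix-compact with Cantor-compactness matches how the paper itself invokes those facts, and your caveat about $f(\pi) = \emptyset$ versus the nonemptiness built into $\KK(\Sigma_2^\omega)$ is a genuine (if minor) imprecision in the statement that is worth flagging; the only cosmetic fixes needed are to choose $2^{-n} < \varepsilon$ and $2^{-m} < \delta$ with strict inequalities to match the strict-inequality form of continuity used in the paper.
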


\subsubsection{Borel Classes in Hausdorff topology}

Combining Propositions \ref{continuous_borel} and \ref{prefix_proposition} we find that prefix-closed, prefix-compact, and prefix-continuous functions preserve Borel classes of the Hausdorff topology. Therefore, it remains to determine how Borel classes behave when lifted from $\Sigma^\omega$ to $\KK (\Sigma^\omega)$.

\begin{lemma} \label{Hausdorff_lemma}
    Let $M$ be a metric space with a subset $X\subseteq M$. Then the following statements are true.
    \begin{enumerate}
        \item If $X$ is open in $M$, then $\KK(X)$ is open in $\KK(M)$.
        \item If $X$ is closed in $M$ then $\KK(X)$ is closed in $\KK(M)$.
        \item If $X\in \boldsymbol{\Pi_2} $ in $M$ then $\KK(X)\in \boldsymbol{\Pi_2} $ in $\KK(M)$.
    \end{enumerate}
\end{lemma}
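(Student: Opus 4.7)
The plan is to prove parts (1) and (2) directly from the definition of the Hausdorff metric, and then to deduce (3) by observing that $\KK(\cdot)$ commutes with countable intersections. For (1), I would start with an arbitrary $K \in \KK(X)$ and produce a Hausdorff ball around $K$ contained in $\KK(X)$. Since $X$ is open, the function $k \mapsto d(k, M \setminus X)$ is continuous and strictly positive on $K$, and compactness of $K$ forces it to attain a strictly positive infimum $\epsilon$ (the case $X = M$ being trivial). The $\epsilon$-neighborhood $\{m \in M \mid d(m,K) < \epsilon\}$ then lies entirely in $X$, so for any $K' \in \KK(M)$ with $d_H(K,K') < \epsilon$, every $k' \in K'$ satisfies $d(k',K) < \epsilon$ and is therefore in $X$, giving $K' \in \KK(X)$.

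For (2), I would instead show that the complement $\KK(M) \setminus \KK(X)$ is open. Any compact $K \not\subseteq X$ contains some $k \in K$ with $k \in M \setminus X$; since $M \setminus X$ is open, there is $\epsilon > 0$ with $B_\epsilon(k) \subseteq M \setminus X$. If $d_H(K,K') < \epsilon$, then $d(k,K') \leq d_H(K,K') < \epsilon$, and by compactness this infimum is attained at some $k' \in K'$ with $d(k,k') < \epsilon$, so $k' \notin X$ and $K' \notin \KK(X)$.

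For (3), the key observation is the identity $\KK\bigl(\bigcap_n X_n\bigr) = \bigcap_n \KK(X_n)$, which holds because a nonempty compact $K$ is contained in the intersection iff it is contained in each $X_n$. Writing $X \in \boldsymbol{\Pi_2}$ as a countable intersection of open sets $X_n$ and applying (1) to each factor then exhibits $\KK(X)$ as a countable intersection of open sets in $\KK(M)$, placing it in $\boldsymbol{\Pi_2}$. I expect the uniform-gap step in (1) to be the main obstacle: openness of $X$ only yields local radii around each point of $K$ separately, and it is compactness of $K$ combined with the continuity of the distance function to the closed set $M \setminus X$ that collapses these into a single $\epsilon$ working for all points simultaneously. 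The arguments for (2) and (3) are then essentially bookkeeping around this central fact.
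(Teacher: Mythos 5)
Your proposal is correct and follows essentially the same route as the paper: part (1) via the positive infimum of $x \mapsto d(x, M\setminus X)$ on the compact $K$ (Extreme Value Theorem), part (2) by showing the complement of $\KK(X)$ is open using a point of $K$ at positive distance from $X$, and part (3) via the identity $\KK\bigl(\bigcap_n X_n\bigr) = \bigcap_n \KK(X_n)$. Your explicit handling of the degenerate case $X = M$ in (1) is a small refinement the paper omits, but the argument is otherwise the same.
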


\begin{proof}
    The first two statements can be demonstrated through the equivalence of the Hausdorff topology and the Vietoris topology on the space of compact subsets \cite{Michael1951TopologiesOS}. However, for the sake of clarity, we present an explicit proof.

    \emph{1)}: Let $X\subseteq M$ be an open set and let $K \in \KK(X)$ be a compact subset of $X$. Define a continuous function $f$ on $K$ for every $x\in K$ as follows: $f(x) := d(x,M\setminus X)$.
    This is a continuous function from the compact set to $\RR$. By the Extreme Value Theorem \cite{Rudin1964PrinciplesOM}, $f$ is bounded, and there exists $q\in K$ such that $f(q) = \mathit{inf}_{x\in K} f(x)$. Since $q\notin M\setminus X$ and $M\setminus X$ is closed, we conclude that $f(q) = d(q,M\setminus X)>0$. 

    By the definition of Hausdorff distance, for every set $Z \in \KK(M)$ with $d_H(K,Z) < f(q)$, it follows that for every $x\in Z$, $d(x, K) < f(q)$. Consequently, $x\notin M\setminus X$, thus, $Z\subseteq X$. Therefore the ball $B_{f(q)}(K) $, of radius $f(q)$ centered at $K$ lies entirely within $\KK(X)$. Since this holds for any $K\in \KK(X)$, we have shown that $\KK(X)$ is open.

    \emph{2)}: Let $X\subseteq M$ be a closed set and let $K\in \KK(M)$ a compact set not in $\KK(X)$. Then there exists $x\in K$, such that $x\notin X$, and since $X$ is closed, $d(x,X)>0$. 
    
    For any $Z\in \KK(M)$ with $d_H(Z,K)< d(x,X)/2$, there exists $z\in Z$, such that $d(x,z)<d(x,X)$. This implies $z\notin X$, thus, $Z\notin \KK(X)$. Therefore, the ball $B_{d(x,X)/2}(K) $, of radius $d(x,X)/2$ centered at $K$, lies entirely within $\KK(M) \setminus \KK(X)$. Since this holds for any $K \in \KK(M) \setminus \KK(X)$, we have shown that $\KK(M) \setminus \KK(X)$ is open. Thus, $\KK(X)$ is closed.

    \emph{3)}: Let $X$ be a set from the Borel class $\boldsymbol{\Pi_2 }$ in $M$. Then by definition of $\boldsymbol{\Pi_2 }$:
    $$X = \bigcap_{n\in \NN} Y_n \enspace,$$
    where each $Y_n$ is an open set. Thus,
    $$\KK(X) = \bigcap_{n\in \NN}\KK(Y_n) \enspace.$$
    Each $\KK(Y_n)$ is open in $\KK(M)$ by the first statement, completing the proof.
\end{proof}

Finally, we combine all the results to prove Theorem~\ref{hierarchy_preservation}.

\begin{proof}[Proof of Theorem~\ref{hierarchy_preservation}]
    By Proposition \ref{prefix_proposition}, $f$ can be viewed as a continuous function between two metric spaces $f: \Sigma_1^\omega \to \KK(\Sigma_2^\omega)$. 
    By the definition of the universal preimage, we get
    $$f^{-1}_U (L) = f^{-1}(\KK(L)) \enspace.$$
    By Lemma \ref{Hausdorff_lemma}, we know that $\KK$ preserves open, closed, and $\boldsymbol{\Pi_2 }$ subsets. By Proposition \ref{continuous_borel} we know that $f^{-1}$ preserves all Borel classes, as $f$ is continuous. 
    Thus, we prove all three statements of the theorem, since by Theorem \ref{hierarchy_correspondence} safety, guarantee, and recurrence properties are exactly open, closed, and $\boldsymbol{\Pi_2 }$ sets.
\end{proof}

\subsection{Existential preimage and non-closure of persistence}\label{sec:non-closure}

Since the universal preimage formalizes the universal quantification inherent in temporal causes, it is natural to seek a similar formalization for existential quantification. To this end, we introduce the concept of the existential preimage.

\begin{definition}
Let $X$ and $Y$ be sets, and let $f: X \to \mathcal{P}(Y)$ be a function.
  The \textbf{existential preimage} of the set $S$ under $f$, denoted $f^{-1}_E(S)$, is the set of elements whose images under $f$ intersect $S$: $f^{-1}_E(S) := \{x\in X \mid f(x) \cap S \neq \emptyset\}$. 
\end{definition}

The existential preimage is dual to the universal preimage in the sense that, for a set $S\subseteq Y$, $f^{-1}_E(S) = X \setminus (f^{-1}_U Y\setminus S))$. Using this duality we immediately get the following corollary from Theorem \ref{hierarchy_preservation}.

\begin{corollary}
    Assume a function $f:\Sigma_1^\omega \to \mathcal{P}(\Sigma_2^\omega)$ is prefix-continuous, prefix-closed, and prefix-compact. Then the safety, guarantee, and persistence classes are closed under $f^{-1}_E$.
\end{corollary}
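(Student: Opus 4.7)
The plan is to derive this corollary purely by duality from Theorem~\ref{hierarchy_preservation}, using the identity
\[
 f^{-1}_E(S) \;=\; \Sigma_1^\omega \setminus f^{-1}_U(\Sigma_2^\omega \setminus S)
\]
already noted in the paragraph above the statement. This identity is immediate by unfolding both definitions: $f^{-1}_U(\Sigma_2^\omega \setminus S)$ is exactly the set of $x\in \Sigma_1^\omega$ with $f(x)\cap S = \emptyset$, whose complement is $f^{-1}_E(S)$ by definition. So the problem reduces to tracking how the temporal classes in question behave under complementation and then invoking Theorem~\ref{hierarchy_preservation} on the complemented effect.

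The second ingredient is that within the Borel correspondence of Theorem~\ref{hierarchy_correspondence}, safety and guarantee are dual (closed sets are complements of open sets), and persistence and recurrence are dual ($\boldsymbol{\Sigma_2}$ and $\boldsymbol{\Pi_2}$ are complementary Borel classes). With this, all three cases fall out immediately. If $S$ is a safety property, then $\Sigma_2^\omega\setminus S$ is a guarantee property, so by Theorem~\ref{hierarchy_preservation}.\ref{thm12} the set $f^{-1}_U(\Sigma_2^\omega\setminus S)$ is a guarantee property, and hence its complement $f^{-1}_E(S)$ is a safety property. The guarantee case is symmetric, applying Theorem~\ref{hierarchy_preservation}.\ref{thm11} to $\Sigma_2^\omega\setminus S$. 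For persistence, $\Sigma_2^\omega\setminus S$ is a recurrence property, so Theorem~\ref{hierarchy_preservation}.\ref{thm13} yields that $f^{-1}_U(\Sigma_2^\omega\setminus S)$ is recurrence, and its complement $f^{-1}_E(S)$ is persistence.

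There is essentially no obstacle: the corollary is a formal consequence of the two dualities (existential/universal preimage and the pairwise Borel dualities of the temporal hierarchy). The only care needed in writing it out is to invoke the correct sub-items of Theorem~\ref{hierarchy_preservation} for each case and to cite Theorem~\ref{hierarchy_correspondence} for the class-wise complementation facts.
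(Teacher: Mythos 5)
Your proof is correct and is exactly the paper's argument: the paper derives this corollary "immediately" from the duality $f^{-1}_E(S) = \Sigma_1^\omega \setminus f^{-1}_U(\Sigma_2^\omega\setminus S)$ together with Theorem~\ref{hierarchy_preservation}, which is precisely the reduction you spell out. The only difference is that you make explicit the class-wise complementation facts (safety/guarantee and persistence/recurrence duality via Theorem~\ref{hierarchy_correspondence}) that the paper leaves implicit.
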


Hence, the persistence class is closed under the existential preimage, but as we show in the following, it is not closed under the dual universal preimage which encodes causal inference. The rest of the subsection discusses its behavior under the universal preimage, along with the obligation class which also is not closed under the universal preimage.

First, we consider causal inference with the subset similarity relation $\leq^{subset}$, a special case of the universal preimage by Proposition \ref{causality_is_up}. We prove that an obligation effect can have a non-obligation cause. 

\begin{theorem}\label{nclosure_obligation}
    There exists a system $\TT$, a trace $\pi$, and an obligation $\omega$-regular effect $E$, such that the cause of $E$ on $\pi$ in $\TT$ with similarity relation $\leq^{subset}$ is not an obligation property.
\end{theorem}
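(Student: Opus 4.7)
The plan is to exhibit an explicit counterexample. I would take $\mathcal{T}$ to be the single-state system with input $i$ and self-loops on both values of $i$, so $\mathit{traces}(\mathcal{T}) = (2^{\{i\}})^\omega$. Set $\pi = \{i\}^\omega$ and let the effect be $E := L(\LTLglobally i \lor \LTLeventually(\lnot i \land \LTLnext i))$. The first disjunct is a safety property and the second a guarantee property, so $E$ is a Boolean combination of safety and guarantee, hence an obligation property. Since $\pi \models \LTLglobally i$, we have $\pi \in E$, and SAT holds trivially because the system is input-deterministic.

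Next, I would compute the cause $C$ via Lemma~\ref{lem:cause}. Writing $N(\tau) := \{n \in \NN : i \notin \tau_n\}$, the subset similarity collapses to $\sigma \leq^{subset}_\pi \rho$ iff $N(\sigma) \subseteq N(\rho)$, so $C = \{\rho \mid \forall D \subseteq N(\rho):\ \sigma_D \models E\}$, where $\sigma_D$ denotes the unique trace over $\{i\}$ with $N(\sigma_D) = D$. Unfolding $E$ yields $\sigma_D \models E$ iff $D = \emptyset$ or there exists $n \in D$ with $n+1 \notin D$.

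The technical core is proving $C = L(\LTLglobally \LTLeventually i)$. For the inclusion from left to right, if $\rho \not\models \LTLglobally \LTLeventually i$, then $N(\rho)$ contains a final segment $[n_0,\infty)$; choosing $D := [n_0,\infty)$ gives $\sigma_D \not\models E$, so $\rho \notin C$. For the other inclusion, suppose $\rho \models \LTLglobally \LTLeventually i$ and let $\emptyset \neq D \subseteq N(\rho)$. If $D$ is finite, $\max D$ supplies the required gap. If $D$ is infinite and every $n \in D$ satisfied $n+1 \in D$, a successor-induction would force $D \supseteq [\min D, \infty)$, hence $N(\rho) \supseteq [\min D, \infty)$, contradicting $\rho \models \LTLglobally \LTLeventually i$. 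Therefore some $n \in D$ has $n+1 \notin D$, so $\sigma_D \models E$, and $\rho \in C$.

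Finally, $L(\LTLglobally \LTLeventually i)$ is a canonical recurrence ($\boldsymbol{\Pi_2}$) property that is not persistence ($\boldsymbol{\Sigma_2}$), by strictness of the Manna--Pneuli hierarchy~\cite{MannaP89}; since obligation coincides with $\boldsymbol{\Delta_2}$, it is therefore not obligation, and neither is $C$. The main obstacle I anticipate is the combinatorial characterization of $C$ -- in particular, the observation that the only way a subset $D \subseteq N(\rho)$ can evade $E$ is by being a full final segment of $\NN$; once this is in hand, the obligation-ness of $E$ and the non-obligation-ness of $C$ follow from standard facts on the temporal hierarchy.
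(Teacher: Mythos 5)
Your proposal is correct and uses essentially the same counterexample as the paper: the universal system over a single input, the trace $\{i\}^\omega$, the effect $\LTLglobally i \lor \LTLeventually(\lnot i \land \LTLnext i)$, and the identification of the cause as $\LTLglobally\LTLeventually i$, which is recurrence but not persistence and hence not obligation. Your combinatorial characterization via $N(\rho)$ and the subsets $D$ is a more explicit rendering of the same argument the paper gives.
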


\begin{proof}
    Define the input alphabet $I := \{ a \}$. The output alphabet is empty, i.e., $O:= \emptyset$. The system is the trivial set of all possible traces: $\traces(\TT):= (2^{I\cup O})^\omega.$ The observed trace enables $a$ continuously: $\pi := a^\omega$. Consider the effect $E:= (\LTLsquare a) \vee (\LTLdiamond (\neg a\wedge (\LTLnext a))).$ $E$ is an obligation property since it is a disjunction of safety and guarantee properties. In essence, $E$ does allow any trace but $a^+ (\neg a)^\omega$. The cause of $E$ on $\pi$ in $\TT$ is $\LTLsquare \LTLdiamond a$, since if $\pi''$ satisfies $\LTLsquare \LTLdiamond a$, then clearly any $\pi'\leq_\pi^{subset} \pi''$ satisfies $\LTLsquare \LTLdiamond a$, and hence $E$. If $\pi''$ does not satisfy $\LTLsquare \LTLdiamond a$ that $\pi''\in (2^I)^n (\neg a)^\omega$ for some $n$, hence $a^n (\neg a)^\omega \leq _\pi^{subset} \pi''$, thus $\pi''$ is not in the cause.
\end{proof}

Together with Theorem~\ref{hierarchy_preservation} that states that recurrence is closed, this immediately implies that persistence also is not closed under causality. This is because from Theorem~\ref{hierarchy_preservation} it follows that cause for the obligation property is a recurrence property, so the only way it cannot be an obligation property is by not being a persistence property.

\begin{corollary}\label{nclosure_persistence}
    There exists a system $\TT$, a trace $\pi$, and a persistence $\omega$-regular effect $E$, such that the cause of $E$ on $\pi$ in $\TT$ with similarity relation $\leq^{subset}$ is not a persistence property.
\end{corollary}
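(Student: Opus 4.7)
The plan is to reuse exactly the construction from the proof of Theorem~\ref{nclosure_obligation}: the trivial system $\TT$ over the single input $\{a\}$, the observed trace $\pi = a^\omega$, and the effect $E = (\LTLsquare a) \vee (\LTLdiamond(\neg a \wedge \LTLnext a))$. First I would note that, because obligation properties are precisely the intersection of the recurrence and persistence classes (as recalled just after the definition of the temporal hierarchy), every obligation property is in particular a persistence property. Hence $E$ qualifies as a persistence $\omega$-regular effect, fulfilling the hypothesis of the corollary.

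Next I would invoke Theorem~\ref{nclosure_obligation}, which gives that the cause of $E$ on $\pi$ with similarity $\leq^{subset}$ is the recurrence property $\LTLsquare \LTLdiamond a$. The remaining task is to argue that this cause is not a persistence property. I would do this by contradiction: assume toward a contradiction that the cause lies in the persistence class. Applying Theorem~\ref{hierarchy_preservation}.\ref{thm13} to $E$ (which, as an obligation property, is in particular a recurrence property), the cause is already known to lie in the recurrence class. Under the assumption, the cause would therefore lie in recurrence $\cap$ persistence, which by the characterization above is the obligation class. But Theorem~\ref{nclosure_obligation} explicitly states the cause is not an obligation property, yielding the contradiction.

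The argument is essentially a bookkeeping combination of the two earlier results, so there is no real obstacle beyond checking the class inclusions. The one subtle step to be careful about is the appeal to the identity obligation = recurrence $\cap$ persistence, which the paper asserts immediately after Definition~1 and I would cite explicitly to keep the reasoning transparent.
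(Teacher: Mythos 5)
Your proposal is correct and matches the paper's own argument: the paper likewise reuses the construction of Theorem~\ref{nclosure_obligation}, notes that the cause is a recurrence property by Theorem~\ref{hierarchy_preservation}, and concludes via the identity obligation $=$ recurrence $\cap$ persistence that the cause cannot be a persistence property. Your write-up is simply a more explicit version of the same bookkeeping.
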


Besides the fact that persistence is not closed under causality and consequently under the universal preimage, we want to investigate for how many Borel classes we can find similar counterexamples. Clearly, we must go beyond the $\omega$-regular setting, as $\omega$-regular properties are contained in $\boldsymbol{\Delta_3}$ and are themselves closed under causal inference. 

 Existential and universal preimages are dual to each other, as recurrence ($\boldsymbol{\Pi_2}$) and persistence ($\boldsymbol{\Sigma_2}$) classes. Thus, it suffices to examine the behavior of recurrence properties under existential preimage.

\begin{lemma}\label{omega_power}
    Let $A\subseteq \Sigma^*$ be a finite word language. Then the infinite word language $A^\omega$ is an existential preimage of a recurrence property under a prefix-continuous, prefix-closed, and prefix-compact function.
\end{lemma}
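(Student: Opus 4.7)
The plan is to encode every potential decomposition of $\pi$ into $A$-blocks as a boolean annotation of $\pi$, and to let the existential preimage range over all such annotations. Concretely, I would work over $\Sigma_2 = \Sigma \times \{0,1\}$ and define $f(\pi) := \{ \sigma \in \Sigma_2^\omega \mid \text{the } \Sigma\text{-projection of } \sigma \text{ equals } \pi\}$, so that $f(\pi)$ is exactly the set of all bit-annotations of $\pi$, where the bit at position $i$ is to be interpreted as ``position $i$ is the last letter of an $A$-block''.

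Next, I would verify the three structural requirements on $f$, all of which follow from the componentwise nature of the annotation. Prefix-continuity holds with $m = n$, because $\pref_n(f(\pi))$ consists exactly of the length-$n$ sequences over $\Sigma_2$ whose $\Sigma$-projection is $\pi_{(n)}$, and so depends only on $\pi_{(n)}$. Prefix-compactness is immediate, since $|\pref_n(f(\pi))| = 2^n$. Prefix-closedness is equally direct: any $\sigma' \in \Sigma_2^\omega$ with $\pref(\sigma') \subseteq \pref(f(\pi))$ must have $\sigma'_i$ projecting to $\pi_i$ for every $i$, so $\sigma' \in f(\pi)$.

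For the recurrence target, I would define the finite word language $\Phi \subseteq \Sigma_2^+$ to contain those non-empty words whose last letter carries the bit $1$ and such that, when we cut the word at each position carrying a $1$ (with the first cut starting at position $0$), the $\Sigma$-projection of every resulting factor is a word in $A$. The recurrence property is then $R := \{ \sigma \in \Sigma_2^\omega \mid \pref(\sigma) \cap \Phi \text{ is infinite}\}$.

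The main technical step, and the one that I expect to need the most care, is the verification $f^{-1}_E(R) = A^\omega$. For the forward inclusion, assuming without loss of generality that $\varepsilon \notin A$, I would take $\pi = w^{(1)} w^{(2)} \cdots$ with $w^{(k)} \in A$ and let $\sigma \in f(\pi)$ be the annotation marking exactly the last position of each block $w^{(k)}$; then every prefix of $\sigma$ ending at such a mark lies in $\Phi$, producing infinitely many $\Phi$-prefixes and witnessing $\sigma \in R$. For the reverse inclusion, any $\sigma \in f(\pi) \cap R$ must carry infinitely many $1$-marks, and since $\Phi$-membership at a prefix ending in the $k$-th mark is determined solely by the $\Sigma$-projections of the factors between the earlier marks, the infiniteness of the $\Phi$-prefix set forces every such factor to lie in $A$; their concatenation is $\pi$, giving $\pi \in A^\omega$. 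The delicate point is that $R$ only demands infinitely many good prefixes rather than all of them, but the nested structure of $\Phi$-membership with respect to the fixed mark positions bridges this gap.
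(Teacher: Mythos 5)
Your proposal is correct and follows essentially the same route as the paper: the paper uses a primed copy $\Sigma\cup\Sigma'$ of the alphabet with $f$ nondeterministically priming letters to mark block ends, which is the same construction as your $\Sigma\times\{0,1\}$ annotation, and its recurrence property $(A')^\omega$ (characterized by having infinitely many prefixes in $(A')^*$) coincides with your $R$ built from $\Phi$. Your write-up is in fact somewhat more explicit than the paper's about verifying the three prefix conditions and about why forced cut positions make the "infinitely many good prefixes" condition suffice.
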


\begin{proof}
    Define the alphabet $\Sigma'$ as consisting of primed copies of symbols from $\Sigma$. 

    The language $A'\subseteq (\Sigma \cup \Sigma')^\omega$ consists of words from $A$ in which the last letter is replaced with its primed version. Formally:
    $$A' := \{ \pi_{(|\pi|-1)} \pi_{|\pi|}' \mid \pi \in A  \} \enspace.$$

    Clearly, $(A')^\omega$ is a recurrence property, as a word $\pi$ is in $(A')^\omega$ if and only if it has infinitely many prefixes from $(A')^* $. 

    Define a prefix-continuous, prefix-closed, and prefix-compact function $f: \Sigma^\omega \to (\Sigma\cup \Sigma')^\omega$ for $\pi\in \Sigma^\omega$ as follows:
    $$f(\pi):= \{ \pi' \mid \forall i: \ \pi(i) = \pi'(i) \text{ or } \pi(i)' = \pi'(i)  \} \enspace.$$

    Essentially, $f$ randomly replaces each symbol in $\pi$ with its primed version. Intuitively it tries to split $\pi$ into words from $A$, with primed letters representing the endings of words from $A$. If $f$ can split $\pi$ into the words from $A$, this splitting is in $f(\pi)\cap (A')^\omega$. Hence, $A^\omega = f^{-1}_E((A')^\omega)$.
\end{proof}

\begin{theorem}
    There exists a recurrence property $L$ and a prefix-continuous, prefix-closed, and prefix-compact function $f$, such that $f_E^{-1}(L)$ is not a Borel set. 
\end{theorem}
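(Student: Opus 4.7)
The plan is to apply Lemma~\ref{omega_power} to reduce the theorem to a purely language-theoretic statement. By that lemma, for every language $A \subseteq \Sigma^*$ there exists a recurrence property $L = (A')^\omega$ over an enriched alphabet together with a prefix-continuous, prefix-closed, and prefix-compact function $f$ such that $f_E^{-1}(L) = A^\omega$. Consequently, it suffices to exhibit a single finite word language $A$ whose $\omega$-power $A^\omega$ fails to be a Borel subset of $\Sigma^\omega$: plugging such an $A$ into Lemma~\ref{omega_power} then immediately yields the desired $L$ and $f$.

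To produce such an $A$, I would appeal to known results from the descriptive set theory of $\omega$-languages. Finkel and others have shown that $\omega$-powers of context-free languages climb arbitrarily high in the projective hierarchy; in particular, there exists a context-free (hence recursive) $A \subseteq \Sigma^*$ such that $A^\omega$ is analytic-complete, and therefore not Borel. The standard construction encodes the canonical analytic-complete set --- the set of ill-founded trees on $\mathbb{N}$ (equivalently, the set of codes of non-well-orderings) --- as concatenations of finite code blocks drawn from $A$; the analytic-completeness of ill-foundedness then transfers directly to $A^\omega$. Any $A$ produced this way, fed into the construction of $f$ in Lemma~\ref{omega_power}, yields a recurrence property $L$ whose existential preimage $f_E^{-1}(L) = A^\omega$ lies outside the Borel hierarchy.

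The hard part is not the reduction itself, which is a one-line application of Lemma~\ref{omega_power} once a suitable $A$ is in hand: the three closure conditions on $f$ are already packaged inside the lemma, so no further verification is needed. The real work sits in citing or constructing a non-Borel $\omega$-power. If one prefers a self-contained alternative to citing Finkel's result, one can give an explicit $A$ by directly formalizing a continuous reduction from the set of codes of well-orderings of $\mathbb{N}$ into $A^\omega$, using blocks of $A$ to describe one step of an order-comparison and the $\omega$-concatenation to witness infinite descending chains. Either route completes the proof.
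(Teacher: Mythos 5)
Your proposal is correct and follows essentially the same route as the paper: the paper's proof likewise invokes Lemma~\ref{omega_power} together with Finkel's result that there is a finite word language $A$ whose $\omega$-power $A^\omega$ is not Borel. The additional detail you give about the analytic-completeness construction is accurate but not needed beyond the citation.
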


\begin{proof}
    There exists a finite language $A$, such that $A^\omega$ is not Borel, as shown in \cite{Finkel2003BorelHA}. Therefore, the theorem follows directly from Lemma \ref{omega_power}.
\end{proof}

\section{Complexity}\label{sec:complexity}

In this section, we take a closer look at the complexity of synthesizing causes as automata and study bounds for the size of these automata. We show lower bounds for all property classes, which are the first lower bounds for the problem and witness that the exponential scaling in the algorithm of Finkbeiner et al.~\cite{FinkbeinerFMS24} cannot be avoided. However, we also show that the upper bounds can still be improved for several property classes as logarithmic factors can be avoided. We focus our attention to the case of subset similarity relation $\leq^{subset}$. The presented characterization is precise with respect to the system size, but with respect to the effect size, there remains a minor gap between the lower and upper bounds on the higher levels of the hierarchy.

We heavily use the characterization of causes as downward closed sets of traces satisfying the effect (cf.\ Lemma~\ref{lem:cause}).
For the upper bounds, we present the algorithms that output the set from Lemma \ref{lem:cause} matching a cause when it exists.

\subsection{Lower bounds}

First, we prove that the NBW and NCW complementation problems can be reduced to the cause synthesis problem linearly in the size of the system with persistence or recurrence effects, respectively.

The complement of an NBW can be expressed as a UCW (universal co-Büchi automaton) with the same structure. Hence it can be viewed as a universal quantification over a DCW automaton. Similarly, the complement of an NCW can be viewed as a universal quantification over a DBW automaton.

The trick in the proof is now to interpret the automaton to be complemented as the system in a causal inference tasks. The acceptance condition of the complement automaton can be encoded as a recurrence/persistence property for Büchi and Co-Büchi acceptance, respectively. It remains to ensure that only identical words are related in the similarity relation, such that the universal quantification as described in Lemma~\ref{lem:cause} only ranges over the same word. In the end, a word is then in the cause automaton only if all runs satisfy the complementary acceptance condition. Hence, the language of the cause automaton is exactly the complement of the original automaton language. Technical details of this construction are provided in the proof of the following lemma.

\begin{lemma}\label{NBW_NCW_lemma}
    For every NBW (NCW) $A$ there exists a system $\TT$ with $|A|+1$ states, an effect $E$ represented by a DCW (DBW) of constant size and a trace $\pi$ of constant size, such that if the cause of $E$ on $\pi$ in $\TT$ is expressed as a NBW $C$ then there exists a NBW for the complement of $L(A)$ of the size $\OO(|C|)$.
\end{lemma}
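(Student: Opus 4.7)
The plan is to reduce NBW and NCW complementation to the cause synthesis problem by interpreting the given automaton $A$ directly as the system $\TT$, adding a single sink state reachable by a fresh input symbol, and tailoring the labelling of this sink so that the universal condition defining the cause collapses to the desired complement-acceptance condition on the relevant input sequences.

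More concretely, I would define $\TT$ with states $Q \cup \{s_{\mathrm{sink}}\}$, initial state $q_0$, inputs $I = \Sigma \cup \{\bot\}$ for a fresh symbol $\bot \notin \Sigma$, and a single output $o$. Transitions from $q \in Q$ on $\{a\}$ mirror those of $A$ on the letter $a$; transitions on $\{\bot\}$ send any state to $s_{\mathrm{sink}}$, which is absorbing on every input; all other inputs have no transition. Labels mark $A$'s accepting states via $o$. In the NBW case I would set $l(s_{\mathrm{sink}}) = \emptyset$ and take $E = \F \G \neg o$, a persistence property recognized by a $2$-state DCW; in the NCW case I would set $l(s_{\mathrm{sink}}) = \{o\}$ and take $E = \G \F o$, a recurrence property recognized by a $2$-state DBW. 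The observed trace is $\pi|_I = \{\bot\}^\omega$, whose unique continuation $q_0\, s_{\mathrm{sink}}^\omega$ is ultimately periodic and of constant representation.

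The core step would analyze the cause restricted to pure-letter inputs $\rho \in \{\{a\} : a \in \Sigma\}^\omega$. A direct symmetric-difference calculation with $\pi_k|_I = \{\bot\}$ shows that every $\sigma \leq^{\mathrm{subset}}_\pi \rho$ that is a trace of $\TT$ satisfies $\sigma_k \in \{\rho_k, \{\bot\}\}$ at every position; such a $\sigma$ therefore either runs $A$ on $\rho$ in lockstep, or branches at some step into the absorbing sink. The sink's labelling is chosen so that any sink-branching $\sigma$ trivially satisfies $E$: in the NBW case there are no $o$-visits past the sink, validating $\F \G \neg o$; in the NCW case $o$ holds at every position past the sink, validating $\G \F o$. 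The only non-trivial contribution to the universal condition in Lemma~\ref{lem:cause} therefore comes from $\sigma = \rho$ itself, which encodes a single run of $A$ on $\rho$, and requiring $\sigma \in E$ for every such $\sigma$ is exactly $\rho \in \overline{L(A)}$ in both the NBW (all runs visit $F$ finitely) and NCW (all runs visit $F$ infinitely) cases.

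To conclude I would extract the desired NBW for $\overline{L(A)}$ from $C$ by intersecting $C$ with a constant-size automaton enforcing that every input is a singleton $\{a\} \subseteq \Sigma$; the resulting NBW has size $\OO(|C|)$ and, under the bijection $\{a\} \leftrightarrow a$, recognizes precisely $\overline{L(A)}$. The main technical obstacle is that the subset similarity does not in general force $\sigma|_I = \rho$: ``chopped'' traces that select $\{\bot\}$ at some positions always lie in the downward neighborhood of $\rho$. The crux of the construction is exploiting the sink's tailored labelling to make these chopped traces vacuously satisfy $E$, so that the universal condition from Lemma~\ref{lem:cause} reduces to the single requirement $\rho \in \overline{L(A)}$.
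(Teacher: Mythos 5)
Your reduction is correct and follows the same overall strategy as the paper's proof: interpret $A$ as the system plus an absorbing sink, expose acceptance through a single output $o$, take a constant-size persistence (resp.\ recurrence) effect $\LTLdiamond\LTLsquare\neg o$ (resp.\ $\LTLsquare\LTLdiamond o$) so that sink-bound traces satisfy the effect vacuously, and read $\overline{L(A)}$ off the universal quantification in Lemma~\ref{lem:cause}. Where you genuinely diverge is in the two technical devices that make that quantification range over exactly the runs of $A$ on the given word. The paper keeps the observed trace at $\emptyset^\omega$ and therefore must (i) encode each letter as a $k/2$-element subset of $\{i_1,\dots,i_k\}$ --- an antichain, so that no more-similar trace can silently switch to a different letter --- and (ii) resolve $A$'s nondeterminism through extra input bits $j_1,\dots,j_{\lceil\log n\rceil}$ naming the successor state, which makes the system deterministic in its inputs. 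You instead move the observed trace to $\{\bot\}^\omega$ for a fresh input $\bot$, so that under $\leq^{\mathit{subset}}$ a trace below $\rho$ can only drop the current letter in favour of $\bot$ and never substitute a different one, and you keep the system nondeterministic, letting the $\forall\sigma\in\traces(\TT)$ quantifier of Lemma~\ref{lem:cause} enumerate the runs directly. Both devices are sound --- the characterization does quantify over all traces, hence over all nondeterministic resolutions of a fixed input word --- and yours is arguably the more elementary realization. Three small points to tidy up: make $\delta$ total by routing the unused input sets (e.g.\ $\emptyset$ and $\{\bot,a\}$) to the sink, as the paper does, rather than leaving them without transitions; note that your input set $I=\Sigma\cup\{\bot\}$ is linear rather than logarithmic in $|\Sigma|$, which is harmless for the lemma as stated (only the state count $|A|+1$ is bounded) but departs from the paper's polynomially-sized induced alphabet $2^I$; and the phrase ``the only non-trivial contribution comes from $\sigma=\rho$ itself, which encodes a single run'' is loose, since there is one such trace per run of $A$ on $\rho$, though your conclusion correctly requires $E$ of all of them.
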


\begin{proof}

Denote  $A = ( Q, \Sigma, q_0, F, \Delta )$, with $\Sigma = \{1,\dots,m\}$ and $|Q| = n$. Let us define a system $\TT= (S,s_0, AP, \delta, l)$, where $AP = I \bigcup O$.
$$I := \{ i_1,\dots , i_k, j_1,\dots, j_{\lceil \log( n) \rceil} \}.$$
Here $k$ is a minimal integer such that $\binom{k}{k/2} \geq m$. Outputs and states are defined as follows.

$$O := \{ o\}, \enspace S := Q \cup \{ s_{\top}\}, s_0 := q_0.$$
To define $\delta$ at first we need to encode pairs $q,\sigma \in Q\times \Sigma$ as elements from $2^I$. We fix an injective function $\mathit{enc}_Q$ from $Q$ to the set of subsets of $\{ j_1,\dots,j_{\lceil \log( n) \rceil} \}$. 
$$\mathit{enc}_Q: Q \to \mathcal{P}(j_1,\dots,j_{\lceil \log( n) \rceil}).$$
Additionally, we fix an injective function $\mathit{enc}_\Sigma$ from $\Sigma$ to the set of subsets of $\{ i_1,\dots , i_k \}$ of the size $k/2$. The number of such subsets is $\binom{k}{k/2} \geq m$ by choice of $k$, hence such injective function exists.
$$\mathit{enc}_\Sigma: \Sigma \to \{ V\subseteq \{ i_1,\dots , i_k \} \mid  \ |V| = k/2\}$$
Please note that for every two different $\sigma,\sigma' \in \Sigma$ subsets $\mathit{enc}_\Sigma(\sigma)$ and $\mathit{enc}_\sigma(\sigma')$ are incomparable.
\begin{equation}
\forall \sigma,\sigma' \in \Sigma: \sigma\neq \sigma' \Rightarrow \mathit{enc}_\Sigma(\sigma) \not\subseteq \mathit{enc}_\Sigma(\sigma'). \tag*{(*)} \label{incomparability}
\end{equation}
We define the transition function $\delta$ as follows.
$$\delta(q,\mathit{enc}_Q(q') \cup \mathit{enc}_\Sigma(\sigma)) := \begin{cases}
    q', \ \ \ \ \text{if } q'\in \Delta(q,\sigma), \\
    s_\top, \ \ \  \text{otherwise.}
\end{cases}$$
For every $q\in Q$ and $W\subseteq I$ which cannot be presented as $\mathit{enc}_Q(q') \cup \mathit{enc}_\Sigma(\sigma)$ for $q'\in Q$ and $\sigma\in \Sigma$:

$$\delta(q,W) := s_\top \enspace\text{ and }\enspace \delta(s_\top, W) := s_\top \enspace .$$
If $A$ is an NBW labeling $l_{\mathit{NBW}}$ is defined as follows:
$$l_{\mathit{NBW}}(s) := \begin{cases}
    \{o\} \ \ \ \ \text{if } s\in F \enspace ,\\
    \emptyset \ \ \ \ \ \ \ \ \text{if } s\in Q\setminus F \cup \{ s_\top\} \enspace .
\end{cases}$$
If $A$ is an NCW the labeling $l_{\mathit{NCW}}$ is defined differently:
$$l_{\mathit{NCW}}(s) := \begin{cases}
    \{o\} \ \ \ \ \text{if } s\in Q\setminus F \enspace ,\\
    \emptyset \ \ \ \ \ \ \ \ \text{if } s\in F \cup \{ s_\top\}\enspace .
\end{cases}$$
Trace $\pi = \emptyset^\omega$ does not depend on the automaton. We define the effect depending on whether $A$ is an NBW or an NCW:
$$E_{\mathit{NBW}} := \LTLdiamond \LTLsquare \neg o \enspace\text{ or }\enspace E_{\mathit{NCW}} := \LTLsquare \LTLdiamond \neg o \enspace .$$
For a word $\sigma \in \Sigma^\omega$ we denote the encoding of $\sigma$ with $2^I$ as $\mathit{Enc}_I(\sigma)\in (2^I)^\omega$. For every $k$ the $k$-th letter of $\mathit{Enc}_I(\sigma)$ is defined as follows:
$$\mathit{Enc}_I(\sigma)_k = \mathit{enc}_\Sigma(\sigma_k) \cup \{j_1, \dots, j_{\lceil log( n) \rceil} \} \enspace .$$

\noindent\emph{Claim:} For every word $\sigma \in \Sigma^\omega$: $\sigma\in \overline{A} \Longleftrightarrow \mathit{Enc}_I(\sigma)\in C.$

\medskip
\emph{The first direction:} $\sigma\in \overline{A} \Longrightarrow \mathit{Enc}_I(\sigma)\in C.$ Assume a word $\sigma \in \overline{A}$. All runs of $A$ on $\sigma$ must be rejecting, hence they must visit $F$ finitely many times if $A$ is an NBW or infinitely many times if $A$ is an NCW.

Assume a trace $\pi' \in (2^{I\cup O})^\omega$, such that $\pi'\leq^{subset}_\pi \mathit{Enc}_I(\sigma)$. If while producing $\pi'$ the system $\TT$ visits state $s_\top$, then $\pi' $ satisfies $E$. 
Suppose while producing $\pi'$ system  $\TT$ does not visit $s_\top$. Thus, by the  definition of the transition system for every $k$: $\pi'_k (I) = \mathit{enc}_Q(q_k)\cup \mathit{enc}_\Sigma(\sigma_k)$, where $\{q_j\}_{j\in \NN}$ is a run of $A$ on $\sigma$. Please note that the $\Sigma$ part of $\pi'$ in this case is the encoding of $\sigma$ and not an encoding of any other trace from $\Sigma^\omega$, by the fact that $\pi' \leq^{subset}_\pi \mathit{Enc}_I(\sigma)$ and the property of encoding \ref{incomparability}.

As we noted before, $\{q_j\}_{j\in \NN}$ must be rejecting, hence it cannot visit $F$ infinitely many times in the case of NBW or it must visit $F$ infinitely many times in the case of NCW. Thus $\pi'$ satisfies $E$. We proved that $\mathit{Enc}_I(\sigma) \in C$, since for any $\pi' \leq^{subset}_\pi \mathit{Enc}_I(\sigma)$: $\pi'\in E$.

\medskip
\emph{The second direction:} $\sigma\in \overline{A} \Longleftarrow \mathit{Enc}_I(\sigma)\in C.$ Assume a word $\sigma \in \Sigma^\omega$ such that $\mathit{Enc}_I(\sigma)\in C$. Let us take a run $\{q_k\}_{k\in\NN}$ of $A$ on $\sigma$. Let us define the word $\pi' \in (2^I)^\omega$ as follows. For every $k$:
$$\pi'_k := \mathit{enc}_Q(q_k) \cup \mathit{enc}_\Sigma (\sigma_k) \enspace .$$
By the definition of $\leq^{subset}$ we get $\pi'\leq^{subset}_\pi \mathit{Enc}_I(\sigma)$. Hence, $\TT(\pi') $ satisfies $E$ by the definition of the cause, since $\mathit{Enc}_I(\sigma)\in C$. Hence, by the definition of $E$ $\{q_k\}_{k\in\NN}$ does not visit $F$ infinitely often in the case of NBW or visits $F$ infinitely often in the case of NCW, thus it is a rejecting run of $A$. Therefore, $\sigma \in \overline{A}$ since every run of $A$ on $\sigma$ is rejecting.
\end{proof}

Similarly, we provide the linear encoding of the NFW complementation problem into the cause synthesis problem with safety or reachability effect.

\begin{lemma} \label{NFW_lemma}
    For every NFW $A$ there exists a system $\TT$ with $|A|+1$ states, an effect $E$ in the form of a safety (or 
    reachability) DBW of constant size and a trace $\pi$ of constant size, such that if the cause of $E$ on $\pi$ in $\TT$ expressed as a NBW $C$ then there exists a NFW for the complement of $L(A)$ of the size $\OO(|C|)$.
\end{lemma}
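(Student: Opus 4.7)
The plan is to adapt the construction of Lemma~\ref{NBW_NCW_lemma} with an additional halt input $h$ that simulates the end-of-word semantics of NFW acceptance. I will present the reachability variant in detail; the safety variant follows symmetrically by dualizing the labels and halt transitions. Keep the state space $Q \cup \{s_\top\}$ and the encodings $\mathit{enc}_Q$ and $\mathit{enc}_\Sigma$, extend the input propositions by a fresh proposition $h$, and preserve the transitions on non-$h$ inputs exactly as in the previous proof. On any input containing $h$, conditionally halt: from $q \in F$ self-loop back to $q$, while from $q \notin F$ move to the sink $s_\top$. Label only the sink, $l(s_\top) := \{o\}$ and $l(q) := \emptyset$ for $q \in Q$; take the reachability effect $E := \LTLdiamond o$ (a constant-size DBW) and the observed trace $\pi := \{o\}^\omega$, which is realized by the empty input sequence routing $q_0$ to $s_\top$ at the first step and remaining there. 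Since $\pi \models E$, the cause $C$ is non-empty and hence well-defined by Lemma~\ref{lem:cause}.

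For each $w \in \Sigma^*$ define $\sigma(w)_k := \mathit{enc}_\Sigma(w_k) \cup \{j_1, \ldots, j_{\lceil \log(n) \rceil}\}$ for $k < |w|$ and $\sigma(w)_k := \{h, j_1, \ldots, j_{\lceil \log(n) \rceil}\}$ for $k \geq |w|$. The central claim, proved along the lines of Lemma~\ref{NBW_NCW_lemma}, is that $\sigma(w) \in C$ iff $w \in \overline{L(A)}$. By the incomparability property of $\mathit{enc}_\Sigma$, every subset-similar input $\pi'' \leq^{\mathit{subset}}_\pi \sigma(w)$ either triggers an invalid encoding and is routed to $s_\top$ (thereby already satisfying the reachability effect), or exactly matches the $\Sigma$-part of $\sigma(w)$ and, through the choice of active $j_k$-bits, faithfully tracks a single run of $A$ on $w$ that is then halted from position $|w|$ onwards. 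In the latter case, the halt transitions steer the selected run to $s_\top$ precisely when it ends in $Q \setminus F$, so $E$ holds for every subset-similar trace iff every run of $A$ on $w$ ends outside $F$.

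To convert the NBW $C$ into an NFW $B$ of size $\mathcal{O}(|C|)$ accepting $\overline{L(A)}$, let $B$ reuse the state space and initial state of $C$, define $B$'s transition on a letter $a \in \Sigma$ to coincide with $C$'s transition on the single $2^I$-letter $\mathit{enc}_\Sigma(a) \cup \{j_1, \ldots, j_{\lceil \log(n) \rceil}\}$, and mark as accepting exactly those states of $C$ from which $C$ has an accepting run on the $\omega$-word $\{h, j_1, \ldots, j_{\lceil \log(n) \rceil}\}^\omega$---a set computable in polynomial time in $|C|$ by standard NBW analysis. Then $B$ accepts $w$ iff $C$ has an accepting run on $\sigma(w)$, iff (by the central claim) $w \in \overline{L(A)}$; the number of states of $B$ equals that of $C$. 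The main obstacle is the case analysis behind the key equivalence: one must verify that no subset-similar trace of $\sigma(w)$ escapes the intended ``run on $w$, then test the endpoint at position $|w|$'' behaviour through partial $\Sigma$-decodings, alternative choices of $j_k$-bits, or delayed halts that detour through invalid inputs and the sink. Once this characterization is established, the extraction of $B$ is a direct adaptation of the corresponding step in Lemma~\ref{NBW_NCW_lemma}.
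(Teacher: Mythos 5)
Your proposal is correct and follows essentially the same route as the paper: the paper likewise introduces a dedicated end-of-word marker (a fresh alphabet symbol $\#$ rather than your input proposition $h$), reuses the construction of Lemma~\ref{NBW_NCW_lemma} to track all runs of $A$, lets the effect/labeling test whether the state reached at the first marker occurrence is accepting, and extracts the NFW by declaring a state of $C$ accepting iff $C$ accepts the marker-suffix $\omega$-word from it. The only substantive difference is cosmetic bookkeeping -- you fold the accept/reject test into the transition function and a labeled sink with effect $\LTLdiamond o$, whereas the paper keeps the $F$-labeling and lets the constant-size effect observe the marker together with the output; for the safety variant your ``dualize the labels'' remark is best realized by letting the effect mention $h$ (e.g.\ $\LTLsquare\lnot(o \land h)$ with $F$-states labeled $o$), since a single sink cannot serve both as the labeled target for accepting halts and the unlabeled target for invalid encodings.
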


\begin{proof}[Proof]
    Similar to Lemma~\ref{NBW_NCW_lemma}. We add to the alphabet of $A$ one additional symbol $\#$, which represents the end of the word. Then we construct $\TT$,$E$, and $\pi$ in the same way as we did in the Lemma~\ref{NBW_NCW_lemma} to track all possible executions of $A$. The only difference is that now $E$ detects if the last state that appeared before the first occurrence of $\#$ was accepting in $A$ or not.

    Please note that $E$ in this case can be a safety or reachability DBW since we can either accept or reject words without occurrences of $\#$. If $C$ is the NBW for the cause, we can easily turn it into NFW for $\overline{A}$ just by calling a state accepting if $C$ accepts $\#^\omega$ from this state.
\end{proof}

Using the provided reductions of complementation problems we derive the lower bounds of the cause size with respect to the system $\TT$ for different temporal classes from the well-known lower bounds on the automata complementation problems.   

\begin{theorem}\label{lower_system}
The following states lower bounds for the cause automaton with respect to the size of the system.
    \begin{enumerate}
        \item\label{thm51} An NBW for the cause of a persistence effect in a system $\TT$ requires at least $2^{\Omega(|\TT| log |\TT|)}$ states in the worst case.
        
        \item\label{thm52} An NBW for the cause of a recurrence effect in a system $\TT$ requires at least $\Omega(3^{|\TT|})$ states in the worst case.

        \item\label{thm53} An NBW for the cause of a safety or guarantee effect in a system $\TT$ requires at least $\Omega(2^{|\TT|})$ states in the worst case.
    \end{enumerate}
\end{theorem}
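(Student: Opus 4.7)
The plan is to transport three well-known automaton complementation lower bounds to the cause-synthesis setting via the two reductions already proved in Lemma~\ref{NBW_NCW_lemma} and Lemma~\ref{NFW_lemma}. Both lemmas produce, from an input automaton $A$ of size $n$, a system $\TT$ of size $n+1$, a constant-size effect of the required class, and a constant-size observed trace, such that any NBW recognizing the resulting cause has size within a constant factor of the smallest NBW/NFW recognizing $\overline{L(A)}$. Hence any asymptotic lower bound on complementing $A$ of the appropriate type yields the same asymptotic lower bound on the cause automaton measured in $|\TT|$.

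For part~\ref{thm51}, I would instantiate Lemma~\ref{NBW_NCW_lemma} in its NBW direction with the persistence effect $E_{\mathit{NBW}} = \LTLdiamond \LTLsquare \neg o$. Michel's classical lower bound shows that complementing an NBW with $n$ states requires $2^{\Omega(n \log n)}$ states in the worst case; pulling this bound through the linear reduction gives the claimed $2^{\Omega(|\TT|\log|\TT|)}$ lower bound on the cause automaton for persistence effects.

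For part~\ref{thm52}, I would apply the NCW direction of Lemma~\ref{NBW_NCW_lemma} together with the recurrence effect $E_{\mathit{NCW}} = \LTLsquare \LTLdiamond \neg o$, and invoke the known $\Omega(3^n)$ lower bound on the NBW-size of the complement of an $n$-state NCW (stemming from the Miyano--Hayashi construction being tight). Combined with the linear blow-up of the reduction, this yields $\Omega(3^{|\TT|})$. For part~\ref{thm53}, I would apply Lemma~\ref{NFW_lemma} with a constant-size safety (respectively guarantee) effect and compose it with the folklore $\Omega(2^n)$ lower bound for NFW complementation (tightness of the subset construction), obtaining $\Omega(2^{|\TT|})$.

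The main obstacle is essentially bookkeeping rather than a new mathematical insight: one must verify that the constants hidden in the reductions are indeed $O(1)$ in $|\TT|$ (so that $|A|+1$ really tracks $|\TT|$ up to a constant factor), that the NBW produced for the cause in each case is not smaller than the complement automaton of $A$ (this is exactly the content of the two lemmas above), and that the chosen effects are of constant size independent of $A$. Once these points are checked, the three lower bounds follow from their classical counterparts by direct substitution, and no additional combinatorial construction is needed.
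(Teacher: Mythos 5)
Your proposal matches the paper's proof essentially verbatim: the paper likewise derives each of the three bounds by composing the linear reductions of Lemma~\ref{NBW_NCW_lemma} (NBW direction with $E_{\mathit{NBW}}$ for persistence, NCW direction with $E_{\mathit{NCW}}$ for recurrence) and Lemma~\ref{NFW_lemma} (for safety/guarantee) with the classical $2^{\Omega(n\log n)}$, $\Omega(3^n)$, and $\Omega(2^n)$ complementation lower bounds. No gaps; your bookkeeping remarks about constant-size effects and the linear size overhead are exactly what the cited lemmas already guarantee.
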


\begin{proof} 1) For an NBW with $n$ states, an NBW for the complement requires at least $2^{\Omega(nlogn)}$ states in the worst case~\cite{Vardi2007TheBC}. By Lemma \ref{NBW_NCW_lemma} NBW complementation can be reduced to cause synthesis with persistence effect linearly in the size of $\TT$.
    
    2) For an NCW with $n$ states, an NBW for the complement requires at least $\Omega(3^n)$  states in the worst case~\cite{Boker2010AlternationRI}. By Lemma~\ref{NBW_NCW_lemma}, NCW complementation can be reduced to cause synthesis with a recurrence effect linearly in the size of the system $\TT$.

    3) For an NFW with $n$ states, an NFW for the complement requires at least $\Omega(2^{n})$ states in the worst case~\cite{Hopcroft1979IntroductionTA}. By Lemma~\ref{NFW_lemma}, NFW complementation can be reduced to cause synthesis with a safety or guarantee effect linearly in the size of the system.
\end{proof}

We turn our focus to the complexity with respect to the effect $E$. We prove a doubly exponential lower bound. For that purpose for every $n\in \NN$, we define a language that is the cause of an effect of size $\OO(n)$, while any NBW recognizing this language requires at least $2^{2^{\Omega(n)}}$ states.

For $n\in \NN$ and $\pi \in \Sigma^*$ let us denote as $subword_n(\pi)$ the set of words of length $n$ that appear in $\pi$ from a position divisible by $n$.
$$subword_n(\pi) := \{ \pi_{kn},\pi_{kn+1},\dots, \pi_{(k+1)n-1} \mid k\in \NN \}$$

Let us define a finite word language $L_n \subseteq \{ 0,1,\#\}^*$ for $n\in \NN$ as follows.
$$L_n := \{ \pi \in \{ 0,1,\#\}^* \mid \forall w\in \{ 0,1\}^n , \ w\in subword_n(\pi)\}$$
In other words, $L_n$ consists of words $\pi$ such that $\{ 0,1\}^n \subseteq subword_n(\pi)$. We prove the doubly exponential lower bound on the NBW which recognizes $L_n$.

\begin{lemma}
    An NFW for $L_n$ requires at least $2^{2^{\Omega(n)}}$ states.
\end{lemma}

Now we show that $L_n$ is the cause of an effect of size $\OO(n)$. The idea uses that a word $\pi$ belongs to $L_n$ if $subword_n(\pi)$ contains every word $w\in \{ 0,1\}^n$. 

For any single $w$ consider the infinite word $w^\omega$. For such a word there exists a position $kn$ such that $\pi$ and $w^\omega$ coincide for $n$ consecutive symbols starting at $kn$. This position marks where $w$ appears in $\pi$. 
    
To verify this, an automaton with $\OO(n)$ states suffices.  It tracks the position modulo $n$ and nondeterministically selects $kn$ as the starting point where $\pi$ and $w^\omega$ coincide. Hence, universally quantifying this check over all $w$ we get a construction that recognizes $L_n$.

\begin{lemma}
    For every $n$ there is a system $\TT$ and a trace $\pi_\emptyset$ of constant sizes and a safety (or reachability) effect of size $\OO(n)$, such that if the cause of $E$ on $\pi_\emptyset$ in $\TT$ expressed as a NBW $C$ then there exists a NFW for $L_n$ with $\OO(|C|)$ states.
\end{lemma}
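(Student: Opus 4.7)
The plan is to encode $L_n$ as a cause by leveraging the universal quantification over system traces in Lemma~\ref{lem:cause}. Following the sketch that precedes the lemma, for each single $w \in \{0,1\}^n$ the question ``does $w$ appear as a block of length $n$ at a position divisible by $n$?'' is checkable by an $\OO(n)$-state automaton that tracks position modulo $n$ and nondeterministically guesses a block start at which the trace coincides with $w^\omega$; the goal is to make the universal quantifier $\forall w$ emerge from the $\forall \sigma \in \traces(\TT)$ in the characterization of the cause.

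To that end, I would take $I$ to contain symbols encoding both (a) the ``trace'' letters in $\{0,1,\#\}$ and (b) an auxiliary ``witness'' channel carrying a word $w \in \{0,1\}^n$ together with a starting block index $k$, analogously to the bit-level encoding used in Lemma~\ref{NBW_NCW_lemma}. Take $\pi_\emptyset = \emptyset^\omega$, so that under $\leq^{\mathit{subset}}_{\pi_\emptyset}$ a system trace $\sigma$ is related to a candidate $\rho$ exactly when $\sigma_i \subseteq \rho_i$ at every position. The system $\TT$ is built to admit those $\sigma$ that carry a single consistent witness $(w,k)$ on the auxiliary channel together with an arbitrary $\{0,1,\#\}$ trace on the main channel. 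The effect $E$ is an $\OO(n)$-state safety (or reachability) property saying: ``if $\sigma$ commits on the witness channel to $(w,k)$, then the main-channel letters of $\sigma$ at positions $kn,\ldots,kn+n-1$ spell out $w$.'' This is essentially the automaton from the preceding sketch, with $w$ read off from $\sigma$ rather than being hard-coded.

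With this setup, $\rho \in C$ asks that every $\sigma \leq^{\mathit{subset}}_{\pi_\emptyset} \rho$ in $\TT$ satisfies $E$. If $\rho$'s witness channel is made ``maximal'' so that every $(w,k)$ is permitted as a subset, then the subordinate $\sigma$'s range over all possible witnesses, and so $\rho \in C$ iff for every $w \in \{0,1\}^n$ the main-channel projection of $\rho$ contains $w$ at some block position. Restricting to main-channel projections of the form $\pi \cdot \#^\omega$, this says exactly $\pi \in L_n$; given an NBW $C$ of size $m$ for the cause, an NFW for $L_n$ of size $\OO(m)$ is then obtained by projecting away the witness alphabet and declaring as accepting those states from which $\#^\omega$ stays in $L(C)$, matching the pattern of Lemma~\ref{NFW_lemma}.

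The main obstacle I anticipate is calibrating the alphabet and $\TT$ so that the subset quantifier ranges over \emph{exactly} $\{0,1\}^n \times \NN$ (no spurious checks, no missed ones), using an incomparability trick for the witness encoding along the lines of property~(*) in Lemma~\ref{NBW_NCW_lemma}; a secondary worry is keeping $E$ genuinely safety (or reachability) and of size $\OO(n)$ despite having to decode an arbitrary witness, which works because once $w$ has been absorbed into $E$'s state in the first $n$ steps, the subsequent check needs only a mod-$n$ counter plus the remembered $w$, yielding the desired $\OO(n)$ bound.
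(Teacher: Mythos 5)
Your high-level strategy is the right one and matches the paper's: check a single witness $w\in\{0,1\}^n$ with an $\OO(n)$-state effect and let the universal quantifier over system traces in Lemma~\ref{lem:cause} supply the ``$\forall w$''. But as described, the construction has two genuine gaps. First, a quantifier misplacement: you put the pair $(w,k)$ on the witness channel and make the effect say ``if $\sigma$ commits to $(w,k)$ then $w$ appears at block $k$''. The cause condition then universally quantifies over \emph{both} $w$ and $k$, so $\rho\in C$ would require every $w$ to appear at \emph{every} block position --- impossible for $n\geq 1$ --- rather than the intended ``for every $w$ there exists a block $k$''. The existential over the block position must live \emph{inside} the effect (which is fine, since a guarantee/reachability effect is exactly an existential over positions), and only $w$ may be carried by the trace quantifier. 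This is precisely how the paper arranges it: its $E_3$ is of the form ``$\exists k$: the witness and the main channel agree for $n$ steps from position $kn$''. Second, your size bound for $E$ fails: you propose that $w$ is ``absorbed into $E$'s state in the first $n$ steps'' and then remembered, but storing an arbitrary $w\in\{0,1\}^n$ in the state costs $2^n$ states, not $\OO(n)$. The paper avoids this by presenting the witness \emph{positionally} as a periodic sequence $w^\omega$ on a separate channel and having the effect merely compare two channels letter by letter with a mod-$n$ counter; the clause $E_2$ (``the witness channel fails to be $n$-periodic'') is an escape hatch that discharges all subordinate traces whose witness channel does not encode some $w^\omega$.

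A further, more minor divergence: the paper carries the witness on the \emph{output} $o$ rather than on an auxiliary input. Since $\leq^{\mathit{subset}}$ constrains only inputs and the system is trivial, the subordinate traces' outputs are entirely unconstrained, so the quantification over witnesses comes for free --- no ``maximal'' witness channel on $\rho$ and no incomparability trick in the style of property~(*) of Lemma~\ref{NBW_NCW_lemma} is needed; the only degenerate case on the input side is handled by $E_1$. Your input-channel variant could probably be repaired along these lines, but you would still need to restructure the effect as $E_1\cup E_2\cup E_3$ (or an analogue) to fix the two issues above before the claim $\sigma\in L_n \Longleftrightarrow \mathit{Enc}_I(\sigma)\in C$ goes through.
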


\begin{proof}
    Let us define an input alphabet consisting of four letters $I = \{i_0,i_1,i_\#,i_* \}$ and an output alphabet consisting of one letter $O = \{ o\}$. The system is trivial and models every trace over the alphabet. The trace is the trivial trace without enabled atomic propositions: $\traces(\TT) := (2^{I\cup O})^\omega, \enspace \pi_\emptyset
    := \emptyset^\omega$.
    The effect $E$ is defined as the union $E:= E_1 \cup E_2 \cup E_3.$
    The first part $E_1$ requires that at some position all input variables $i_0,i_1,i_\#, i_*$ become false.
    \begin{align*}
        E_1 := \{\pi \in (2^{I\cup O})^\omega \mid \exists k : &\neg(\pi_k(i_0) \vee \pi_k(i_1)\\ &\vee \pi_k(i_\#) \vee \pi_k(i_*))) \}
    \end{align*}
    $E_2$ requires that the output part of the trace does not repeat every $n$ states, in other words, there is a position such that after $n$ positions the output variable takes a different value. 
    $$E_2 := \{\pi \in (2^{I\cup O})^\omega | \exists k :\pi_k(o) \neq \pi_{k+n}(o)\}$$
    $E_3$ requires that from some position divisible by $n$ before $i_*$ becomes true for the first time variables $i_1$ and $o$ take the same value $n$ steps in the row.
    \begin{align*}
        E_3 := \{&\pi \in (2^{I\cup O})^\omega | \exists k \ \forall l<kn \ \forall j<n: \\
        &\neg \pi_{l}(i_*) \wedge( \pi_{kn+j}(i_1) = \pi_{kn+j}(o)) \}
    \end{align*}
    Since $E$ must remember only the number of the steps modulo $n$ it can be modeled by an NBW with $n$ states which nondeterministically decides on which step $E_2$ or $E_3$ is satisfied.

    For a word $\sigma \in \{0,1,\# \}^*$ we denote the encoding of $\sigma$ with $2^I$ as $\mathit{Enc}_I(\sigma)\in (2^I)^\omega$. For every $k$ the $k$-th letter of $\mathit{Enc}_I(\sigma)$ is defined as follows:
    $$\mathit{Enc}_I(\sigma)_k = \begin{cases}
        \{ i_c \}  & \text{if} \ k<|\sigma| \text{ and } \sigma_k = c \enspace,\\ 
        \{ i_*\} & \text{if} \ k \geq |\sigma| \enspace .
    \end{cases}$$
    
    \noindent\emph{Claim:} $\forall \sigma \in \{ 0,1,\#\}^*$: $\sigma\in L_n \Longleftrightarrow \mathit{Enc}_I(\sigma)\in C.$
    
    Assuming the Claim, given a NBW for the cause $C$ an NFW of size $\OO(|C|)$ for $L_n$ can be easily constructed by d combining $C$ with encoding $\mathit{Enc}_I$ and defining accepting states as the states from which $C$ accepts $\{ i_*\}^\omega$. The rest of the proof is devoted to proving the claim.
    
    \medskip
    \emph{The first direction}: $\sigma\in L_n \Longrightarrow \mathit{Enc}_I(\sigma)\in C$. Let $\sigma$ be a word from $L_n$. Let us prove that $\mathit{Enc}_I(\sigma) \in C$. For that, we need to prove that for every $\pi'\leq^{subset}_{\pi_\emptyset} \mathit{Enc}_I(\sigma)$: $\pi'\in E$. If $\pi'_I \neq \mathit{Enc}_I(\sigma)$, then by the definition of $\leq^{subset}$ and the encoding for some $k$: $$\pi'_k(i_0)=\pi'_k(i_1)=\pi'_k(i_\#)=\pi'_k(i_*) = \bot \text{, hence }\pi' \in E_1.$$ 

    If $\pi'(o)$ does not repeat every $n$ states $\pi'\in E_2$: Assume that $\pi' \notin E_1\cup E_2$. Then the prefix ${\pi'(o)}_{(n)} $ repeats in $\pi'(o)$ every $n$ states. Since $\sigma \in L_n$ the word  ${\pi'(o)}_{(n)}$ considered as a word over $\{0,1 \}$ must appear in $\sigma$ from the position $kn$ for some $k$. Hence, $\pi'(o)$ and $\mathit{Enc}_I(\sigma)(i_1)$ take the same value $n$ steps in the row from position $kn$. Since $\pi' \notin E_1$, we can conclude that $\mathit{Enc}_I(\sigma(i_1)) = \pi'(i_1)$. Thus, $\pi'\in E_3$.

    \medskip
    \emph{The second direction}: $\sigma\in L_n \Longleftarrow \mathit{Enc}_I(\sigma)\in C$. Assume that $\mathit{Enc}_I(\sigma) \in C$. Let us prove that $\sigma\in L_n$. For that, we need to prove that for every $w\in \{ 0,1\}^n$: $w\in subword_n(\pi)$.   

    Consider $\pi' \in (2^{I\cup O})^\omega $ such that $\pi'_I = \mathit{Enc}_I(\sigma)$ and for every $k$: $\pi'_k(o) = w_{k \% n}$, where $k\%n$ is $k$ modulo $n$. Obviously $\pi'\leq^{subset}_{\pi_\emptyset} \mathit{Enc}_I(\sigma)$, hence $\pi'\in E$. Moreover, $\pi'\in E_3$, since $\pi'\notin E_1\cup E_2$.

    Hence, there exists $k$, such that $\pi'(o)$ and $\pi'(i_1)$ take the same value $n$ steps in a row from the step $kn$. Thus, $w$ appears in $\pi$ from the position $kn$.
\end{proof}

The lower bound in the size of the effect then immediately follows from the last two lemmas.

\begin{theorem}\label{lower_effect}
    An NBW for the cause of a safety or reachability effect, given as an NBW $E$ requires at least $2^{2^{\Omega(|E|)}}$ states in the worst case.
\end{theorem}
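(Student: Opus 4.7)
The plan is to derive Theorem~\ref{lower_effect} by combining the two lemmas immediately preceding it. Concretely, for each $n \in \NN$ the second lemma yields a system $\TT$ and trace $\pi_\emptyset$ of constant size together with a safety (or reachability) effect $E$ with $|E| = \OO(n)$, such that from any NBW $C$ for the cause of $E$ on $\pi_\emptyset$ in $\TT$ one can construct an NFW for the language $L_n$ with $\OO(|C|)$ states. By the first lemma, however, any NFW recognizing $L_n$ requires at least $2^{2^{\Omega(n)}}$ states. Chaining these two inequalities forces $|C| \geq 2^{2^{\Omega(n)}}$.

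The final step is to re-express this bound in terms of $|E|$ rather than $n$. Since $|E| = \OO(n)$, there is a constant $c > 0$ with $n \geq c \cdot |E|$ for all sufficiently large $n$, so
\begin{equation*}
|C| \;\geq\; 2^{2^{\Omega(n)}} \;=\; 2^{2^{\Omega(|E|)}},
\end{equation*}
where absorbing the constant into the $\Omega$ on the outer exponent is justified by the fact that $n$ and $|E|$ differ only by a multiplicative constant. This yields the claimed worst-case lower bound on any NBW for the cause, uniformly for both safety and reachability effects (both cases are covered by the second lemma).

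Since the two supporting lemmas do the heavy lifting, I do not anticipate a genuine obstacle at this stage of the argument: the only subtlety is being careful about how the $\Omega$ in the exponent propagates through the reduction, ensuring that the $\OO(|C|)$ blow-up from cause automaton to NFW for $L_n$ does not absorb the doubly exponential gap. Because that blow-up is linear in $|C|$, while the lower bound on the NFW size is genuinely doubly exponential in $n$, the lower bound on $|C|$ remains doubly exponential after the reduction, and the conclusion follows.
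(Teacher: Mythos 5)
Your proposal is correct and follows exactly the paper's own route: the paper states that Theorem~\ref{lower_effect} ``immediately follows from the last two lemmas,'' i.e., chaining the $2^{2^{\Omega(n)}}$ NFW lower bound for $L_n$ with the linear-size reduction from a cause NBW to an NFW for $L_n$, and your handling of the $n$-versus-$|E|$ translation is the right (and only) bookkeeping needed.
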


    \subsection{Upper bounds}
This subsection establishes upper bounds that match the lower bounds presented in the previous subsection. With respect to the system size $|\TT|$ all the bounds are tight.

%The algorithms presented in this section output the set from Lemma \ref{lem:cause} matching the cause when it exists.

The first upper bounds presented in Theorem \ref{upper_NCW} is $\OO(|\pi| \cdot 3^{(|\TT| \cdot |E| )})$ for the recurrence effect. Note that this assumes that $E$ is provided as a DBW. Converting a recurrence NBW to a DBW requires a blow-up of $2^{\Omega(n \log n)}$ \cite{Colcombet2009ATL}. Hence, the combined upper bound for the cause synthesis for the recurrence effect presented as NBW becomes $2^{2^{\OO(|E| \log|E|)}}$.

In contrast, the lower bound established in Theorem \ref{lower_effect} is $2^{2^{\Omega(|E|)}}$, leaving a question about the tight bound unresolved. The same applies to the persistence class, as the upper bound on the effect derived from \cite{FinkbeinerFMS24} is also $2^{2^{\OO(|E| \log|E|)}}$.

\begin{theorem} \label{upper_NCW}
For a system $\TT$, a similarity relation $\leq^{\mathit{subset}}$, a trace $\pi$ and an effect $E$ given as a DBW, there exists a DBW for the cause of the size $\OO(|\pi| \cdot 3^{(|\TT| \cdot |E| )})$.
\end{theorem}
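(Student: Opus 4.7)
The plan is to build on Lemma~\ref{lem:cause}, which characterizes the cause as $C = \{\rho \mid \forall \sigma \in \traces(\TT),\ \sigma \leq^{subset}_\pi \rho \rightarrow \sigma \in E\}$. I first realize this set by a universal Büchi automaton $U$ over $2^I$, and then convert $U$ to a DBW via a Miyano--Hayashi-style breakpoint construction. Because there is no existential branching to resolve, the resulting automaton is deterministic, and the blow-up comes purely from tracking a subset with a breakpoint.

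Assuming $\pi = u v^\omega$ is lasso-shaped with $|uv| \leq |\pi|$, I take the states of $U$ to be triples $(s, q, k)$, where $s \in S$ is a system state, $q$ is a state of the DBW for $E$, and $k$ indexes the position within the lasso. On reading an input letter $a \in 2^I$ at position $k$, the universal branches enumerate all pairs $(b, s')$ such that $b \in 2^I$ is compatible with $\sigma \leq^{subset}_\pi \rho$ at this step, i.e.\ $b|_j = (\pi_k)|_j$ whenever $a|_j = (\pi_k)|_j$, and $s' \in \delta(s, b)$; the $E$-component advances deterministically on the joint letter $b \cup l(s')$, while $k$ moves one step along the lasso. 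The accepting condition lifts the Büchi condition of $E$. A straightforward induction on positions then shows $U$ accepts $\rho$ iff every run in $U$ corresponds to a system trace $\sigma \leq^{subset}_\pi \rho$ whose $E$-run is accepting, i.e.\ iff $\rho \in C$.

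To obtain a DBW, I apply the breakpoint construction to $U$: deterministic states are triples $(O, F, k)$ with $F \subseteq O \subseteq S \times Q_E$, where $O$ tracks the current set of active universal branches and $F \subseteq O$ the branches that have not yet visited an $E$-accepting state since the last breakpoint; accepting states are those with $F = \emptyset$, and on a breakpoint $F$ is reset to the non-accepting elements of the updated $O$. The lasso counter $k$ is driven entirely by the input $\rho$ and therefore advances in lockstep across every branch, so it can be factored out of the subset rather than folded into it. Since each element of $S \times Q_E$ lies in exactly one of $\overline{O}$, $O \setminus F$, or $F$, this bounds the state count by $|\pi| \cdot 3^{|\TT| \cdot |E|}$.

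The main obstacle I anticipate is the correctness argument for fusing two distinct sources of universal branching --- the nondeterminism of $\TT$ and the freedom in choosing $b$ permitted by $\leq^{subset}$ --- into a single universal transition of $U$, and checking that the subsequent breakpoint DBW accepts exactly $C$ rather than some over- or under-approximation. Factoring $k$ out of the subset is sound precisely because $k$ does not depend on universal choices, but this independence has to be argued carefully together with the lockstep assumption on the lasso advance.
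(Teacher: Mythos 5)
Your proposal is correct and follows essentially the same route as the paper: both build a universal Büchi automaton of size $|\TT|\cdot|E|$ whose universal branching jointly covers the nondeterminism of $\TT$ and the choice of more-similar inputs permitted by $\leq^{\mathit{subset}}$ (your condition on $b$ is exactly the paper's $I'_1\cap I'_2\subseteq I'\subseteq I'_1\cup I'_2$), then apply the Miyano--Hayashi breakpoint construction for the $3^{|\TT|\cdot|E|}$ factor and attach the lasso for $\pi$ multiplicatively. The only cosmetic difference is that the paper keeps $\pi$ as a second input component of the UBW and takes the product with the lasso after determinization, whereas you fold the lasso counter into the state space up front and argue it can be factored out of the subset --- the two orderings yield the same bound.
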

    
\begin{proof} [Sketch of Proof]
    We construct a universal Büchi automaton of the size $|\TT| \cdot |E|$ for the cause. Using the fact that a universal Büchi automaton of the size $n$ can be translated to a non-deterministic one of the size $3^n$ \cite{Miyano1984AlternatingFA}, we derive the stated upper bound.
    For the construction details of the UBW and the handling of $\pi$ please see the full version of the proof.
\end{proof}

Next, we present upper bounds for safety (Theorem \ref{upper_safety}) and guarantee (Theorem \ref{upper_guarantee}) properties. Both results assume the effect is given as a DFW for the good or bad prefixes respectively. Since translating safety (or guarantee) NBW to the DFW for good (or bad) prefixes requires an exponential blowup \cite{Kupferman1999ModelCO}, these upper bounds match the lower bound established in Theorem \ref{lower_effect}, with respect to the size of the effect given as an NBW.

\begin{theorem} \label{upper_safety}
    For a system $\TT$, a similarity relation $\leq^{\mathit{subset}}$, a trace $\pi$ and a safety effect $E$ given as a DFW $E_{\mathit{bad\_pref}}$ for the bad prefixes of $E$, there exists a DFW for the bad prefixes of the cause of the size $\OO(|\pi| \cdot 2^{(|\TT| \cdot |E_{\mathit{bad\_pref}}| )})$.
\end{theorem}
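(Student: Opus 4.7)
The proof constructs the bad-prefix DFW directly using the universal characterization of $C$ from Lemma~\ref{lem:cause}, exploiting the fact (Corollary~\ref{causality_preservation}) that the cause of a safety effect is itself a safety property and therefore admits such a DFW. The high-level idea is to reduce the universal quantification over infinite extensions inherent in the bad-prefix condition to a single canonical extension, and then to track states of $\TT$ and $E_{\mathit{bad\_pref}}$ via a subset construction.

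First I would establish a canonical-extension lemma: a finite word $u \in (2^I)^n$ is a bad prefix of $C$ iff $\rho^u := u \cdot (\pi|_I)[n..]$ is not in $C$. The backward direction is immediate; the forward direction uses that, under $\leq^{\mathit{subset}}$, the canonical extension $\rho^u$ is at least as similar to $\pi$ as every other extension $\rho'$ of $u$, because $\rho^u$ deviates from $\pi$ only inside $u$. Hence by transitivity any trace $\sigma$ witnessing $\rho^u \notin C$ through $\sigma \leq^{\mathit{subset}}_\pi \rho^u$ and $\sigma \notin E$ also witnesses $\rho' \notin C$.

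Combining this with Lemma~\ref{lem:cause} and the safety of $E$, $u$ is a bad prefix of $C$ iff there exists $\sigma \in \mathit{traces}(\TT)$ with $\sigma \leq^{\mathit{subset}}_\pi \rho^u$ having a prefix in $L(E_{\mathit{bad\_pref}})$. I would realize this as a nondeterministic product over $\TT \times E_{\mathit{bad\_pref}}$ that, while reading $u$, tracks all pairs $(s, q)$ reachable by input-flippings consistent with the prefix read so far; the permitted flips at step $k$ are determined by $\pi_k|_I$, which is looked up using a position index ranging over the lasso representation of $\pi$ of size $|\pi|$. Determinization via the subset construction yields a DFW over state space $\{0, \ldots, |\pi|-1\} \times 2^{|\TT| \cdot |E_{\mathit{bad\_pref}}|}$, matching the claimed size bound.

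The subtle point is the acceptance condition: a state $(n, S)$ must accept not only when $S$ already contains some $(s, q)$ with $q$ accepting in $E_{\mathit{bad\_pref}}$, but also when continuing the simulation from $(n, S)$ on $\pi$'s own inputs eventually reaches such a subset. Since $\pi$ is ultimately periodic and the subset-update dynamics over $\TT \times E_{\mathit{bad\_pref}}$ are finite, this forward-reachability predicate is a fixed Boolean property of $(n, S)$ and can be precomputed without increasing the state count. I expect the main obstacle to be a careful correctness proof of the canonical-extension reduction in combination with this forward check; the product and subset constructions themselves are routine.
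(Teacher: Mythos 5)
Your construction is essentially the paper's own proof: a nondeterministic product of $\TT$ and $E_{\mathit{bad\_pref}}$ that tracks the intermediate traces lying between the read prefix and $\pi$ under $\leq^{\mathit{subset}}$, determinized by the subset construction and combined with the lasso of $\pi$ (giving the $|\pi|\cdot 2^{|\TT|\cdot|E_{\mathit{bad\_pref}}|}$ bound), followed by a post-hoc enlargement of the accepting set that does not change the state count. The only, immaterial, difference is in that last step: the paper marks accepting every state from which \emph{all} infinite continuations eventually reach an accepting state, whereas your canonical-extension lemma reduces this to following the single continuation along $\pi$'s own inputs --- the two conditions coincide here, and your formulation is arguably the more concrete of the two.
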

    
\begin{proof} [Sketch of Proof]
    We construct a NFW which recognizes pairs of finite traces between which a bad prefix exists. The nondeterministic choice represents the selection of such traces. Subsequently,  we convert it to a DFW and combine it with a trace $\pi$. For details, please refer to the full proof.
\end{proof}

\begin{theorem}\label{upper_guarantee}
    For a system $\TT$, a similarity relation $\leq^{subset}$, a trace $\pi$ and a guarantee effect $E$ given as a DFW $E_{\mathit{good\_pref}}$ for the good prefixes of $E$, there exists a DFW for the good prefixes of the cause of the size $\OO(|\pi| \cdot 2^{(|\TT| \cdot |E_{\mathit{good\_pref}}| )})$.
\end{theorem}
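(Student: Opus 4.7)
The plan is to mirror the construction from Theorem~\ref{upper_safety} while handling the universal quantification inherent in the cause by precomputing a set of ``losing'' states in the product of the system and $E_{\mathit{good\_pref}}$. Since Corollary~\ref{causality_preservation} guarantees that the cause $C$ is itself a guarantee property, it admits a DFW for its good prefixes, which is the automaton we will construct.

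First, I would characterize good prefixes of $C$ via Lemma~\ref{lem:cause}: a finite word $\rho$ is a good prefix of $C$ if and only if every system trace $\pi'' \in \mathit{traces}(\TT)$ whose first $|\rho|$ positions pointwise subset-match $\rho$ relative to $\pi$ satisfies $\pi'' \in E$, which for a guarantee effect reduces to $\pi''$ eventually visiting an accepting state of $E_{\mathit{good\_pref}}$. Next, in the product $\TT \times E_{\mathit{good\_pref}}$ I would precompute the set $L$ of states $(q, s)$ from which some infinite system run avoids ever visiting an accepting $E_{\mathit{good\_pref}}$-state; these are precisely the states from which a non-trivial cycle is reachable in the subgraph induced by non-accepting $E_{\mathit{good\_pref}}$-states, and hence $L$ is computable in polynomial time.

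Then I would build an NFW of size $\OO(|\pi| \cdot |\TT| \cdot |E_{\mathit{good\_pref}}|)$ that nondeterministically simulates a potentially witnessing $\pi''$: a state encodes the current position along $\pi$ together with a product state; on reading $\rho_k$, the NFW guesses a subset-compatible input for $\pi''$ relative to $\pi_k$, updates the product component, and accepts upon entering a state in $L$. This NFW recognizes the \emph{non}-good prefixes of $C$. Determinizing via the subset construction and then complementing the acceptance condition (cheap for deterministic automata) produces a DFW for the good prefixes of $C$ of size $\OO(|\pi| \cdot 2^{|\TT| \cdot |E_{\mathit{good\_pref}}|})$, matching the claimed bound.

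The main obstacle I expect is establishing the correctness of the losing-set $L$ and its interaction with the subset-compatible transition relation, i.e., showing that $\rho$ is a non-good prefix of $C$ exactly when some run of the NFW on $\rho$ enters $L$. The argument essentially lifts the standard emptiness check for NBWs to the product, but one must carefully interleave it with the per-position input constraints induced by $\rho$ and $\pi$ so that the non-accepting infinite runs captured by $L$ correspond to genuine counterexample traces $\pi''$ that are consistent with $\rho$ in the required subset sense.
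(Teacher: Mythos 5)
Your proposal is correct and shares the paper's overall strategy --- a product of $\TT$ with $E_{\mathit{good\_pref}}$ whose transitions range over the subset-compatible inputs, followed by a subset-construction determinization in which the deterministic $\pi$-counter stays outside the exponent --- but it sets up the acceptance condition differently, and the difference is substantive. The paper builds a \emph{universal} finite-word automaton for the good prefixes whose accepting states are $S\times F$, i.e., it declares $\rho$ good exactly when every compatible partial run has already reached an accepting state of $E_{\mathit{good\_pref}}$ within $|\rho|$ steps; determinizing that UFW is the same operation as your complement--determinize--complement of the dual NFW. You instead accept (as a non-good prefix) when some $F$-avoiding compatible run ends in the precomputed losing set $L$ of product states admitting an infinite $F$-avoiding continuation. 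This buys you something real: a prefix $\rho$ can be a good prefix of $C$ even though no accepting state has been reached yet, namely when the system's future behaviour forces every compatible continuation into $F$ (e.g., an output the system raises unconditionally after a few steps regardless of input); your $L$-based test classifies such $\rho$ correctly, whereas the paper's UFW recognizes only the ``informative'' good prefixes and hence, in general, a proper subset of all good prefixes. Two small points to tighten in your write-up: state explicitly that the subset construction yields $|\pi|\cdot 2^{|\TT|\cdot|E_{\mathit{good\_pref}}|}$ rather than $2^{|\pi|\cdot|\TT|\cdot|E_{\mathit{good\_pref}}|}$ states because the $\pi$-position is updated deterministically, so every reachable subset is homogeneous in that coordinate; and replace ``accepts upon entering a state in $L$'' by ``the accepting states are $L$, and runs that touch $F$ are discarded'' (automatic if, as is standard for good-prefix automata, $F$ is absorbing), since merely entering $L$ before the end of $\rho$ does not guarantee an $F$-avoiding continuation consistent with the remaining letters of $\rho$.
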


\begin{proof} [Sketch of Proof]
    By the definition a finite word $w$ is a good prefix of the cause $C$ if every continuation of it is in $C$. And for that we need that all finite words between $w$ and $\pi$ have a good prefix of $E$. We construct a UFW of size $|\TT| \cdot |E_{good \_ pref}|$ that recognizes the good prefixes of the cause and the convert it into a DFW. For the details on this construction please refer to the full version of the proof.
\end{proof}

\section{Related Work}\label{sec:rel_work}

\paragraph*{Methodology} The topological concepts used in Section~\ref{closure_section} have long been established in mathematics. The Hausdorff distance was introduced by Felix Hausdorff in~\cite{Hausdorff1914GrundzgeDM}. 
The Vietoris topology~\cite{Vietoris} is another topology on the space of subsets. It coincides with the Hausdorff topology on compact subsets but differs when generalized to arbitrary closed subsets~\cite{Michael1951TopologiesOS}. 
These concepts are employed in the Powerdomain theory, studied by Plotkin~\cite{Plotkin1976APC, Plotkin1982APF} and Smyth~\cite{Smyth1978PowerD}. The relation between Powerdomains and the Vietoris topology is discussed in~\cite{Smyth1983PowerDA}. The Vietoris topology can be split into the upper and lower Vietoris topologies. The lower Vietoris topology was applied by Clarkson and Schneider in ~\cite{ClarksonS10} to characterize different classes of hyperproperties.
The established connection between these topological concepts, universal preimages, and causality enable us to use the existing mathematical theory to study the framework.

\paragraph*{Causality} The complexity of checking and computing actual causes in finite, so-called \emph{structural equation models}~\cite{HalpernP01} has been studied extensively~\cite{EiterL02,EiterL06,Halpern15}. In that framework, causes are essentially finite sets of explicit events (comparable to $\boldsymbol{\Delta_0}$ in Figure~\ref{fig:hierarchy_results}) and not arbitrary properties as considered in this paper. Moreover, the structural equation approach does not model time explicitly and can only be used to model system executions up to a fixed bound~\cite{CoenenDFFHHMS22}. An extension to a state-based attribution method for transitions systems has been studied recently~\cite{MascleBFJK21}. For approaches that combine counterfactual reasoning with temporal properties, there are a number of related complexity results not pertaining to temporal causality as considered in this paper. \emph{Event Order Logic}~\cite{Leitner-FischerL13} is an approach that expresses what order of events is causal for some property violation. Upper bounds for the time needed to compute causes in this logic are known to be exponential in the system size~\cite{Leitner-Fischer15}. Parreuax et al.~\cite{ParreauxPB23} define causes for reachability and safety effects in transition systems and two-player games and show that causes can be checked in polynomial time, but they do not consider the problem of cause synthesis or more complex effects. To the best of our knowledge, our paper is the first to establish explicit results on the closure under causal inference. We believe this is because temporal causality is the first formalism in this domain that uses the same language for both cause and effect, in the spirit of earlier work on counterfactual modal logic~\cite{Lewis73a,Stalnaker81}.

\section{Summary \& Conclusion}\label{sec:summary}

We have conducted a detailed investigation of closure under causal inference and complexity of cause synthesis for properties belonging to all classes of the hierarchy of temporal properties~\cite{MannaP89}. Our discoveries can be summarized as follows:
\begin{enumerate}
    \item\label{res:closure} Reachability, guarantee, and recurrence properties are closed under causal inference: An effect from these classes always has a cause from the same class. This complements previous results on $\omega$-regular properties~\cite{FinkbeinerFMS24} and the intersection of safety and guarantee properties~\cite{BeutnerFFS23}.
    \item\label{res:nclosure} Obligation and recurrence properties are not closed in the same way, which completes the picture regarding the hierarchy of $\omega$-regular properties.
    \item\label{res:upper} Based on \ref{res:closure}), we provide improved upper bounds for the size of causes synthesized from reachability and guarantee properties.
    \item\label{res:lower} We show lower bounds on the size of causes for all classes from the hierarchy, which confirm that the known algorithms are optimal in the number of exponents and with respect to the system size. For some classes, a gap in the logarithmic factors with respect to the effect remains.
\end{enumerate}
Contribution~\ref{res:upper}) is of high practical relevance for explaining model-checking results of safety and guarantee properties, which are common in verification tasks. Contributions~\ref{res:closure} and~\ref{res:nclosure} were proven via results for a more abstract mathematical operation that promises to generalize beyond cause synthesis to other problems that involve trace-quantifier alternations. This includes synthesis with incomplete information~\cite{Kupferman2000,KupfermanV01} and automata-based algorithms for hyperproperties~\cite{FinkbeinerRS15,BeutnerCFHK22}. We plan on investigating these connections in future work.

\section*{Acknowledgements}

This work was partially supported by the DFG in project 389792660 (TRR 248 -- CPEC) and by the ERC Grant HYPER (No. 101055412). Funded by the European Union. Views and opinions expressed are however those of the authors only and do not necessarily reflect those of the European Union or the European Research Council Executive Agency. Neither the European Union nor the granting authority can be held responsible for them.

\bibliographystyle{IEEEtranS}
\bibliography{bibliography}

\appendix

\subsection{Detailed Proofs}

\setcounter{theorem}{6} 
\setcounter{lemma}{5} 

\begin{lemma}
    An NFW for $L_n$ requires at least $2^{2^{\Omega(n)}}$ states.
\end{lemma}

\begin{proof}
    Assume $A$ is an NFW that recognizes language $L_n$. Let $Y$ be a set of sets of binary words of length $n$ of size $2^{n-1}$: 
    $$Y = \{ S \subseteq \{ 0,1\}^n\mid   |S| = 2^{n-1}\} \enspace .$$
    For each $y\in Y$ after reading a finite word $\pi$ with $subword_n(\pi) = y$ the NFW $A$ must reach at least one such state from which it accepts a word $\overline{\pi}$ with $subword_n(\overline{\pi}) = \{0,1\}^n \setminus y$. Such states must be different for different $y$, hence $|A| \geq |Y| = \binom{2^n}{2^{n-1}} = 2^{2^{\Omega(n)}}$. 
\end{proof}

\begin{theorem}
For a system $\TT$, a similarity relation $\leq^{\mathit{subset}}$, a trace $\pi$ and an effect $E$ given as a DBW, there exists a DBW for the cause of the size $\OO(|\pi| \cdot 3^{(|\TT| \cdot |E| )})$.
\end{theorem}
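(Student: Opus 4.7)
My plan is to first construct a universal Büchi word automaton (UBW) $\mathcal{U}$ whose language coincides with the cause as characterized in Lemma~\ref{lem:cause}, and then determinize via a Miyano--Hayashi style breakpoint construction. Since the cause is defined as $C = \{\rho \in (2^I)^\omega \mid \forall \sigma \in \traces(\TT)\colon \sigma \leq^{\mathit{subset}}_\pi \rho \Rightarrow \sigma \in E\}$, the universal quantifier over system traces $\sigma$ is naturally encoded as universal branching in $\mathcal{U}$. The alphabet of the cause automaton is $2^I$, and I will track $\pi$ separately from the universal guessing.

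The UBW I would build has state space $[1..|\pi|] \times S_{\TT} \times Q_E$, where $|\pi|$ stands for the length of the lasso representation of $\pi$ and thus contributes a deterministic $\pi$-position counter. On reading $\rho_k \in 2^I$ from a state $(j, s, q)$, the UBW universally branches over every $A \subseteq I$ that satisfies the subset-similarity restriction at position $j$ (i.e., $A$ must agree with $\pi_j|_I$ on every input $i$ where $\rho_k(i) = \pi_j(i)$), and over every successor $s' \in \delta(s, A)$ of $\TT$, updating $q$ to the unique $E$-successor $q' = \delta_E(q, A \cup l(s'))$. The Büchi acceptance condition of $\mathcal{U}$ is inherited from $E$: a run along a fixed universal choice is accepting iff it visits $F_E$ infinitely often. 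By construction, $\rho \in L(\mathcal{U})$ iff every system trace $\sigma \leq^{\mathit{subset}}_\pi \rho$ lies in $E$, which by Lemma~\ref{lem:cause} is exactly $C$. The size of $\mathcal{U}$ is $\OO(|\pi| \cdot |\TT| \cdot |E|)$.

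To obtain a DBW I would then apply the Miyano--Hayashi breakpoint construction to the universal component $S_{\TT} \times Q_E$, while keeping the deterministic $\pi$-position as a separate factor. The crucial observation here is that a purely universal automaton has a \emph{deterministic} subset-successor relation (the successor of a set of states is just the union of individual successors), so the resulting breakpoint automaton is deterministic rather than merely nondeterministic. Using the standard three-way partition of states (``in the pending obligation set,'' ``in the current active set but not pending,'' or ``not active''), the state space blows up by $3^{|\TT|\cdot|E|}$ rather than $4^{|\TT|\cdot|E|}$, which yields a DBW of the claimed size $\OO(|\pi| \cdot 3^{|\TT|\cdot|E|})$.

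The main obstacle I anticipate is bookkeeping for the subset similarity relation: the universal transitions must be restricted atom-wise by comparing $\rho_k$ to $\pi_j|_I$, and one must show that this restriction commutes cleanly with the Miyano--Hayashi construction (so that $\pi$ really enters the final bound only as a linear prefactor, not inside the exponent). A secondary care point is the treatment of nondeterminism on $\pi$ in $\TT$: if several system states match position $j$ of $\pi$ along the observed trace, one must ensure that the initial macro-state of the DBW contains exactly the $\TT$-states reachable on the input projection $\pi|_I$, so that the SAT condition of Definition~\ref{def:causality} is respected. Once these two points are handled, correctness follows from Lemma~\ref{lem:cause} together with the standard correctness of the breakpoint construction.
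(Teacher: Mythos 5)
Your proposal is correct and follows essentially the same route as the paper: a universal Büchi automaton over $S_{\TT}\times Q_E$ whose universal branching ranges over the traces $\sigma$ with $\sigma\leq^{\mathit{subset}}_\pi\rho$ (via the interval condition on the input letters), followed by the Miyano--Hayashi breakpoint construction giving the $3^{|\TT|\cdot|E|}$ factor and a deterministic product with $\pi$ contributing the linear prefactor. Your explicit remark that the breakpoint construction yields a \emph{deterministic} automaton because a purely universal automaton has deterministic subset successors is a welcome clarification that the paper's sketch glosses over.
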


\begin{proof}
    We construct a UBW $U$ of the size $|\TT| \cdot |E|$, such that $U$ accepts pairs of traces $\pi_1 \in (2^I)^\omega$ and $\pi_2\in (2^{I\cup O})^\omega$, where $\pi_1$ is in the cause of $E$ on $\pi_2$ in $\TT$. Afterward, $U$ can be translated to the DBW of the size $\OO(3^{|U|})$ by \cite{Miyano1984AlternatingFA}. Combining it with $\pi$ we get an automaton for the cause. We denote $\TT  = (S,s_0, AP, \delta, l)$ and $E = (Q,2^{AP}, q_0, F, \Delta)$. Let us define $U:=(S\times Q,\Sigma, s_0\times q_0, \Delta_U, S\times F)$ over the alphabet $\Sigma := 2^I \times 2^{AP}$. Here transition relation $\Delta_U$ is defined for $s\in S$, $q\in Q$, $I'_1,I'_2\subseteq I$ and $O'\subseteq O$ as follows:
    $$\Delta_U ((s,q), (I'_1, I'_2, O')) $$ $$ := \{ (s',q') \mid \exists I'\subseteq I: \ I'_1\cap I'_2 \subseteq I' \subseteq I'_1 \cup I'_2,$$ $$s'\in \delta(s,I') , \  \Delta(q,I',l(s'))=q' \} \enspace .$$
    It is easy to see that runs of $U$ on $\pi_1,\pi_2$ correspond to runs of $E$ on traces $\pi_3$ such that $\pi_3 \leq_{\pi_2} \pi_1$. Hence, $U$ accepts $\pi_1,\pi_2$ iff $E$ accepts all such $\pi_3$, which means that $\pi_2$ is in the cause of $E$ on $\pi_1$.
\end{proof}

\begin{theorem}
    For a system $\TT$, a similarity relation $\leq^{\mathit{subset}}$, a trace $\pi$ and a safety effect $E$ given as a DFW $E_{\mathit{bad\_pref}}$ for the bad prefixes of $E$, there exists a DFW for the bad prefixes of the cause of the size $\OO(|\pi| \cdot 2^{(|\TT| \cdot |E_{\mathit{bad\_pref}}| )})$.
\end{theorem}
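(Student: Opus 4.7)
The plan is to characterize the bad prefixes of the cause in a way amenable to an automata construction, build a nondeterministic finite word automaton (NFW) over the input alphabet $2^I$ that recognizes them, and then determinize to obtain the claimed DFW. By Corollary~\ref{causality_preservation} the cause $C$ is a safety property, so it is completely described by its bad prefixes; moreover since $\pi\in \traces(\TT)$ is a lasso in the finite system $\TT$, we may treat ``the current position in $\pi$'' as ranging over a set of size $|\pi|$.

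The key characterization I would prove is the following. A word $w\in (2^I)^*$ is a bad prefix of $C$ if and only if there exists a finite computation $\sigma$ of $\TT$ of length at least $|w|$ such that \textbf{(a)} for positions $i<|w|$, the input projection $\sigma_i|_I$ differs from $\pi_i|_I$ only at those positions where $w_i$ differs from $\pi_i|_I$, \textbf{(b)} for positions $|w|\le i<|\sigma|$ we have $\sigma_i|_I=\pi_i|_I$, and \textbf{(c)} $\sigma\in L(E_{\mathit{bad\_pref}})$. The \emph{if}-direction is straightforward: for any infinite extension $w\rho$ we can extend $\sigma$ to an infinite trace of $\TT$ which, by (a) and (b), satisfies $\sigma\le_\pi^{\mathit{subset}} w\rho$ and, by (c) together with safety of $E$, is not in $E$; hence $w\rho\notin C$ by Lemma~\ref{lem:cause}. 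The \emph{only if}-direction is the subtle step: applying the cause condition to the specific extension $\rho := \pi|_I$ restricted to positions $\geq|w|$ yields a witness $\sigma$ that must satisfy (b) automatically (since there are no input differences from $\pi$ after position $|w|$), and then the bad-prefix requirement (c) follows from $\sigma\notin E$ by the safety assumption.

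From this characterization I would build an NFW whose states are triples $(k,s,q)\in [0,|\pi|)\times S\times Q_{E_{\mathit{bad\_pref}}}$, where $k$ tracks the (deterministic) position in the lasso $\pi$, the component $s$ tracks the state of $\TT$, and $q$ tracks the state of $E_{\mathit{bad\_pref}}$. On input letter $a\in 2^I$ the automaton nondeterministically picks a letter $a'\in 2^I$ with $a'\oplus \pi_k|_I \subseteq a\oplus \pi_k|_I$ (the $\leq^{\mathit{subset}}$-constraint relative to the current position of $\pi$), picks a $\TT$-successor $s'\in\delta(s,a')$, advances $E_{\mathit{bad\_pref}}$ on the letter $a'\cup l(s')$, and increments $k$ along the lasso. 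The set of accepting states is the pre-computed set $\mathrm{Bad}\subseteq [0,|\pi|)\times S\times Q_{E_{\mathit{bad\_pref}}}$ of triples $(k,s,q)$ from which, by following $\pi$'s inputs deterministically and running $\TT$ nondeterministically, one can reach (or is already at) a state with $E_{\mathit{bad\_pref}}$-component in $F_{E_{\mathit{bad\_pref}}}$; this is a standard reachability computation on a graph of size $|\pi|\cdot|\TT|\cdot|E_{\mathit{bad\_pref}}|$. Correctness of this NFW with respect to the characterization above is then a direct unfolding of the construction.

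Finally, I would determinize this NFW by the subset construction. The crucial observation that yields the stated bound is that the position component $k$ of the state is fully deterministic (a deterministic counter along the lasso $\pi$), so it need not be subjected to subset construction. Only the components in $S\times Q_{E_{\mathit{bad\_pref}}}$ are subject to nondeterminism, yielding at most $2^{|\TT|\cdot |E_{\mathit{bad\_pref}}|}$ subsets; multiplying by the $|\pi|$ possible values of $k$ gives the claimed size $\OO\!\bigl(|\pi|\cdot 2^{|\TT|\cdot|E_{\mathit{bad\_pref}}|}\bigr)$. The main obstacle I anticipate is the precise proof of the \emph{only if}-direction of the characterization: one must show that a single witness $\sigma$ of length possibly exceeding $|w|$ suffices (absorbed into $\mathrm{Bad}$), and that the ``worst'' extension of $w$ one needs to consider is the one that agrees with $\pi$, using safety of $E$ to transfer bad prefixes between traces.
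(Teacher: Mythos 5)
Your proposal is correct and builds the same core object as the paper: a product NFW over $S\times Q_{E_{\mathit{bad\_pref}}}$ that guesses, letter by letter, a system computation whose input differences from $\pi$ are contained in those of the read word (your condition $a'\oplus \pi_k|_I \subseteq a\oplus \pi_k|_I$ is exactly the paper's $I'_1\cap I'_2 \subseteq I' \subseteq I'_1\cup I'_2$), followed by a subset construction that spares the deterministic $\pi$-counter, giving $|\pi|\cdot 2^{|\TT|\cdot|E_{\mathit{bad\_pref}}|}$. Where you genuinely diverge is in handling the completeness direction, i.e., bad prefixes $w$ of $C$ whose witnessing system trace only reaches a bad prefix of $E$ after position $|w|$. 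The paper first determinizes a recognizer that is only sound (it accepts $w$ iff a witness exists \emph{within} the first $|w|$ steps) and then repairs it by additionally marking as accepting every state of the DFW from which \emph{all} infinite continuations are forced to eventually visit an accepting state -- a universal-reachability closure on the determinized automaton. You instead prove the sharper characterization that it suffices to test the single extension of $w$ that continues with $\pi$'s inputs (since $\leq^{\mathit{subset}}$ then forces the witness to agree with $\pi$ beyond $|w|$, and safety of $E$ transfers the violation to a finite prefix), and you bake this into the accepting set of the NFW \emph{before} determinization via an existential reachability computation along $\pi$. Both routes are valid and yield the same bound; yours makes the correctness argument more self-contained (a clean iff-characterization of the bad prefixes of $C$), at the cost of having to argue the ``only if'' direction carefully, while the paper's post-hoc fix avoids that argument but leaves the equivalence between ``every continuation hits an accepting state'' and ``bad prefix of $C$'' largely implicit. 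The only points you should make explicit are the standing assumptions that $\delta$ admits a successor for every input (so your finite witness extends to an infinite trace following $\pi$) and that $L(E_{\mathit{bad\_pref}})$ is extension-closed (so a violation occurring before position $|w|$ is still detected at position $|w|$); neither affects the stated bound.
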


\begin{proof}
    The proof is similar to the proof of Theorem \ref{upper_NCW}, but now we work with finite word languages.

    First, we construct the NFW $A$ over the alphabet $(2^I)^* \times (2^{I\cup O})^*$, which recognizes such pairs of words $w_1\in (2^I)^*$ and $w_2 \in (2^{I\cup O})^*$, that there exists a trace $w\leq^{subset}_{w_2} w_1$ and $w\in E_{\mathit{bad\_pref}}$. 
    
    Denote the system $\TT  = (S,s_0, AP, \delta, l)$ and the effect $E_{\mathit{bad\_pref}} = (Q,2^{AP}, q_0, F, \Delta)$ and denote the cause as $C$.$$A := (S\times Q, (2^I)^* \times (2^{I\cup O})^*, s_0\times q_0, \Delta_U, S\times F)$$ 
    The transition function $\Delta_U$ is defined as follows:
    $$\Delta_U ((s,q), (I'_1, I'_2, O')) $$ $$ := \{ (s',q') \mid \exists I'\subseteq I: \ I'_1\cap I'_2 \subseteq I' \subseteq I'_1 \cup I'_2,$$ $$s'\in \delta(s,I') , \  \Delta(q,I',l(s'))=q' \} \enspace .$$
    We build the DFW of the size $2^{|A|} = 2^{|\TT| \cdot |E_{bad\_ pref}|}$ which recognizes the same language as $A$ and combine it with $\pi$ getting the automaton $C'_{\mathit{bad\_pref}}$. Obviously, $C'_{\mathit{bad\_pref}}$ accepts only bad prefixes of $C$. But unfortunately, there may be some bad prefixes of $C$ which $C'_{\mathit{bad\_pref}}$ does not accept.  
    
    For every $\pi'' \notin C$ there exists a trace $\pi'\leq^{subset}_\pi \pi''$, such that $\pi'\notin E$. Hence, $\pi'$ has a prefix from $E_{\mathit{bad\_pref}}$. Thus, $\pi''$ has a prefix from $C'_{\mathit{bad\_pref}}$. 
    
    The last step is to denote accepting all states of $C'_{\mathit{bad\_pref}}$ from which every infinite trace eventually visits an accepting state getting the automaton $C_{\mathit{bad\_pref}}$.
\end{proof}

\begin{theorem}
    For a system $\TT$, a similarity relation $\leq^{subset}$, a trace $\pi$ and a guarantee effect $E$ given as a DFW $E_{\mathit{good\_pref}}$ for the good prefixes of $E$, there exists a DFW for the good prefixes of the cause of the size $\OO(|\pi| \cdot 2^{(|\TT| \cdot |E_{\mathit{good\_pref}}| )})$.
\end{theorem}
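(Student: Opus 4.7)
The plan is to construct a universal finite word automaton (UFW) over the input alphabet $2^I$ of size $\mathcal{O}(|\pi| \cdot |\TT| \cdot |E_{\mathit{good\_pref}}|)$ that recognizes exactly the good prefixes of the cause $C$, and then determinize it, incurring the stated single-exponential blowup. Conceptually this is the dual of the construction in Theorem~\ref{upper_safety}: there we used nondeterminism to spot a bad prefix produced by \emph{some} similar trace, whereas here we must use universality to witness that \emph{every} similar trace has already produced or is bound to produce a good prefix of $E$.

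First I would reformulate good prefixes of $C$ using Lemma~\ref{lem:cause}. A word $w \in (2^I)^*$ is a good prefix of $C$ iff $w \cdot (2^I)^\omega \subseteq C$, which by Lemma~\ref{lem:cause} unfolds to: for every $\sigma \in \mathit{traces}(\TT)$ whose first $|w|$ input positions are $\leq^{\mathit{subset}}_\pi$-dominated by $w$, we have $\sigma \in E$. The key observation justifying this reformulation is that the similarity constraint on the tail beyond $|w|$ is vacuous, since any continuation $w''$ of $w$ can be chosen to match $\sigma|_I$ position-wise. Because $E$ is a guarantee property with DFW $E_{\mathit{good\_pref}}$, the condition ``$\sigma \in E$'' becomes ``$\sigma$ eventually reaches an $F$-state on the product $\TT \times E_{\mathit{good\_pref}}$''.

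Next I would pre-compute the set $W \subseteq S \times Q$ of joint product states from which every infinite run of $\TT$ eventually visits an $F$-state of $E_{\mathit{good\_pref}}$. This is a standard universal-reachability computation on the finite product, doable in polynomial time. Then I would define the UFW $U$ with state space $S \times Q \times [0, |\pi|)$, tracking the position inside the lasso representation of $\pi$, initial state $(s_0, q_0, 0)$, and transitions that on input $I' \subseteq I$ at position $i$ branch universally over all $I'' \subseteq I$ with $I'' \leq^{\mathit{subset}}_{\pi_i|_I} I'$ and all $s' \in \delta(s, I'')$, moving to $(s', \Delta(q, I'' \cup l(s')), i+1)$ (with the index wrapping around the periodic part of $\pi$). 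A state $(s, q, i)$ is accepting iff $q \in F$ or $(s, q) \in W$. Finally, a subset-style determinization yields a DFW; because the $|\pi|$-component is deterministic and need not be subset-constructed, the final size is $\mathcal{O}(|\pi| \cdot 2^{|\TT| \cdot |E_{\mathit{good\_pref}}|})$.

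I expect the main obstacle to be proving the correctness of the acceptance condition, specifically that the disjunction ``$q \in F$ or $(s, q) \in W$'' captures exactly the good prefixes of $C$. The forward direction (acceptance implies $w$ is good) is straightforward: every branch of $U$ on $w$ corresponds to a concrete bounded $\sigma$-prefix, and either its $E$-run is already accepting, or $W$ guarantees every continuation in $\TT$ will make it accepting. The more delicate direction requires showing that if $w$ is a good prefix of $C$, then every universal branch indeed satisfies one of the two disjuncts, which I would handle by contraposition: a failing branch would yield a concrete $\sigma \in \TT$ with neither an early good prefix nor a guaranteed future one, and by extending this branch arbitrarily in $\TT$ one obtains a witness trace $\sigma$ with no prefix in $E_{\mathit{good\_pref}}$, contradicting goodness of $w$ via our earlier reformulation. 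Theorem~\ref{hierarchy_preservation}.\ref{thm12} ensures that $C$ is a guarantee property, so the good-prefix representation is lossless.
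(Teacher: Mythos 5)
Your proposal is correct and follows essentially the same route as the paper's proof: a universal finite-word automaton over the product $S \times Q$ that branches over all $\leq^{\mathit{subset}}$-dominated input choices and all system successors, followed by a single-exponential subset construction, with the $\pi$-component kept deterministic so that it contributes only a linear factor.

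The one substantive point of divergence is the acceptance condition, and there your version is the more careful one. The paper's sketch declares only $S \times F$ accepting, i.e., it accepts $w$ only when every dominated bounded trace has already produced a good prefix of $E$ within $|w|$ steps. Your additional disjunct $(s,q) \in W$, with $W$ the set of product states from which every infinite run of $\TT$ is forced to reach $F$, is genuinely needed: a good prefix $w$ of the cause may admit a dominated trace whose $E_{\mathit{good\_pref}}$-run has not yet entered $F$ after $|w|$ steps but is guaranteed to do so by the structure of $\TT$ (already the empty prefix, in a system all of whose traces satisfy $E$, is such a case). The paper performs the analogous closure step explicitly in the safety construction of Theorem~\ref{upper_safety} (after determinization, all states from which every infinite continuation reaches an accepting state are declared accepting) but omits it in the guarantee case; your construction supplies it, at no cost to the bound since $W$ is computable in polynomial time on the product. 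Your reformulation of good prefixes of $C$ via Lemma~\ref{lem:cause} (discharging the tail of the similarity constraint by matching the continuation to $\sigma|_I$), the implicit use of extension-closedness of $L(E_{\mathit{good\_pref}})$ to equate ``$F$ visited'' with ``currently in $F$'', and the appeal to Theorem~\ref{hierarchy_preservation}.\ref{thm12} for losslessness of the good-prefix representation are all sound.
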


\begin{proof}
    By Corollary \ref{causality_preservation} the cause must also be a guarantee property. Moreover, by Theorem \ref{hierarchy_preservation} the set $C_{pairs}$ of pairs of traces $\pi_1 \in (2^I)$ and $\pi_2 \in (2^{I\cup O})^\omega$, such that $\pi_1$ is in the cause of $\pi_2$, is also a guarantee property, since it can be universally projected on the set $E$ in the similar way as in Proposition \ref{causality_is_up}. 

    First, we construct the UFW $U$ of the size $|\TT|\cdot |E_{\mathit{good\_pref}}|$ that recognizes the good prefixes of $C_{pairs}$.

    Denote $\TT  = (S,s_0, AP, \delta, l)$ and $E_{\mathit{good\_pref}} = (Q,2^{AP}, q_0, F, \Delta)$ and denote the cause as $C$.

    Let us define $U=(S\times Q,\Sigma, s_0\times q_0, \Delta_U, S\times F)$ over the alphabet $\Sigma := 2^I \times 2^{AP}$. Here transition relation $\Delta_U$ is defined for $s\in S$, $q\in Q$, $I'_1,I'_2\subseteq I$ and $O'\subseteq O$ as follows:

    $$\Delta_U ((s,q), (I'_1, I'_2, O')) $$ $$: = \{ (s',q') \mid \exists I'\subseteq I: \ I'_1\cap I'_2 \subseteq I' \subseteq I'_1 \cup I'_2,$$ $$s'\in \delta(s,I') , \  \Delta(q,I',l(s'))=q' \} \enspace .$$

    Turning this $U$ to DFW of the size $2^{|U|} = 2^{|\TT|\cdot |E_{\mathit{good\_pref}}|}$ and combining it with $\pi$ we get the DFW for the good prefixes of the cause. 
\end{proof}

\end{document}